\documentclass[a4paper,12pt]{article}
\usepackage[utf8]{inputenc}
\usepackage[english]{babel}
\usepackage{amsmath,amssymb,amsfonts,amsthm,textcomp,natbib}
\tolerance=9500
\hbadness=8000
\vbadness=9000
\usepackage{graphicx}
\graphicspath{{./images/}}
\usepackage{listings}
\usepackage{tabularx}
\usepackage{fullpage}
\usepackage{enumitem}
\newtheorem{theorem}{Theorem}
\newtheorem{definition}{Definition}

\newtheorem{lemma}{Lemma}
\newtheorem{corollary}{Corollary}

\makeatletter
\@addtoreset{proofpart}{theorem}
\makeatother
\usepackage{tikz}
\usetikzlibrary{positioning,arrows,shapes,shadows,snakes}
\usetikzlibrary{intersections,decorations.markings,calc,positioning}

% \date{}
\usepackage{soul}

\makeatletter
\newcommand\RedeclareMathOperator{%
  \@ifstar{\def\rmo@s{m}\rmo@redeclare}{\def\rmo@s{o}\rmo@redeclare}%
}
% this is taken from \renew@command
\newcommand\rmo@redeclare[2]{%
  \begingroup \escapechar\m@ne\xdef\@gtempa{{\string#1}}\endgroup
  \expandafter\@ifundefined\@gtempa
     {\@latex@error{\noexpand#1undefined}\@ehc}%
     \relax
  \expandafter\rmo@declmathop\rmo@s{#1}{#2}}
% This is just \@declmathop without \@ifdefinable
\newcommand\rmo@declmathop[3]{%
  \DeclareRobustCommand{#2}{\qopname\newmcodes@#1{#3}}%
}
\@onlypreamble\RedeclareMathOperator
\makeatother

\newcommand{\SE}[2][]{\mathbf{SE}_{#2}^{#1}}
\newcommand{\NW}[2][]{\mathbf{NW}^{#1}_{#2}}

\DeclareMathOperator{\cgeq}{\succcurlyeq}
\DeclareMathOperator{\cleq}{\preccurlyeq}
\DeclareMathOperator{\cgt}{\succ}
\DeclareMathOperator{\clt}{\prec}
\DeclareMathOperator{\R}{\mathbf{R}}
\RedeclareMathOperator{\S}{\mathbf{S}}
\DeclareMathOperator{\E}{\mathbf{E}}

\begin{document}
\title{Conjoint axiomatization of the Choquet integral for heterogeneous product sets}
\author{Mikhail Timonin}
\maketitle

\begin{abstract}
  We propose an axiomatization of the Choquet integral model for the general case of a heterogeneous product set
  $X = X_1 \times \ldots \times X_n$. In MCDA elements of $X$ are interpreted as alternatives, characterized by criteria taking values from
  the sets $X_i$. Previous axiomatizations of the Choquet integral have been given for particular cases $X = Y^n$ and
  $X = \mathbb{R}^n$. However, within multicriteria context such identicalness, hence commensurateness, of criteria cannot be assumed a
  priori. % , so comonotonic regions cannot be constructed anymore. Even the notion of comonotonicty itself becomes meaningless.
  This constitutes the major difference of this paper from the earlier axiomatizations. In particular, the notion of ``comonotonicity'' cannot be
  used in a heterogeneous structure, as there does not exist a ``built-in'' order between elements of sets $X_i$ and $X_j$. However, such
  an order is implied by the representation model. Our approach does not assume commensurateness of criteria. We construct the
  representation and study its uniqueness properties.
\end{abstract}

\section{Introduction}
\label{sec:introduction}

The Choquet integral is widely used in decision analysis and, in particular, MCDA \cite{grabisch2008decade}, although its use is still
somewhat restricted due to both methodological problems and difficulties in practical implementation. Rank-dependent models first appeared
in the axiomatic decision theory in reply to the criticism of Savage's postulates of rationality \cite{savage1954foundations}. The renowned
Ellsberg paradox \cite{ellsberg1961risk} has shown that people can violate Savage's axioms and still consider their behaviour
rational. First models accounting for the so-called uncertainty aversion observed in this paradox appeared in the 1980s, in the work
\cite{quiggin1982theory} and others (see \cite{wakker1991additive-RO} for a review). One particular generalization of the expected utility
model (EU) characterized by Schmeidler \cite{schmeidler1989subjective} is the Choquet expected utility (CEU), where probability is replaced
by a non-additive set function (called capacity) and integration is performed using the Choquet integral.

Since Schmeidler's paper, various versions of the same model have been characterized in the literature
(e.g. \cite{gilboa1987expected,wakker1991additive}). CEU has gained some momentum in both theoretical and applied economic literature, being
used mainly for analysis of problems involving Knightian uncertainty. At the same time, rank-dependent models, in particular the Choquet
integral, were adopted in multiattribute utility theory (MAUT) \cite{keeney1993decisions}. Here the integral gained popularity due to the
tractability of non-additive measures in this context (see \cite{grabisch2008decade} for a review). The model permitted various preferential
phenomena, such as criteria interaction, which were impossible to reflect in the traditional additive models.

The connection between MAUT and decision making under uncertainty has been known for a long time. In the case when the number of states is
finite, which is assumed hereafter, states can be associated with criteria. Accordingly, acts correspond to multicriteria
alternatives. Finally, the sets of outcomes at each state can be associated with the sets of criteria values. However, this last transition
is not quite trivial. It is commonly assumed that the set of outcomes is the same in each state of the world
\cite{savage1954foundations,schmeidler1989subjective}. In multicriteria decision making the opposite is true. Indeed, consider preferences
of consumers choosing cars. Each car is characterized by a number of features (criteria), such as colour, maximal speed, fuel consumption,
comfort, etc. Apparently, sets of values taken by each criterion can be completely different from those of the others. In such context the
ranking stage of rank-dependent models, which in decision under uncertainty involves comparing outcomes attained at various states, would
amount to comparing colours to the level of fuel consumption, and maximal speed to comfort. Indeed, the traditional additive model
\cite{debreu1959topological,krantz1971foundation} only implies meaningful comparability of units between goods in the bundle, but not of
their absolute levels. However, in rank-dependent models such comparability seems to be a necessary condition.

We propose a representation theorem for the Choquet integral model in the MCDA context. Binary relation $\succcurlyeq$ is defined on a
heterogeneous product set $X = X_1 \times \ldots \times X_n$. In multicriteria decision analysis (MCDA), elements of the set $X$ are
interpreted as alternatives, characterized by criteria taking values from sets $X_i$. Previous axiomatizations of the Choquet integral model
have been given for the special cases of $X = Y^n$ (see \cite{kobberling2003preference} for a review of approaches) and $X = \mathbb{R}^n$
(see \cite{grabisch2008decade} for a review). One related result is the recent axiomatization of the Sugeno integral model
(\cite{greco2004axiomatic,bouyssou2009conjoint}). Another approach using conditions on the utility functions was proposed in
\cite{labreuche2012axiom}. The ``conjoint'' axiomatization of the Choquet integral for the case of a general $X$ was an open problem in the
literature. The crucial difference with the previous axiomatizations is that the notion of ``comonotonicity'' cannot be used in the
heterogeneous case, due to the fact that there does not exist a meaningful ``built-in'' order between elements of sets $X_i$. New axioms and
modifications of proof techniques had to be introduced to account for that.

Our first axiom shows, roughly, how the set $X$ can be partitioned into subsets based on properties necessary for existence of an additive
representation. The axiom (\textbf{A3}) we introduce is similar to the ``2-graded'' condition previously used for characterizing of MIN/MAX
and the Sugeno integral (\cite{greco2004axiomatic,bouyssou2009conjoint}). At every point $z \in X$ for every pair of coordinates $i,j \in N$
it is possible to build two ``rectangular cones'' - one made up of points from $X_i$ which are ``greater'' than $z_i$ and points from $X_j$
which are ``less'' than $z_j$, and the second for the opposite case. The axiom states that triple cancellation for $\cgeq$ restricted to
$i,j$ must then hold on at least one of these cones. This allows to partition $X$ into subsets by using intersection of such cones for
various pairs $i,j$.

The second property is that the additive representations on different subsets are interrelated, in particular ``trade-offs'' between
criteria values are consistent across partition elements both within the same dimension and across different ones. This is reflected by two
axioms (\textbf{A4, A5}), similar to the ones used in \cite{wakker1991additive} and \cite{krantz1971foundation} (section 8.2). One, roughly
speaking, states that triple cancellation holds across subsets, while the other says that ordering of intervals on any dimension must be
preserved when they are projected onto another dimension by means of equivalence relations. These axioms are complemented by a new condition
called bi-independence (\textbf{A6}) and weak separability (\textbf{A2}) \cite{bouyssou2009conjoint} - which together reflect the
monotonicity property of the integral, and also the standard essentiality, ``comonotonic'' Archimedean axiom and restricted solvability
(\textbf{A7,A8,A9}). Finally, $\cgeq$ is supposed to be a weak order (\textbf{A1}), and $X$ is order dense.

%%% Local Variables:
%%% mode: latex
%%% TeX-master: "nvar_simple"
%%% End:

\section{Choquet integral in MCDA}
\label{sec:choquet-integral}

\begin{definition}
  Let $ N = \{1, \ldots, n \} $ be a finite set and $ 2^N $ its power set. Capacity (non-additive measure, fuzzy measure) is a set function
  $\nu:2^N\rightarrow \mathbb{R}_+$ such that:
  \begin{enumerate}
  \item $ \nu(\varnothing)=0 $;
  \item $ A \subseteq B \Rightarrow \nu(A)\leq\nu(B), \ \forall A,B \in 2^N $.
  \end{enumerate}
  In this paper, it is also assumed that capacities are normalized, i.e.\ $ \nu(N) = 1 $.
\end{definition}
\begin{definition}\label{def:choq-integr}
  The Choquet integral of a function $ f:N \rightarrow \mathbb{R} $ with respect to a capacity $ \nu $ is defined as
  \begin{equation*}
    C(\nu,f) = \int \limits_{0}^{\infty} \nu( \{ i \in N \colon f(i) \geq r \})dr + \int \limits_{-\infty}^{0} [\nu( \{ i \in N \colon f(i)
    \geq r \}) - 1]dr
  \end{equation*}
Denoting the range of $f:N \rightarrow \mathbb{R}$ as $\{f_1,\ldots,f_n\}$, the definition can be written down as:
   \begin{equation*}
    C(\nu,(f_1,\ldots,f_n)) =
    \sum \limits_{i=1}^{n} (f_{(i)}-f_{(i-1)})\nu(\{ j \in N \colon f_j \geq  f_{(i)} \} )
  \end{equation*}
  where $f_{(1)},\ldots,f_{(n)}$ is a permutation of $f_1,\ldots,f_n$ such that $f_{(1)} \leq f_{(2)} \leq \cdots \leq
  f_{(n)}$, and $f_{(0)}=0$. 
\end{definition}

On of the most useful tools for analysis of the capacity is the so-called M{\"o}bius transform. It's a linear transformation of the capacity
which is given by:
\begin{equation*}
  m(A)=\sum _{B\subset A}(-1)^{|A\setminus B|}\nu (B).
\end{equation*}  

The Choquet integral can be written in a very convenient form using the M{\"o}bius transform coefficients:
\begin{equation*}
  C(\nu,f) = \sum _{A \in N} m(A) \min_{i \in A} (f_i).
\end{equation*}

\subsection{The model}
\label{sec:model}

Let $\cgeq$ be a binary relation on the set $X = X_1 \times \ldots \times X_n$. $\cgt, \clt, \cleq, \sim, \not \sim$ are defined in the
usual way. In MCDA, elements of set $X$ are interpreted as alternatives characterized by criteria from the set $N = \{1,\ldots,n\}$. Set
$X_i$ contains criteria values for criterion $i$. We say that $\cgeq$ can be represented by a Choquet integral, if there exists a capacity
$\nu$ and functions $f_i: X_i \rightarrow \mathbb{R}$, called value functions, such that:
\begin{equation*}
  x \cgeq y \iff C(\nu,(f_1(x_1),\ldots,f_n(x_n)) \geq C(\nu,(f_1(y_1),\ldots,f_n(y_n)).
\end{equation*}

As seen in the definition of the Choquet integral, its calculation involves comparison of $f_i$'s to each other. It is not immediately
obvious how this operation can have any meaning in the MCDA decision framework. It is well-known that direct comparison of value functions
for various attributes is meaningless in the additive model \cite{krantz1971foundation} (recall that the origin of each value function can
be changed independently). In the homogeneous case $X = Y^n$ this problem is readily solved, as we have a single set of ``consequences'' $Y$
(in the context of decision making under uncertainty). The required order is either assumed as given \cite{wakker1991additive-RO} or is
readily derived from the ordering of ``constant'' acts $(y, \ldots, y)$ \cite{wakker1991additive}. Since there is a single ``consequence''
set, we also only have one value function $U:Y \rightarrow \mathbb{R}$, and thus comparing $U(y_i)$ to $U(y_j)$ is perfectly sensible, since
$U$ represents the order on the set $Y$. None of these methods can be readily applied in the heterogeneous case.

\subsection{Properties of the Choquet integral}
\label{sec:prop-choq-integr}

Below are given some important properties of the Choquet integral:
\begin{enumerate}
\item Functions $f:N \rightarrow \mathbb{R}$ and $g:N \rightarrow \mathbb{R}$ are comonotonic if for no $i,j \in N$ holds $f(i) > f(j)$ and
  $g(i) < g(j)$. For all comonotonic $f$ the Choquet integral reduces to a usual Lebesgue integral. In the finite case, the integral is
  accordingly reduced to a weighted sum.
\item Particular cases of the Choquet integral (e.g. \cite{grabisch2008decade}). 
  \begin{itemize}
  \item If $m(\{1\}) = \ldots = m(\{n\}) = 1$, then $C(\nu,(f_1,\ldots,f_n)) = \max(f_1,\ldots,f_n)$.
  \item If $m(N) = 1, m(A) = 0, A \neq N$, then $C(\nu,(f_1,\ldots,f_n)) = \min(f_1,\ldots,f_n)$.
  \item If $m(A) = 0 $, for all $A \subset N : |A| \geq 2 $, then $C(\nu,(f_1,\ldots,f_n)) = \sum _{i \in N} \nu(\{i\})f_i$
  \end{itemize}
\end{enumerate}

Property 1 states that the set $X$ can be partitioned into subsets corresponding to particular ordering of the value functions. There are
$n!$ such sets. Since the integral on each of the sets is reduced to a weighted sum, i.e. an additive representation, we should expect many
of the axioms of the additive conjoint model to be valid on this subsets. This is the intuition behind several of the axioms given in the
following section.

%%% Local Variables:
%%% mode: latex
%%% TeX-master: "nvar_simple"
%%% End:

\section{Axioms and definitions}
\label{sec:axioms-definitions}

\begin{definition}
  Given $i,j \in N$, a relation $\cgeq$ on $X_1 \times \ldots \times X_n$ satisfies \emph{$ij$-triple cancellation (\textbf{ij-3C})}, if for all $a_i,b_i, c_i,
  d_i \in X_i$, $p_j,q_j,r_j,s_j \in X_j$, and all $z_{-ij} \in X_{-ij}$ holds:
    \begin{equation*}
      \left.
        \begin{aligned}
          a_ip_jz_{-ij} & \cleq b_iq_jz_{-ij}\\
          a_ir_jz_{-ij} & \cgeq b_is_jz_{-ij}\\
          c_ip_jz_{-ij} & \cgeq d_iq_jz_{-ij} 
        \end{aligned}
      \right\} \Rightarrow c_ir_jz_{-ij} \cgeq d_is_jz_{-ij}.
    \end{equation*}
\end{definition}

% \begin{definition}
%   A relation $\cgeq$ on $X_1 \times \ldots \times X_n$ is (strongly) independent, if for all $i \in N$, and all $a_i,b_i \in X_i$,$x_{-i},y_{-i}
%   \in X_{-i}$ it holds: 
%   \begin{equation*}
%     a_ix_{-i} \cgeq a_iy_{-i} \iff b_ix_{-i} \cgeq b_iy_{-i}.
%   \end{equation*}
% \end{definition}

\begin{description}
    \item[A1 - Weak order.] $\cgeq$ is a weak order. 
    \item[A2 - Weak separability.]  For all $i$, if $a_ix_{-i} \cgt b_ix_{-i}$ for some $a_i,b_i \in X_i$,$x_{-i} \in X_{-i}$, then
      $a_iy_{-i} \cgeq b_iy_{-i}$ for all $y_{-i} \in X_{-i}$.
\end{description}
Note, that from this follows, that for any $a_i,b_i \in X_i$ either $a_ix_{-i} \cgeq b_ix_{-i}$ or $b_ix_{-i} \cgeq a_ix_{-i}$ for all $x_{-i} \in X_{-i}$. This allows to introduce the following definition:
\begin{definition}
  For all $a_i, b_i \in X_i$ define $\cgeq_i$ as $a_i \cgeq_i b_i \iff a_ix_{-i} \cgeq b_ix_{-i}$ for all $x_{-i} \in X_{-i}$. 
\end{definition}
%
% \begin{definition}
%   We call $a_i \in X_i $ ``minimal'' if $b_i \cgeq_i a_i$ for all $b_i \in X_i$, and ``maximal'' if $a_i \cgeq_i b_i$ for all $b_i \in
%   X_i$.
% \end{definition}
%
\begin{definition}
  For any $z \in X$ define $\SE[z]{ij} = \{x_ix_jz_{-ij} \in X \colon x_i \cgeq_i z_i, z_j \cgeq_j x_j \}$, and $\NW[z]{ij}
  = \{x_ix_jz_{-ij} \in X \colon z_i \cgeq_i x_i, x_j \cgeq_j z_j\}$.
\end{definition}
\begin{description}
    \item[A3 - Coordinate Ordering Completeness.] For any $z \in X$, and all $i,j \in N$, $ij$-triple cancellation holds either on $\SE[z]{ij}$ or on $\NW[z]{ij}$.
\end{description}
This new property would allow us to divide $X$ into subsets without the need to use the notion of comonotonicity. We can introduce the following
binary relations:
\begin{definition}
  We write:  
  \begin{enumerate}
  \item $i \R^z j$ if $ij$-triple cancellation holds on the set $\SE[z]{ij}$.
  \item $i \S^z j$ if [NOT $j \R^z i$].
  \item $i \E^z j$ if [$i \R^z j$ AND $j \R^z i$]. 
  \end{enumerate}
\end{definition}
Note that $\R^z$ is complete (which is why we have called axiom \textbf{A3} ``Coordinate Ordering Completeness'') and $\S^z$ is partial.\footnote{if it is empty for all $z$, other axioms entail the existence of an
  additive representation on $X$} Since $N$ is finite, there is only a finite number of various partial orders $\S^z$, so
we can index them ($\S_a, \S_b,\ldots$) and drop the superscripts when not needed. Also, each of the partial orders $\S_k$ uniquely defines
the corresponding $\R_k$ - $i \R_k j$ if [NOT $j \S_k i$].

In contrast to the case with two variables, this property alone is not sufficient to construct a representation. Comparing value functions
for different attributes suggests some sort of transitivity. For example, $f_i(x_i) > f_j(x_j)$ and $f_j(x_j) > f_k(x_k)$ imply $f_i(x_i) >
f_k(x_k)$. The property we introduce is weaker - it is acyclicity.
% \begin{definition}
%   For all $z \in X$ define $\cgeq_z$ on $N$ as follows: $i \cgeq_z j$ iff $ij$-independence holds on $\SE[z]{ij}$.
% \end{definition}
\begin{description}
\item[A3-ACYCL - Coordinate Ordering Acyclicity.] For all $z \in X$, $\S^z$ is acyclic. In other words, 
  \begin{equation*}
     i \S^z j \S^z \ldots \S^z k \Rightarrow  i \R^z k.
  \end{equation*}
\end{description}
This axiom effectively defines how the set $X$ is partitioned. It is required for the Choquet integral representation to
exist. 

%\hl{REMOVE MAXIMAL AND MINIMAL POINTS}

We also introduce the following notions:
% \begin{definition}
%   Define $\SE{ij}$ as follows:
%   \begin{enumerate}
%   \item $\SE{ij} = X_{ij}$ iff $(i,j) \not \in \S^z$ for any $z \in X$, OR
%   \item All $z \in X$ such that :
%     \begin{itemize}
%     \item $i \S^z j$, or
%     \item $i \R^z j$ and for any $x_i : x_i \cgeq_i z_i$ and any $ x_j : z_j \cgeq_j x_j$ we have $i \S^{x_iz_{-i}} j$ and $i \S^{x_jz_{-j}} j$.
%     \end{itemize}     
%   \end{enumerate}
% Define $\NW{ij}$ as:
% \begin{enumerate}
% \item $\NW{ij} = X_{ij}$ iff $(i,j) \not \in \S^z$ for all $z \in X$, OR
% \item All $z \in X$ such that 
%   \begin{itemize}
%   \item $j \S^z i$, or
%   \item $j \R^z i$ and for any $x_j : x_j \cgeq_j z_j$ and any $ x_i : z_i \cgeq_i x_i$ we have $j \S^{x_iz_{-i}} i$ and $j \S^{x_jz_{-j}} i$.
%   \end{itemize}
% \end{enumerate}
% \end{definition}

\begin{definition}
  Define \emph{$\SE{ij}$} as a union of the following three sets:
  \begin{itemize}
  \item All $z \in X$ such that $i \R^{z} j$, if $z_i$ is not maximal and $z_j$ is not minimal;
  \item All $z \in X$ such that $z_i$ is maximal and for no $x_j, y_j \in X_j : z_j \cgeq_j x_j \cgeq_j y_j$ we have $j \R^{x_jz_{-j}} i$ and NOT $j
    \R^{y_jz_{-j}} i$;
  \item All $z \in X$ such that $z_j$ is minimal and for no $x_i, y_i \in X_i : y_i \cgeq_i x_i \cgeq_i z_i$ we have $j \R^{x_iz_{-i}} i$ and NOT $j
    \R^{y_iz_{-i}} i$.
  \end{itemize}
  Define \emph{$\NW{ij}$} as a union of the following three sets:
  \begin{itemize}
  \item All $z \in X$ such that $j \R^{z} i$, if $z_j$ is not maximal and $z_i$ is not minimal;
  \item All $z \in X$ such that $z_i$ is minimal and for no $x_j, y_j \in X_j : y_j \cgeq_j x_j \cgeq_j z_j$ we have $i \R^{x_jz_{-j}} j$ and NOT $i
    \R^{y_jz_{-j}} j$;
  \item All $z \in X$ such that $z_j$ is maximal and for no $x_i, y_i \in X_i : z_i \cgeq_i x_i \cgeq_i y_i$ we have $i \R^{x_iz_{-i}} j$ and NOT $i
    \R^{y_iz_{-i}} j$.
  \end{itemize}
\end{definition}

Presence of maximal and minimal points significantly complicates the definitions of $\SE{ij}$ and $\NW{ij}$, since at such points some of
the sets $\SE[z]{ij}$ and $\NW[z]{ij}$ become degenerate and condition \textbf{3C-ij} trivially holds. If sets $X_i$ and $X_j$ do not
contain minimal or maximal points, we can drop the corresponding conditions in each definition and simply state that
$\SE{ij} = \{z : i \R^z j \}$ and $\NW{ij} = \{z : j \R^z i \}$.

Partial orders $\S_i$ define subsets of the set $X$ as follows.

\begin{definition}
  We write $X^{\S_i} = \bigcap \limits _{(k,j) : k \R_i j} \SE{kj}$
\end{definition}

% \hl{PROVE} Note that the condition $(j,k) \not \in S^i$ is equivalent to $[k \S^i j] OR [(k,j) \not \in S_z \text{ for all } z]$.

% \begin{definition}
%   For a partial order $\S^1$ on $S$, define $E^\sigma = \{ x \in X \colon \cgeq_x \text{ is such that } i \cgeq_x j \text{ if } \sigma(i) > \sigma(j) \}$
% \end{definition}
% \hl{note the IF and not IFF - accounts for points from THETAs}

It is well known that the sufficient property for an additive representation to exist on a Cartesian product is strong independence
\cite{krantz1971foundation}. In the $X = Y^n$ case, the Choquet integral was previously axiomatized using comonotonic strong independence
(or comonotonic trade-off consistency \cite{wakker1991additive}). In this paper we will be using sets $X^{\S_i}$ to formulate a similar
condition.

\begin{definition}
  We say that $i \in N$ is essential on $A \subset X$ if there exist $x_ix_{-i}, y_ix_{-i} \in A$, such that $x_ix_{-i} \cgt y_ix_{-i}$.
\end{definition}
%
%Essentiality of coordinates is discussed in details in Section \ref{sec:essentiality}.
%
\begin{description}
  \item[A4 - Intra-coordinate trade-off consistency] 
    \begin{equation*}
      \left.
      \begin{aligned}
        a_{i}x_{-i} & \cleq b_{i}y_{-i} \\ 
        a_iw_{-i}   & \cgeq b_iz_{-i}   \\
        c_ix_{-i}  & \cgeq d_iy_{-i}
      \end{aligned}
    \right\} \Rightarrow
    c_iw_{-i} \cgeq d_iz_{-i},
    \end{equation*}
provided that either:
    \begin{enumerate}[label={\alph*})]
    \item Exists $X^{\S_j}$ such that $ a_{i}x_{-i}, b_{i}y_{-i}, a_iw_{-i}, b_iz_{-i}, c_ix_{-i}, d_iy_{-i}, c_iw_{-i}, d_iz_{-i} \in X^{\S_j}$
    \item Exist $X^{\S_j},X^{\S_k}$ such that $ a_{i}x_{-i}, b_{i}y_{-i}, a_iw_{-i}, b_iz_{-i} \in X^{\S_j} $, $i$ is essential on $X^{\S_j}$, and
      $c_ix_{-i}, d_iy_{-i}, c_iw_{-i}, d_iz_{-i} \in X^{\S_k}$, or;
    \item Exist $X^{\S_j},X^{\S_k}$ such that $ a_{i}x_{-i}, b_{i}y_{-i}, c_ix_{-i}, d_iy_{-i} \in X^{\S_j}$, $i$ is essential on $X^{\S_j}$, and $
      a_iw_{-i}, b_iz_{-i}, c_iw_{-i}, d_iz_{-i} \in X^{\S_k}$.
    \end{enumerate} \label{sec:am3}
  \end{description}

  Informally, the meaning of the axiom is that ordering between preference differences (``intervals'') is preserved irrespective of the
  ``measuring rods'' used to measure them. However, contrary to the additive case this does not hold on all $X$, but only when either points
  involved in all four relations lie in the same ``3C-set'' $X^{S_j}$, or points involved in two relations lie in one such set and those
  involved in the other two in another.
\begin{description}
\item[A5 - Inter-coordinate trade-off consistency]
    \begin{equation*}
      \left.
      \begin{aligned}
        a_ix_{-i} & \cleq b_iy_{-i} \\ 
        c_ix_{-i} & \cgeq d_iy_{-i} \\
        a_iy^0_{-i} & \sim p_jx^0_{-j} \\
        b_iy^0_{-i} & \sim q_jx^0_{-j} \\
        c_iy^1_{-i} & \sim r_jx^1_{-j} \\
        d_iy^1_{-i} & \sim s_jx^1_{-j} \\
        p_je_{-j} & \cgeq q_jf_{-j}
      \end{aligned}
    \right\} \Rightarrow
    r_je_{-j} \cgeq s_jf_{-j} 
    \end{equation*}
    for all $a_ix_{-i}, b_iy_{-i}, c_ix_{-i}, d_iy_{-i} \in X^{\S_j}$ provided $i$ is essential on $X^{\S_j}$, $a_iy^0_{-i}, b_iy^0_{-i},
    c_iy^1_{-i}, d_iy^1_{-i} \in X^{\S_k}$, $p_jx^0_{-j}, q_jx^0_{-j}, r_jx^1_{-j}, s_jx^1_{-j} \in X^{\S_l}$
    provided $j$ is essential on $X^{\S_l}$, $ p_je_{-j}, q_jf_{-j}, r_je_{-j}, s_jf_{-j} \in X^{\S_m}$.
    \label{sec:am5}
  \end{description}
  The formal statement of the \textbf{A5} is rather complicated, but it simply means that the ordering of the ``intervals'' is preserved
  across dimensions. Together with \textbf{A4} the conditions are similar to Wakker's trade-off consistency condition
  \cite{wakker1991additive-RO} % (with one difference, see Section \ref{sec:essentiality})
  . The axiom bears even stronger similarity to Axiom 5 (compatibility) from section 8.2.6 of \cite{krantz1971foundation}. Roughly speaking,
  it says that if the ``interval'' between $c_i$ and $d_i$ is ``larger'' than that between $a_i$ and $b_i$, then ``projecting'' these
  intervals onto another dimension by means of the equivalence relations must leave this order unchanged. We additionally require the
  comparison of intervals and ``projection'' to be consistent - meaning that each quadruple of points in each part of the statement belongs
  to the same $X^{\S_i}$. Another version of this axiom, which is used frequently in proofs, can be formulated in terms of standard
  sequences (Lemma \ref{lm:A5}).
\begin{description}
\item[A6 - Bi-independence] Let $a_ix_{-i},b_ix_{-i},c_ix_{-i},d_ix_{-i} \in X^{\S_i}$ and $a_ix_{-i} \cgt b_ix_{-i}$. If for some
  $y_{-i} \in X_{-i}$ we have $c_iy_{-i} \cgt d_iy_{-i}$, then $c_ix_{-i} \cgt d_ix_{-i}$ for all $i \in N$.
  \end{description}
  This axiom is similar to ``strong monotonicity'' in \cite{wakker1991additive-RO}. We analyze its necessity and the intuition behind it in
  section \ref{sec:essentiality}.
\begin{description}
  \item[A7 - Essentiality] All coordinates are essential on $X$. 
  \item[A8 - Restricted solvability]  If $a_ix_{-i} \cgeq y \cgeq b_ix_{-i}$, then there exists $c: c_ix_{-i} \sim y$ for $i \in N$.
  \item[A9 - Archimedean axiom] Every bounded standard sequence contained in some $X^{\S_i}$ is finite, and in the case of only one
    essential coordinate, there exists a countable order-dense subset of $X^{\S_i}$. %\hl{check essentiality}
  \end{description}

Finally, we can introduce a notion of \emph{interacting} coordinates. 
\begin{definition}
  Coordinates $i$ and $j$ are \emph{interacting} if exists $z \in X$, such that $i \S^z j$ or $j \S^z i$. We call a set $A \subset N$ an
  \emph{interaction clique} if for each $i,j \in A$ we can build a chain of coordinates $i, k, \ldots, j$, such that every two subsequent
  coordinates in the chain are interacting. 
\end{definition}

Interaction cliques play an important role in the uniqueness properties of the representation. In what follows we will be considering only
cliques of maximal possible size if not specified otherwise.

\subsection{Additional assumptions}
\label{sec:addd-struct-assumpt}

The following additional assumptions are made. The reasoning behind each one is explained below. They are not required for the construction
of the representation in general.

\begin{description}
\item[``Collapsed'' equivalent points along dimensions.] For no $i \in N$ and no $a_i,b_i \in X_i$ holds $a_ix_{-i} \sim b_ix_{-i}$ for all $x_{-i} \in X_{-i}$. 
\end{description}
If this wasn't true, we could have value functions assigning the same value to several points in the same set $X_i$. To simplify things we
exclude such case, however, it can be easily reconstructed once the representation is built. 
\begin{description}
\item[Density.] We assume that for all $i \in N$, whenever $a_ix_{-i} \cgt b_ix_{-i}$, there exists $c_i \in X_i$ such that
  $a_ix_{-i} \cgt c_ix_{-i} \cgt b_ix_{-i}$ ($X$ is order dense).

\item[``Closedness''.] For every $i$ and $j$, if there exist $x_ix_jz_{-ij}$ such that $ i \S^{x_ix_jz_{-ij}} j$ and
$y_ix_jz_{-ij}$ such that $ j \S^{y_ix_jz_{-ij}} i$, then exists $z_i \in X_i$ such that $i \E^{z_ix_jz_{-ij}} j$.  
\end{description}

This assumption says that sets $\SE{ij}$ and $\NW{ij}$ are ``closed''. In the representation this translates into existence of the inverse
for all points where value functions $f_i$ and $f_j$ are equal, provided $i$ and $j$ are interacting. This is a technical simplifying
assumption and the proof can be done without it.

\begin{description}
\item[Geometry of $X$.] For every clique of interacting variables $A \subset N$, there exist at least two points $r^0_A, r^1_A \in X$ such,
  that for every pair $i,j \in A$, we have $i \E^{r^0_A} j$ and $i \E^{r^1_A} j$.
\end{description}

Again, this is a simplifying assumption, making the proof somewhat less general and closer to the homogeneous case. Without it we can have
a situation, where the smallest value of $f_i:X_i$ is larger then the greatest value of $f_j:X_j$ for some $i,j \in N$. This in turn does
not allow to construct the capacity in a unique way. Another way to stating this assumption, is to say that $X$ must contain points
corresponding to all possible
acyclic partial orders on $N$, generated by interacting pairs $i \S j$. Work to remove this assumption is still in progress.

%%% Local Variables:
%%% mode: latex
%%% TeX-master: "nvar_simple"
%%% End:

\section{Representation theorem}
\label{sec:repr-theor}

As follows from the definition of the Choquet integral (Section \ref{sec:choquet-integral}), every point $x \in X$ uniquely corresponds to a set
of weights $p^x_i : p^x_i \geq 0, \sum _{i \in N}p^x_i = 1$. This notation is used to simplify the statement of the following theorems. 

\begin{theorem}
  \label{theo:repr}
  Let $\cgeq$ be an order on $X$ and the structural assumption hold. Then, if axioms \textbf{A1}-\textbf{A9} are
  satisfied, there exists a capacity $\nu$ and value functions $f_1:X_1 \rightarrow \mathbb{R}, \ldots, f_n:X_n \rightarrow \mathbb{R}$,
  such that $\cgeq$ can be represented by the Choquet integral:
  \begin{equation}
    \label{eq:repr}
    x \cgeq y \iff C(\nu,(f_1(x_1),\ldots,f_n(x_n))) \geq C(\nu,(f_1(y_1),\ldots,f_n(y_n))),
  \end{equation}
  for all $x,y \in X$.  
\end{theorem}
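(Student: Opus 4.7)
The plan is to build the representation in three stages: first, get the partition of $X$ into the sets $X^{\S_i}$ and an additive representation on each piece; second, calibrate these piecewise representations into globally consistent value functions $f_j:X_j \to \mathbb{R}$ together with weights that depend only on the partition element; third, assemble the weights into a capacity $\nu$ and verify that the resulting weighted sums agree with the Choquet integral formula and glue continuously at the partition boundaries.

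First, using \textbf{A1}, \textbf{A2}, \textbf{A3} and \textbf{A3-ACYCL}, together with the density and closedness assumptions, the binary relations $\R^z$ partition $X$ into the finite collection of sets $X^{\S_i}$ indexed by the distinct acyclic partial orders $\S_i$; geometrically each $X^{\S_i}$ consists of points where the implicit ordering of the value functions is fixed. On each $X^{\S_i}$, axiom \textbf{A4(a)} gives Wakker-style triple cancellation internal to the set. Combined with essentiality on $X^{\S_i}$ (which holds for those coordinates that are essential there), restricted solvability \textbf{A8}, and the Archimedean condition \textbf{A9}, standard additive conjoint measurement (Krantz et al.\ Ch.~6; Wakker) yields an additive representation $V_i(x)=\sum_j \phi^i_j(x_j)$ representing $\cgeq$ restricted to $X^{\S_i}$, unique up to a common positive affine transformation of the $\phi^i_j$.

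Second, the $\phi^i_j$ must be calibrated across $i$ so that for each coordinate $j$ a single function $f_j$ serves every partition element, and so that each $\phi^i_j$ equals $w^i_j\, f_j$ for some non-negative weight $w^i_j$ depending only on the partition element. Clauses (b) and (c) of \textbf{A4} deliver this in two ways: (b) forces the $\phi^i_j$ and $\phi^k_j$ on two different partition elements sharing essential coordinate $i$ to be common affine transforms, so $f_i$ can be defined globally on $X_i$; (c) similarly aligns the $j$-slices. Axiom \textbf{A5} then propagates the calibration across dimensions, projecting standard sequences from one coordinate onto another through the equivalence relations $\sim$; the geometry-of-$X$ assumption supplies the reference points $r^0_A, r^1_A$ that fix the common origin and unit within each interaction clique, ensuring that the affine freedoms on different $X^{\S_i}$ can in fact be reconciled simultaneously rather than over-constrained. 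On each $X^{\S_i}$ we then obtain a weighted-sum representation $V_i(x)=\sum_j w^i_j\, f_j(x_j)$ with $w^i_j\geq 0$; normalizing so that $\sum_j w^i_j=1$ identifies these weights with the $p^x_i$ announced before the theorem.

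Third, define $\nu$ by assigning to each subset $A \subseteq N$ the cumulative sum of weights associated with the partition elements whose ordering places $A$ on top: concretely, for the partition element $\S_i$ corresponding to the order $f_{\sigma(1)} \leq \cdots \leq f_{\sigma(n)}$, the weight $w^i_{\sigma(k)}$ should equal $\nu(\{\sigma(k),\ldots,\sigma(n)\})-\nu(\{\sigma(k+1),\ldots,\sigma(n)\})$. Well-definedness of $\nu$ requires that at any set $A$ reached by two different ordering chains the incremental weights telescope to the same value; this is where bi-independence \textbf{A6}, together with \textbf{A4}, is essential, because it forces the contribution of coordinate $j$ to be constant across partition elements sharing the same set $A$ of ``larger'' coordinates. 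Monotonicity $A \subseteq B \Rightarrow \nu(A)\leq\nu(B)$ then reduces to non-negativity of each $w^i_j$, which follows from \textbf{A2} and \textbf{A6}. Finally, $V_i$ on $X^{\S_i}$ literally matches the Choquet formula from Definition~\ref{def:choq-integr} applied to $(f_1(x_1),\ldots,f_n(x_n))$, and at the shared boundary of $X^{\S_i}$ and $X^{\S_k}$ some pair $f_j(x_j)=f_l(x_l)$, so the two partition-element representations coincide and the global function is well defined.

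The hard part will be stage two: the cross-calibration of $\phi^i_j$ across partition elements, i.e.\ proving that the affine freedom available on each $X^{\S_i}$ can be eliminated simultaneously for all $i$ without contradiction. Heterogeneity removes the automatic calibration enjoyed in the $X=Y^n$ setting, so every scale-and-location identification must be earned from $\cgeq$ through axiom \textbf{A5} and the geometry assumption. In particular, one must show that chains of equivalences relating intervals across cliques never cycle back with an inconsistent rescaling, and that the ``missing'' calibrations at non-interacting coordinate pairs leave genuine freedom in $\nu$ rather than contradicting it.
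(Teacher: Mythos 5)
Your overall strategy is the paper's: partition $X$ via \textbf{A3}/\textbf{A3-ACYCL}, build additive representations on the pieces, calibrate them with \textbf{A4}/\textbf{A5} and the reference points $r^0,r^1$, and assemble a capacity. But there are genuine gaps at the two places where the real work happens. First, in your stage one you invoke ``standard additive conjoint measurement'' on $\cgeq$ restricted to $X^{\S_a}$, but $X^{\S_a}$ is not a Cartesian product, so those theorems do not apply directly. The paper has to cover each $X^{S_a}$ by rectangular sets $X^{z(S_a)}$, verify restricted solvability \emph{within} such a set (that the solving point stays in $X^{S_a}$), align the local additive scales via auxiliary points $z'$, $r'$, and then prove by a coordinate-shifting argument (using Lemma \ref{lm:strong-com-monotonicity} and \textbf{A8}) that the glued scale $V^{S_a}$ represents $\cgeq$ globally on $X^{S_a}$, not just on each rectangle (Theorems \ref{theo:additive-XzSa}--\ref{theo:Vsa-global}). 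None of this is automatic, and your proposal skips it entirely.

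Second, your stage three misplaces the load-bearing argument. The well-definedness of $\nu$ and the identification of the piecewise weighted sums with a single Choquet integral do not follow from \textbf{A6} plus \textbf{A4}: \textbf{A6} only secures consistency of essentiality/strict monotonicity across a cone. What the paper actually uses is \textbf{A5} via equispaced sequences anchored at the common points $r^0,r^1$ (Theorem \ref{theo:eq-util-eqiv-class}), which shows that the normalized representations $\phi^a,\phi^b$ agree on shared equivalence classes; evaluating at $r^1_Ar^0_{-A}$ then yields the condition $\sum_{i\in A}\alpha^a_i=\sum_{i\in A}\alpha^b_i$ (Lemma \ref{lm:A-NA}), which is exactly what Wakker's 1989 result needs to produce a capacity. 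Your stronger claim, that each individual coordinate's weight depends only on its upper set, is not established by the axioms you cite at that point. Moreover, the statement you pass over in one clause --- that at points with $i \E^x j$ the value functions satisfy $f_i(x_i)=f_j(x_j)$, and that $i \S^x j$ forces $f_i(x_i)>f_j(x_j)$ --- is precisely what turns the formula $\sum_A m(A)\Phi_{\wedge}(x,A)$ into a genuine Choquet integral; the paper needs a separate convex-combination argument plus the closedness assumption for this (Lemma \ref{lm:theta-ij-eq} and the following lemma). Finally, cones with a single essential coordinate (where only ordinal information is available) and the maximal/minimal elements of the $X_i$ require the separate treatments of Sections \ref{sec:case-with-single} and \ref{sec:extend-repr-extr}, which your outline does not address.
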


Capacity and value functions have the following uniqueness properties. Let ${\cal I} = \{A_1, \ldots, A_k \}$ be a partition of $N$, such
that $m(B) = 0$ for all $B \subset N$ such that $B \cap A_i \neq \emptyset, B \cap A_j \neq \emptyset$. If no such partition exists, let
${\cal I} = \{ N \}$.

\begin{theorem}
  \label{theo:uniqueness}

  Let $g_1:X_1 \rightarrow \mathbb{R}, \ldots, g_n:X_n \rightarrow \mathbb{R}$ be such that (\ref{eq:repr}) holds with $f_i$ substituted by
  $g_i$. Then, at all $x_i \in X_i$, such that for some $z_{-i}$ we have $p^{x_iz_{-i}}_i > 0$, and also $p^{x_iz_{-i}}_j > 0, j \neq i$,
  value functions $f_i$ and $g_i$ are related in the following way:
\begin{equation*}
  f_i(x_i) = \alpha_{A_j}g_i(x_i) + \beta_{A_j},
\end{equation*}

Capacity changes as follows
\begin{equation*}
  m'(B) = \frac{\alpha_{A_j}m(B)}{\sum _{C \subset A_i, A_i \in {\cal I}} \alpha_{A_i}m(C)}.
\end{equation*}

At the remaining points of $X$, i.e. for $x_i$ such that for any $z_{-i} \in X_{-i}$ we have $p^{x_iz_{-i}}_i = 1$, and $p^{x_iz_{-i}}_j =
0$ for all $j \neq i$, value functions $f_i$ have the following uniqueness properties\footnote{Due to our assumption that for no $a_i, b_i$ we have $a_iz_{-i} \sim b_iz_{-i}$ for all $z_{-i}$, we can't
  have $p^{x_iz_{-i}}=0$ for all $z_{-i}$.}:
\begin{equation*}
  f_i(x_i) = \psi_i(g_i(x_i)),
\end{equation*}
where $\psi_i$ is an increasing function, and for all $j \in N, j \neq i$, such that exists $A \in N: i,j \in A, m(A) > 0$, we additionally
have
\begin{equation*}
f_i(x_i) = f_j(x_j) \iff g_i(x_i) = g_j(x_j).
\end{equation*}
.

  % \begin{enumerate}
  % \item If If $m(A) = 0 $, for all $A \subset N : |A| \geq 2 $, then for any functions $g_i:X_i \rightarrow \mathbb{R}$, $i \in N$
  %   such that (\ref{eq:repr}) holds with $f_i$ substituted by $g_i$, it holds $f_i (x_i) = \alpha g_i (x_i) + \beta_i$.
  % \item \hl{General inessentiality}. Property $i$ inessential iff $\nu(A) = \nu(A \cup j)$, for  $i \in A, i \neq j$ and $\sigma(j) + 1 = \sigma(i)$.
  % \item If $m(\{1\}) = \ldots = m(\{n\}) = 1$, or $m(N) = 1, m(A) = 0, A \neq N$, then then for any functions $g_1:X_1 \rightarrow
  %   \mathbb{R}, \ldots, g_n:X_n \rightarrow \mathbb{R}$ such that (\ref{eq:repr}) holds with $f_i$ substituted by $g_i$, it holds :
  %   $f_i(x_i) = \psi_i(g_i(x_i))$ where $\psi_i$ are increasing functions such that $f_i(x_i) = f_j(x_j) \iff g_i(x_i) = g_j(x_j)$ for all
  %   $i,j \in N$.
  % \end{enumerate}
\end{theorem}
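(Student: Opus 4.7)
The plan is to leverage the fact that on each set $X^{\S_k}$ the Choquet integral reduces to a weighted sum (by property 1 of Section 2.2), so that the classical uniqueness theory for additive conjoint measurement can be applied locally and then stitched across sectors. For a fixed partial order $\S_k$, the M\"obius form $\sum_{B \subseteq N} m(B) \min_{i \in B} f_i(x_i)$ collapses on $X^{\S_k}$ into a linear expression $\sum_i w_i^k f_i(x_i)$ with nonnegative coefficients determined by $\nu$ and the prescribed value-function ordering. Using axioms \textbf{A1}, \textbf{A2}, \textbf{A4}(a), \textbf{A8}, \textbf{A9}, this yields a bona fide additive representation on $X^{\S_k}$, and the standard uniqueness theorem for additive representations implies that any competing $g_i$ producing the same preference must satisfy $g_i(x_i) = \alpha^k f_i(x_i) + \gamma_i^k$ on the coordinates essential on $X^{\S_k}$, with a common positive scalar $\alpha^k$ per sector.

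Next I would propagate these local affine relations across sectors using axioms \textbf{A4}(b,c) and \textbf{A5}, which assert that trade-off comparisons are preserved when the quadruples involved straddle different $X^{\S_k}$. If two sectors share a coordinate $i$ that is essential in both, part (b) of \textbf{A4} forces the two affine transformations of $f_i$ to coincide, hence $\alpha^k = \alpha^l$. Iterating yields an equivalence relation on $N$: coordinates $i$ and $j$ are merged whenever some $B$ with $m(B) > 0$ contains both, and consecutive merges are transitively closed. By the definition of the partition $\mathcal{I}$, the resulting classes are exactly the blocks $A_1, \ldots, A_k$, so the scale factor is block-constant and equal to $\alpha_{A_j}$, while the translations collect into a single $\beta_{A_j}$ per block after fixing a reference point.

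The formula for $m'$ then follows by substituting $f_i(x_i) = \alpha_{A_j} g_i(x_i) + \beta_{A_j}$ into the M\"obius form, collecting the $\min$-terms by the block they live in, and enforcing the normalization $\nu'(N) = 1$; the normalizer is precisely the weighted sum appearing in the denominator of the stated expression. For the remaining coordinates -- those $x_i$ such that $p^{x_iz_{-i}}_i = 1$ for every $z_{-i}$, so that $f_i(x_i)$ enters the integral only through the global minimum over $N$ -- the Choquet value depends on $f_i(x_i)$ only through its order relative to the other $f_j(x_j)$. Hence $f_i = \psi_i \circ g_i$ for some strictly increasing $\psi_i$, and the additional requirement $f_i(x_i) = f_j(x_j) \iff g_i(x_i) = g_j(x_j)$ is precisely that the switch-locus of this outer minimum be preserved whenever $i$ and $j$ together lie in some $A$ of positive M\"obius mass.

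I expect the principal obstacle to be the second step: verifying that \textbf{A5} is actually strong enough to propagate scale equality along every chain of positive M\"obius links, while still permitting independent rescaling across disconnected blocks of $\mathcal{I}$. Handling the corner situations, where an essential coordinate in one sector degenerates at a boundary with another sector, will also require careful use of the density and closedness assumptions to guarantee that the equivalence points invoked in the trade-off chains actually exist inside $X$.
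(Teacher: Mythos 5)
Your plan matches the paper's own argument: the paper likewise invokes the standard uniqueness of additive representations on the sectors $X^{\S_a}$, propagates scale equality clique-by-clique (its Lemmas on null M\"obius mass and clique decomposition play the role of your positive-mass linking), renormalizes the capacity by the weighted sum of M\"obius masses, and handles the remaining points ordinally with the switch-locus condition coming from the lemma that $i \E^x j$ forces $\phi_i(x_i)=\phi_j(x_j)$ for interacting coordinates. The boundary complication you anticipate (a coordinate represented in a sector by a single $z_{-i}$, hence only maximal/minimal points) is exactly the one special case the paper singles out, so your proposal is essentially the paper's proof.
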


% Note, that if $\alpha_{A_1} = \ldots = \alpha_{A_k}$ in equation (\ref{eq:cap-uniq}), then $m'(B) = m(B)$ for all $B$, as $\sum _{C \subset
%   N} m(C) = 1$.

%%% Local Variables:
%%% mode: latex
%%% TeX-master: "nvar_simple"
%%% End:

\section{Additive representations on $X^{S_a}$}
\label{sec:addit-repr-xs_a}

We start by removing maximal and minimal elements from the sets $X_i$. The representation will be extended to these points in Section \ref{sec:extend-repr-extr}.

Similar to \citep{wakker1991additive-RO} we will be covering the sets $X^{S_a}$ with ``rectangular'' subsets. Given a point $z \in X^{S_a}$
we construct a ``rectangular'' set $X^{z(S_a)}$ in the following way:
  \begin{itemize}
  \item If $j$ is minimal in $\S_a$, then $X^{z(S_a)}_j = {x_j \in X_j : z_j \cgeq_j x_j}$.
  \item If $j$ is maximal in $\S_a$, then $X^{z(S_a)}_j = {x_j \in X_j : x_j \cgeq_j z_j}$.
  \item If $j$ is neither maximal not minimal, then $X^{z(S_a)}_j = [x_j \in X_j : x_j \cgeq_j z_j, x_jz_{-j} \in X^{S_a}]$. 
  \item If for no $k$ we have $j \S_a k$ or $k \S_a j$, then $X^{z(S_a)}_j = X_j$.
  \end{itemize}

\subsection{Constructing additive representation on $X^{z(S_a)}$}
\label{sec:constr-addit-funct}

We assume that $X^{S_a}$ has at least two essential coordinates. By Lemma \ref{lm:strong-com-monotonicity}, all sets $X^{z(S_a)}$ therefore have at least two
essential coordinates. Moreover, the essential coordinates are the same across all sets. 

\begin{theorem}
  \label{theo:additive-XzSa}
  For any $z \in X^{S_a}$ there exists an additive representation of $\cgeq$ on $X^{z(S_a)}$:
  \begin{equation*}
    x \cgeq y \Leftrightarrow \sum _{i=1} ^n V^z_i(x_i) \geq \sum _{i=1} ^n V^z_i(y_i),
  \end{equation*}
  for all $x,y \in X^{z(S_a)}$.
\end{theorem}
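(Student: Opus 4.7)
The strategy is to reduce Theorem~\ref{theo:additive-XzSa} to a classical additive conjoint measurement result (e.g.\ Krantz et al., Chapter~6, Theorem~13) applied to the rectangular set $X^{z(S_a)}$. By construction $X^{z(S_a)} = \prod_{i \in N} X^{z(S_a)}_i$ is a Cartesian product, and at least two of its coordinates are essential by hypothesis, with essentiality shared across all such sets via Lemma~\ref{lm:strong-com-monotonicity}. The classical theorem then needs weak order, independence, triple cancellation on every pair of essential coordinates, restricted solvability, the Archimedean condition, and order density; I would verify each on $X^{z(S_a)}$ in turn.

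Weak order is inherited from \textbf{A1}, and order density is a structural assumption. Restricted solvability (\textbf{A8}) and the Archimedean condition (\textbf{A9}) specialize to $X^{z(S_a)}$ because every relevant standard sequence stays inside $X^{S_a}$, to which \textbf{A9} applies. The crux is establishing $ij$-triple cancellation on all of $X^{z(S_a)}$ for each pair $i,j$. Completeness of $\R^w$ (axiom \textbf{A3}) gives either $i \R_a j$ or $j \R_a i$, so by the definition of $X^{S_a}$ as an intersection of such sets we have $X^{S_a} \subseteq \SE{ij}$ or $X^{S_a} \subseteq \NW{ij}$. The upward/downward-closure built into each component $X^{z(S_a)}_j$ (downward-closed for $\S_a$-minimal $j$, upward-closed for $\S_a$-maximal $j$, and the interior case otherwise) is engineered so that any $ij$-3C quadruple with shared $z_{-ij}$ inside $X^{z(S_a)}$ lies in a single cone $\SE[w]{ij}$ or $\NW[w]{ij}$ anchored at a suitable $w$; \textbf{A3} then yields 3C for that quadruple directly. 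For quadruples that straddle several cones, I would chain the cone-local 3C instances via \textbf{A4(a)}, which applies precisely because all eight points of the trade-off remain in the common $X^{S_a}$. Pairwise 3C together with essentiality then yields independence on $X^{z(S_a)}$.

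The main obstacle is exactly this unanchored-from-anchored lift: \textbf{A3} supplies 3C only on each cone $\SE[w]{ij}$ or $\NW[w]{ij}$ pinned at a specific $w$, while the classical additive theorem demands 3C throughout $X^{z(S_a)}$. One must argue carefully that the partial-order-directed shape of $X^{z(S_a)}$ forces any candidate quadruple either into a single cone or into a sequence of cones linked by shared points along $i$ or $j$, so that \textbf{A4(a)} can be applied iteratively to glue the pieces together. Once this verification is complete, the classical additive representation theorem delivers the functions $V^z_1, \ldots, V^z_n$, unique up to a common positive affine transformation --- a uniqueness that will later be crucial for patching the local representations across different anchors $z$ and different sets $X^{S_a}$.
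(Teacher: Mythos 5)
Your overall route is the same as the paper's: check the hypotheses of a classical additive conjoint representation theorem on the rectangular product $X^{z(S_a)}$ and invoke it. But you have misplaced the weight of the argument, and the one condition that actually requires proof is left unproved. The paper obtains generalized triple cancellation on $X^{z(S_a)}$ essentially for free from \textbf{A4}(a), since all eight points of any such configuration lie in the single set $X^{S_a}$; your ``unanchored-from-anchored lift'' of \textbf{A3} is therefore not the crux (and if one did want pairwise $ij$-3C from \textbf{A3} alone, it suffices to anchor a single cone at the corner point $w_iw_jz_{-ij}$ with $w_i$ the $\cgeq_i$-smallest and $w_j$ the $\cgeq_j$-largest of the coordinates involved: $w \in X^{z(S_a)} \subseteq X^{S_a} \subseteq \SE{ij}$, so $i \R^w j$ and the whole quadruple sits in $\SE[w]{ij}$ --- no iterative gluing via \textbf{A4}(a) is needed).

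The genuine gap is restricted solvability. \textbf{A8} only guarantees, from $x_iz_{-i} \cgeq w \cgeq y_iz_{-i}$, the existence of some $c_i \in X_i$ with $c_iz_{-i} \sim w$; nothing in the axiom says $c_iz_{-i}$ lies in $X^{z(S_a)}$, yet the additive representation theorem needs solvability \emph{within} the set on which the representation is being built. Your justification (``every relevant standard sequence stays inside $X^{S_a}$'') pertains only to the Archimedean condition \textbf{A9} and says nothing about where the solvability solution lands. This is precisely the step the paper's proof is devoted to: since $x_i \cgeq_i c_i \cgeq_i y_i$, the sandwiched point inherits the relations $i \S^{\cdot} j$ and $j \S^{\cdot} i$ from the two endpoints, so $c_iz_{-i}$ remains in $X^{S_a}$ (hence in $X^{z(S_a)}$). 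Without an argument of this kind your appeal to the classical theorem does not go through, so you should supply it (or an equivalent) before citing Krantz et al.\ or Wakker.
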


\begin{proof}

  $X^{z(S_a)}$ is a Cartesian product, $\cgeq$ is a weak order on $X^{z(S_a)}$, $\cgeq$ satisfies generalized triple cancellation on $X^{z(S_a)}$,
  $\cgeq$ satisfies Archimedean axiom on $X^{z(S_a)}$, at least two coordinates are essential. It remains to show that $\cgeq$ satisfies restricted
  solvability on $X^{z(S_a)}$.
  
  Assume that for some $x_iz_{-i}, w, y_iz_{-i} \in X^{z(S_a)}$, we have $x_iz_{-i} \cgeq w \cgeq y_iz_{-i}$, hence exists
  $z_i \in X_i: z_iz_{-i} \sim w$. We need to show that $z_iz_{-i} \in X^{S_a}$. If $w \sim x_iz_{-i}$ or $w \sim y_iz_{-i}$, then the
  conclusion is immediate (since either point belongs to $X^{z(S_a)}$). Hence, assume $x_iz_{-i} \cgt z_iz_{-i} \cgt y_iz_{-i}$. This means
  that $x_i \cgeq_i z_i \cgeq_i y_i$. Since $z_i$ is ``sandwiched'' between $x_i$ and $y_i$ we conclude that for any $j \in N \setminus i$,
  $i \S^{x_iz_{-i}} j$ and $i \S^{y_iz_{-i}} j$ imply also $i \S^{z_iz_{-i}} j$, and symmetrically $j \S^{x_iz_{-i}} i$ and
  $j \S^{y_iz_{-i}} i$ imply $j \S^{z_iz_{-i}} i$. Hence, it is also in $X^{S_a}$.

  Therefore all conditions for the existence of an additive representation are met \citep{wakker1991additive}.  
\end{proof}

\subsection{Joint representation $V^{S_a}$ on $X^{S_a}$}
\label{sec:addit-funct-vs_a}

This section is based on \mbox{\citep{wakker1991additive-RO}} with some modifications.

\begin{theorem}
  \label{theo:Vsa-on-Xsa}
  There exists an additive interval scale $V^{S_a}(z) = \sum _{i = 1}^n V^{S_a}_i(z_i)$ on $X^{S_a}$, which represents $\cgeq$ on every
  $X^{z(S_a)}$ with $z \in X^{S_a}$.
\end{theorem}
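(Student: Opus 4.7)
The plan is to glue together the local additive representations $V^z$ on the rectangles $X^{z(S_a)}$ obtained in the previous theorem into a single additive scale $V^{S_a}$ on $X^{S_a}$. The scheme follows the patching technique of \citep{wakker1991additive-RO}, adapted to the heterogeneous setting. First I would fix a reference point $z^0 \in X^{S_a}$ and set $V^{S_a}_i := V^{z^0}_i$ on every $x_i$ that appears in $X^{z^0(S_a)}_i$. Since $X^{z^0(S_a)}$ has at least two essential coordinates (guaranteed by Lemma \ref{lm:strong-com-monotonicity}), its additive representation is a common interval scale, so $V^{z^0}$ is fixed up to a single joint positive affine transformation of all $V^{z^0}_i$.

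Next I would extend $V^{S_a}_i$ to the remaining points. For any $z \in X^{S_a}$, pick a chain of rectangles $X^{z^0(S_a)} = X^{z^{(0)}(S_a)}, X^{z^{(1)}(S_a)}, \ldots, X^{z^{(k)}(S_a)} = X^{z(S_a)}$ such that each consecutive pair overlaps in a region where at least two coordinates remain essential; existence of such a chain follows from restricted solvability \textbf{A8}, density, and the way the rectangles $X^{z(S_a)}$ are defined to share faces. On each overlap, both local representations $V^{z^{(t)}}$ and $V^{z^{(t+1)}}$ represent $\cgeq$ additively on a product set with at least two essential coordinates, so by the uniqueness of additive representations they differ by a common positive affine transformation $\alpha_t V^{z^{(t+1)}}_i + \beta_{t,i}$. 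Apply this transformation and set $V^{S_a}_i$ on the new points of $X^{z^{(t+1)}(S_a)}_i$ accordingly, iterating along the chain.

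The principal obstacle—and the point where the axioms of the paper have been tailored—is showing that the resulting $V^{S_a}$ is well-defined, i.e.\ the composite rescaling does not depend on the chain connecting $z^0$ and $z$. This is precisely what intra-coordinate trade-off consistency \textbf{A4} buys us: cases (b) and (c), which allow the eight points of a cancellation to straddle two different subsets (here two different rectangles lying inside $X^{S_a}$, which is itself an $X^{\S_j}$), ensure that a standard sequence on coordinate $i$ built inside one rectangle has the same spacing as the ``same'' sequence built inside another. Any two chains of rescalings must therefore produce identical scales on the shared coordinate ranges. Once well-definedness is in hand, verifying that $V^{S_a}(x) = \sum_i V^{S_a}_i(x_i)$ represents $\cgeq$ on each $X^{z(S_a)}$ is immediate, since on each rectangle it agrees, up to a common positive affine transformation, with the local $V^z$ constructed in Theorem \ref{theo:additive-XzSa}. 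The Archimedean axiom \textbf{A9}, applied to standard sequences within $X^{S_a}$, ensures that standard-sequence arguments used for the well-definedness step terminate, closing the proof.
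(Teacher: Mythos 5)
Your overall strategy (glue the local scales from Theorem \ref{theo:additive-XzSa} by invoking uniqueness of additive representations on overlaps) is the same Wakker-style patching the paper uses, but your proposal leaves a genuine gap exactly at the step that carries the weight of the theorem: the existence of your chain of rectangles with nondegenerate overlaps. Two rectangles $X^{r(S_a)}$ and $X^{z(S_a)}$ can be disjoint on \emph{every} coordinate (the paper explicitly treats the case $X^{r(S_a)}_i \cap X^{z(S_a)}_i = \emptyset$ and $X^{r(S_a)}_k \cap X^{z(S_a)}_k = \emptyset$), and it is not at all clear that restricted solvability and density produce a connecting chain whose consecutive overlaps are product sets with at least two essential coordinates and more than two points per essential dimension (which the uniqueness argument needs). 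The paper closes this hole constructively: it builds bridge points $z' := \max(r_j,z_j)_{j: j \R_a i}\min(r_j,z_j)_{j=i,\, j: i \R_a j}$ and $r'$ (same with $k$), shows they stay in $X^{S_a}$ because all relations $j \R_a l$ are preserved under these coordinate-wise maxima/minima, and observes that $X^{z'(S_a)}_i$ contains \emph{both} $X^{z(S_a)}_i$ and $X^{r(S_a)}_i$ (similarly for $r'$ and $k$), so a three-step alignment $V^r \to V^{r'} \to V^{z'} \to V^z$ is always available. This lattice-type construction, tailored to the order $\S_a$, is the missing idea in your sketch; without it the claim that a suitable chain exists is an assertion, not a proof.

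A second, smaller issue is your well-definedness argument. In the paper there is no chain ambiguity to begin with: every $V^z$ is aligned directly to the single reference scale $V^r$ through the two bridges, and consistency of $V^s_j$ and $V^t_j$ on $X^{s(S_a)}_j \cap X^{t(S_a)}_j$ is then verified by one more bridge construction plus the uniqueness of additive representations (both $V^{s'}_j$ and $V^{t'}_j$ agree with $V^r_j$ on $X^{r(S_a)}_j$, hence agree on their whole common domain). Your appeal to \textbf{A4} cases (b) and (c) and to equal spacing of standard sequences across rectangles is the kind of argument the paper reserves for aligning scales across \emph{different} sets $X^{S_a}$, $X^{S_b}$ (Theorem \ref{theo:Vi-proportional}); inside a single $X^{S_a}$ it is neither needed nor, as stated, sufficient to rule out that two different chains of affine rescalings disagree. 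If you want to keep a chain-based formulation, you must still route all chains through a common reference rectangle and supply the bridge-point construction above; otherwise the path-independence claim remains unproved.
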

\begin{proof} 

  Choose the reference set - pick any $r \in X^{S_a}$ such that $X^{r(S_a)}_i$ contains more than two points for any $X^{S_a}$-essential
  $i$. Choose a ``zero'' point - any $r^0 \in X^{r(S_a)}$, and a ``unit mark'' - a point $r_k^1r^0_{-k} \in X^{r(S_a)}$, such that:
  \begin{itemize}
  \item $k$ is essential on $X^{S_a}$,
  \item $r^1_k \cgeq_k r^0_k$.
  \end{itemize}
  Set $V^{r}_i(r^0_i) = 0$ for all $i \in N$ and $V^{r}_k(r^1_k) = 1$.  This uniquely defines unit and locations of all $V^r_i, i \in N$.  

In the following we assume that sets $X^{z(S_a)}_i, X^{z(S_a)}_k$ each contain at least two points, otherwise, alignment is trivial.

Assume $X^{r(S_a)}_i \cap X^{z(S_a)}_i = \emptyset$ and $X^{r(S_a)}_k \cap X^{z(S_a)}_k = \emptyset$ (variations are all covered by the
below procedure). We will construct two auxiliary points $z'$ and $r'$ such that $X^{z'(S_a)}_i \subset X^{z(S_a)}_i$, $X^{z'(S_a)}_i \subset
X^{r(S_a)}_i$, $X^{r'(S_a)}_k \subset X^{z(S_a)}_k$, $X^{r'(S_a)}_k \subset X^{r(S_a)}_k$. It would allow us to align first $V^r$ and
$V^{r'}$, then $V^{r'}$ and $V^{z'}$, and finally $V^{z'}$ and $V^z$. 

Construct the point $z'$ by taking coordinate-wise maxima of $r$ and $z$ for coordinates $j$ such that $j \R_a i$, not including $i$
itself, and coordinate-wise minima of $r$ and $z$ for coordinates $j$, such that $i \R_a j$ and $i$ itself. In the short notation the first
point is $z' := \max(r_j,z_j)_{j:j \R_a i} \min(r_j,z_j)_{j=i, j:i \R_a j}$. The second point $r'$ is constructed by taking coordinate-wise maxima of $r$ and $z$ for
coordinates $j$ such that $j \R_a k$, not including $k$ itself, and coordinate-wise minima of $r$ and $z$ for coordinates $j$, such that
$k \R_a j$ and $k$ itself. In the short notation the second point looks like $r' := \max(r_j,z_j)_{j: j \R_a k} \min(r_j,z_j)_{j=k, j: k \R_a j}$.

Note that both points are in $X^{S_a}$ since relations $j \R_a l$ remain intact for all pairs $j,l$. Note also, that $X^{z'(S_a)}_i$
contains both $X^{z(S_a)}_i$ and $X^{r(S_a)}_i$, and $X^{r'(S_a)}_k$ contains both $X^{z(S_a)}_k$ and $X^{r(S_a)}_k$.

Now we have that sets $(X^{z(S_a)}_i \times X^{z(S_a)}_k) \cap (X^{z'(S_a)}_i \times X^{z'(S_a)}_k)$, $(X^{z'(S_a)}_i \times X^{z'(S_a)}_k)
\cap (X^{r'(S_a)}_i \times X^{r'(S_a)}_k)$, $(X^{r'(S_a)}_i \times X^{r'(S_a)}_k) \cap (X^{r(S_a)}_i \times X^{r(S_a)}_k)$ are all
non-empty, and each dimension contains more than two points. Relation $\cgeq_{i,k}$ on these sets satisfies Archimedean axiom, restricted
solvability, and \textbf{A4}. Hence we can apply standard uniqueness properties of additive representations. We first align $V^{r'}_k$ with
$V^r_k$ and $V^{r'}_i$ with $V^r_i$, then $V^{z'}_k$ with $V^{r'}_k$ and $V^{z'}_i$ with $V^{r'}_i$, and finally $V^{z}_k$ with $V^{z'}_k$
and $V^{z}_i$ with $V^{z'}_i$ by changing the common unit and locations of corresponding value functions. 

Having aligned like this $V^z_i$ and $V^z_k$ with $V^r_i$ and $V^r_k$ for all $z \in X^{S_a}$ we can perform the same alignment operation
for all remaining essential coordinates $j$, using pairs $V^z_j$ and $V^z_k$. At this stage, functions $V^z_k$ are already aligned, hence
have a correct unit and location. As above, uniqueness properties of additive representations of relation $\cgeq_{j.k}$ imply that the unit
of functions $V^z_j$ is already aligned with that of $V^r_j$ and only location change has to be performed. This can also be done as above. 

Once such alignment has been performed for all essential coordinates, we can verify that this is done consistently throughout $X^{S_a}$. In
particular, for any $s$ and $t$ from $X^{S_a}$ we must be able to show that for any essential $j \in N$, we have $V^s_j = V^t_j$ on
$X^{s(S_a)}_j \cap X^{t(S_a)}_j$. To show this a following argument can be used. During the initial alignment of $V^s_j$ and $V^t_j$,
auxiliary points $t'$ and $s'$ were used, such that $X^{s'(S_a)}_j$ includes $X^{s(S_a)}_j$ and $X^{r(S_a)}_j$, and $X^{t'(S_a)}_j$ includes
$X^{t(S_a)}_j$ and $X^{r(S_a)}_j$. Hence, functions $V^{s'}_j$ and $V^{t'}_j$ coincide with $V^{r}_j$ on $X^{r(S_a)}_j$. To show that they
coincide on all common domain, including $X^{s(S_a)}_j \cap X^{t(S_a)}_j$, we just need to follow the same procedure as before and construct
a point that contains $X^{s'(X_a)}_k$ and $X^{t'(X_a)}_k$ for some essential $k$. Then a uniqueness argument can be evoked once again, and
since $V^{s'}_j$ and $V^{t'}_j$ coincide on $X^{r(S_a)}_j$, they would necessarily coincide also on the remaining common domain, which
includes $X^{s(S_a)}_j \cap X^{t(S_a)}_j$. Finally, since $V^{s}_j = V^{s'}_j$ on $X^{s(S_a)}_j$, and $V^{t}_j = V^{t'}_j$ on on
$X^{t(S_a)}_j$, we get that $V^{s}_j = V^{t}_j$ on $X^{s(S_a)}_j \cap X^{t(S_a)}_j$.

At this point we can drop the superscripts and define functions $V^{S_a}_i$ which coincide with $V^{z(S_a)}_i$ for all $z \in X^{S_a}$ on
the corresponding domains. By the above argument, these functions are well-defined.

\end{proof}

\subsection{$V^{S_a}$ is globally representing on $X^{S_a}$}
\label{sec:vs_a-glob-repr}

\begin{lemma}
  \label{lm:2}
  For all $X^{S_a}$-essential $i \in N$, $V^{S_a}_i$ represents $\cgeq_i$ on $X^{S_a}_i$. 
\end{lemma}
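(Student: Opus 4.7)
The plan is to reduce the global claim to the local additive representations established by Theorem \ref{theo:additive-XzSa}, exploiting the coherence of the $V^{S_a}_i$'s across boxes provided by the proof of Theorem \ref{theo:Vsa-on-Xsa}. Fix $a_i,b_i \in X^{S_a}_i$ and suppose first that there exists $z \in X^{S_a}$ with both $a_i$ and $b_i$ in $X^{z(S_a)}_i$. Since $V^{S_a}_i$ agrees with $V^{z}_i$ on $X^{z(S_a)}_i$ and $V^z$ additively represents $\cgeq$ on the rectangular set $X^{z(S_a)}$ with $i$ essential there (inherited from $X^{S_a}$-essentiality, as noted at the opening of Section \ref{sec:constr-addit-funct}), for any $x_{-i}$ with $a_i x_{-i}, b_i x_{-i} \in X^{z(S_a)}$ we obtain
\begin{equation*}
  V^{S_a}_i(a_i) \geq V^{S_a}_i(b_i) \iff a_i x_{-i} \cgeq b_i x_{-i} \iff a_i \cgeq_i b_i,
\end{equation*}
the last equivalence being weak separability \textbf{A2}.

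For arbitrary $a_i,b_i \in X^{S_a}_i$, reduction to the covered case proceeds by chaining. Assuming $a_i \cgeq_i b_i$, I would construct a finite monotone sequence $a_i = c^{0}_i \cgeq_i c^{1}_i \cgeq_i \ldots \cgeq_i c^{m}_i = b_i$ in $X^{S_a}_i$ such that each consecutive pair $(c^{k-1}_i,c^{k}_i)$ lies in a common box $X^{z^{k}(S_a)}_i$. When $i$ is $\S_a$-extremal the boxes along coordinate $i$ are one-sided half-lines, and the ``Geometry of $X$'' structural assumption together with restricted solvability furnishes a single anchor whose box contains both endpoints. When $i$ is internal the family $\{X^{z(S_a)}_i : z \in X^{S_a}\}$ still covers $X^{S_a}_i$, and order-density of $X$ combined with restricted solvability \textbf{A8} lets us place intermediate $c^{k}_i$'s inside consecutive overlaps. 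Applying the local step to each link and composing yields $V^{S_a}_i(a_i) \geq V^{S_a}_i(b_i)$; the reverse implication follows from the same argument on reversed chains, a strict inequality on any link being transported globally by the local representation.

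The delicate step is the chain construction in the internal case: moving the $i$-coordinate through an intermediate value $c^{k}_i$ may force the anchor $z^{k}$ to adjust its other coordinates to stay inside $X^{S_a}$, because the definition of $X^{z(S_a)}_i$ couples the $i$-interval to the rest of $z$ via $\S_a$. I expect this to be handled using the reference points $r^{0}_A,r^{1}_A$ guaranteed by the structural geometry assumption for the interaction clique containing $i$, together with repeated invocation of \textbf{A8} to produce each anchor $z^{k}$ in $X^{S_a}$; once the anchors exist, the local representation on each overlap closes the argument, and the alignment built in the proof of Theorem \ref{theo:Vsa-on-Xsa} ensures that the transported inequalities are actually inequalities between common values of $V^{S_a}_i$ rather than between differently-scaled local copies.
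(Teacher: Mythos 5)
Your local step is fine: once both $a_i$ and $b_i$ sit in a single box $X^{z(S_a)}_i$, the additive representation of Theorem \ref{theo:additive-XzSa}, the alignment from Theorem \ref{theo:Vsa-on-Xsa}, and \textbf{A2} give the equivalence. The gap is in the reduction of arbitrary $a_i,b_i \in X^{S_a}_i$ to that case. Your chaining argument is only sketched at exactly the point where it is nontrivial: you assert that consecutive overlaps can be arranged and that a \emph{finite} monotone chain $a_i = c^0_i \cgeq_i \ldots \cgeq_i c^m_i = b_i$ with each consecutive pair in a common box exists, but neither claim is established. For internal $i$ the $i$-section of $X^{z(S_a)}_i$ is an up-set constrained to stay inside $X^{S_a}$, so there is no a priori guarantee that the box anchored at an intermediate point reaches back to cover the previous link; and even granting overlaps, nothing in the axioms you invoke bounds the number of links --- the Archimedean axiom \textbf{A9} controls bounded standard sequences, not coverings by boxes, and order-density if anything works against finiteness. ``I expect this to be handled using $r^0_A,r^1_A$ and repeated \textbf{A8}'' is where the actual proof should be.

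The paper avoids chaining entirely, and that is the missing idea. Given $\alpha_i,\beta_i \in X^{S_a}_i$, take points of $X^{S_a}$ whose boxes contain them and form a single auxiliary point exactly as $r'$ and $z'$ were built in the proof of Theorem \ref{theo:Vsa-on-Xsa}: coordinate-wise maxima over the coordinates $j$ with $j \R_a i$ and coordinate-wise minima over $i$ and the coordinates $j$ with $i \R_a j$. This point remains in $X^{S_a}$ (all relations $j \R_a l$ are preserved) and its box along coordinate $i$ contains both original $i$-sections, so one obtains a single common complement $x_{-i}$ with $\alpha_i x_{-i}, \beta_i x_{-i} \in X^{S_a}$; your local step then closes the argument in one application. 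If you want to keep your structure, replace the entire chain construction by this one-step max/min construction; as written, the internal case of your reduction does not go through.
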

\begin{proof}
  Let $\alpha_i,\beta_i \in X^{S_a}_i$ be such that $\alpha_i \cgeq_i \beta_i$. Similarly to the construction of $r'$ and $z'$ in the proof
  of theorem \ref{theo:Vsa-on-Xsa}, we can show that always exists $x_i$ such that $\alpha_ix_{-i}, \beta_ix_{-i} \in X^{S_a}$. The
  conclusion follows.
\end{proof}

\begin{theorem}
  \label{theo:Vsa-global}
  Representation $V^{S_a}$ obtained in Theorem \ref{theo:Vsa-on-Xsa} is globally representing on $X^{S_a}$.
\end{theorem}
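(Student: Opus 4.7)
The plan is to reduce any global comparison in $X^{S_a}$ to a comparison inside a single rectangular slab $X^{w(S_a)}$, where Theorem \ref{theo:Vsa-on-Xsa} applies directly, and then to exploit the fact that the value functions $V^{S_a}_i$ have already been aligned across such slabs in the proof of that theorem.

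Given $x,y \in X^{S_a}$, I would first construct a reference point $w \in X^{S_a}$ with both $x$ and $y$ in $X^{w(S_a)}$. The construction is coordinate-by-coordinate, guided by the definition of $X^{w(S_a)}_j$: if $j$ is minimal in $\S_a$ the slab requires $w_j \cgeq_j x_j$ and $w_j \cgeq_j y_j$, so I take $w_j$ to be the $\cgeq_j$-larger of $x_j, y_j$; if $j$ is maximal in $\S_a$ or in an intermediate position, the slab requires $x_j, y_j \cgeq_j w_j$, so I take $w_j$ to be the $\cgeq_j$-smaller; if $j$ is unrelated in $\S_a$ then $X^{w(S_a)}_j = X_j$ and the choice is free. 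By design, each $w_j$ is $\cgeq_j$-sandwiched between the corresponding coordinates of $x$ and $y$, and $x,y \in X^{w(S_a)}$ follows.

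Next I would check $w \in X^{S_a}$. The ``sandwich'' argument used at the end of the proof of Theorem \ref{theo:additive-XzSa} shows that whenever $i \S^x k$ and $i \S^y k$ both hold, the relation $i \S^w k$ is inherited by any point $w$ whose $i$- and $k$-coordinates lie between the corresponding coordinates of $x$ and $y$. Applied pair by pair to the partial order $\S_a$, this gives $w \in X^{S_a}$. Because the alignment step in Theorem \ref{theo:Vsa-on-Xsa} guarantees $V^{S_a}_i \equiv V^w_i$ on $X^{w(S_a)}_i$, invoking the additive representation on $X^{w(S_a)}$ yields
\begin{equation*}
  x \cgeq y \iff V^{S_a}(x) \geq V^{S_a}(y),
\end{equation*}
which is exactly global representation.

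The main obstacle is the verification $w \in X^{S_a}$: a priori, a coordinate-wise min/max construction could flip some pair's $\SE{ik}/\NW{ik}$ membership at an intermediate point. The required control should come from axiom \textbf{A3-ACYCL} together with the ``Closedness'' structural assumption, which jointly prevent such flips along a coordinate-wise monotone path (any attempted flip would have to cross a point where $i \E^z k$, and the closure assumption handles that degeneracy). If a direct verification turns out to fail in some edge case, I would fall back on a chain argument: connect $x$ to $y$ by a finite sequence of points in $X^{S_a}$ whose consecutive terms share a common rectangular slab, align the $V^{S_a}$-values across the overlaps (already ensured by Theorem \ref{theo:Vsa-on-Xsa}), and conclude via transitivity of both $\cgeq$ and $\geq$.
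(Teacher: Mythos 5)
Your construction of a common ``slab point'' $w$ breaks down, and in fact no such $w$ need exist, so the reduction to a single $X^{w(S_a)}$ is not available. Concretely, take $n=2$ with $1 \S_a 2$ (so $1$ is $\S_a$-maximal and $2$ is $\S_a$-minimal) and suppose the intended representation exists with $f_1(x_1)=5$, $f_2(x_2)=4$, $f_1(y_1)=2$, $f_2(y_2)=1$. Both $x$ and $y$ lie in $X^{S_a}$, but any $w$ with $x,y \in X^{w(S_a)}$ must have $w_1 \cleq_1 y_1$ and $w_2 \cgeq_2 x_2$, which forces $f_1(w_1) \leq 2 < 4 \leq f_2(w_2)$, i.e.\ $2 \S^{w} 1$, so $w \notin X^{S_a}$. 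Thus your coordinate-wise min/max point leaves $X^{S_a}$, and no repair via \textbf{A3-ACYCL} or ``Closedness'' is possible, because the obstruction is already present in the Choquet model itself. The sandwich argument you cite from Theorem \ref{theo:additive-XzSa} only covers varying a single coordinate between two values with the remaining coordinates held fixed; mixing minima on $\S_a$-upper coordinates with maxima on $\S_a$-lower ones narrows the gap between the two sides of the order and can flip it, which is exactly what happens above. (The min/max construction in Theorem \ref{theo:Vsa-on-Xsa} is different: there maxima are taken on all coordinates $j$ with $j \R_a i$ and minima on the rest, a pattern that preserves every relation $j \R_a l$.)

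Your fallback is closer to what is needed but omits the key mechanism. A chain of points in $X^{S_a}$ whose consecutive members share a slab gives only local consistency of $V^{S_a}$; transitivity of $\cgeq$ and $\geq$ does not by itself transfer the representation to a pair $x,y$ lying in no common slab. What is required -- and what the paper's proof does -- is to use restricted solvability (\textbf{A8}) and monotonicity to replace $x$ by an equivalent point $x' \sim x$ lying with $x$ in a common slab (hence $V^{S_a}(x')=V^{S_a}(x)$), chosen so that one more coordinate of $x'$ agrees with or dominates the corresponding coordinate of $y$, while verifying at each step that the new point stays in $X^{S_a}$; after at most $n$ iterations $x'_i \cgeq_i y_i$ for all $i$, and Lemma \ref{lm:2} together with pointwise monotonicity (and, for the case $x \sim y$, strong monotonicity as in Lemma \ref{lm:strong-com-monotonicity}) closes the argument. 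Without this equivalence-preserving, value-preserving movement toward coordinate-wise dominance, the global claim does not follow from the slab-wise representations.
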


\begin{proof}
  We need to show that  $x \cgeq y \iff V^{S_a}(x) \geq V^{S_a}(y)$.
  \begin{itemize}
  \item If exists $z$ such that $x,y \in X^{z(S_a)}$ then the result is immediate.
%  \item If $x_i \cgeq_i y_i$ then the result follows by Lemma \ref{lm:2}.
  \item If the above is not true, we will show that exists $x' \sim x$ such that $V^{S_a}(x) = V^{S_a}(x')$ and $x'_i \cgeq_i y_i$ for all $i$.
  \end{itemize}

  The procedure is identical to \cite{wakker1991additive} with some minor modifications.
  \begin{enumerate}
  \item Find $i$ such that $y_i \cgt_i x_i$ and $x_k \cgeq_k y_k$ for all $k$ such that $k \S_a i$. We have $y_ix_{-i} \in X^{S_a}$ (since
    for all $k \in N$ such that $k \S_a i$ we have $x_k \cgeq y_k$, hence $k \R^{y_iy_{-i}} i$ implies $k \R^{y_ix_{-i}} i$, whereas for all
    $t \in N$ such that  $i \S_a t$ we have $i \R^{x_ix_{-i}} t$, hence $i \R^{y_ix_{-i}}$).
  \item Similarly, find $j$ such that $x_j \cgt_j y_j$ and $y_k \cgeq_k x_k$ for all $k$ such that $j \S_a k$. By similar reasoning, $y_jx_{-j} \in X^{S_a}$.
  \item We are increasing $x_i$ and decreasing $x_j$ and thus move in the direction of $y$. 
  \item Note, that $x_{-ij}y_iy_j \in X^{S_a}$.
  \item If $x_{-ij}y_iy_j \cgeq x$, then by restricted solvability ($ x_{-ij}y_iy_j \cgeq x \cgeq x_{-ij}x_iy_j$) exists $x' :=
    x_{-ij}x'_iy_j \sim x$, where $y_i \cgeq_i x'_i \cgeq_i x_i$. If $x \cgt x_{-ij}y_iy_j$, then by restricted solvability ($x_{-ij}y_ix_j
    \cgeq x \cgt x_{-ij}y_iy_j $) exists $x' := x_{-ij}y_ix'_j \sim x$, and $x_j \cgeq_j x'_j \cgeq_J y_j$.
  
  \item In both cases, the resulting point $x'$ is in $X^{S_a}$, moreover $x', x \in X^{z(S_a)}$ where $z := x_{ij}x_iy_j$, hence $x'$ has
    the same $V^{S_a}$-value as $x$, but one more coordinate becomes identical to that of $y$.
 
  \item After repeating the procedure unless $x'_i \cgeq_i y_i$ (at most $n$ times), we get the result by Lemma \ref{lm:2}.
  \item Moreover, if $x \sim y$, we at the end of the procedure we would necessarily arrive to $y$ itself (by strong monotonicity as in Lemma
    \ref{lm:strong-com-monotonicity}, and structural assumption \textbf{SA1}). Hence we get $x \cgt y \Rightarrow V^{S_a}(x) > V^{S_a}(y)$ and $x \sim y
    \Rightarrow V^{S_a}(x) = V^{S_a}(y)$, which implies that $x \cgeq y \iff V^{S_a}(x) \geq V^{S_a}(y)$
  \end{enumerate}
\end{proof}

%%% Local Variables:
%%% mode: latex
%%% TeX-master: "nvar_simple"
%%% End:

\section{Aligning cardinal representations for different $X^{S_a}$}
\label{sec:align-addit}

There can be several cases depending on what variables are essential on various sets $X^{S_i}$.  We start with the case where exist
$X^{S_a}$ and $X^{S_b}$ having at least two essential variables each.

\subsection{Exist at least two sets $X^{S_i}$ with at least two essential coordinates}
\label{sec:both-coordinates-are}

% A more general variant of the following theorem is here
%\input{theo6-general.tex}
%

\begin{theorem}
  \label{theo:Vi-proportional}
  Assume that at least two coordinates are essential on $X^{S_a}$ and $X^{S_b}$. For any $i \in N$ that is essential on both areas, it holds
  $V^{S_a}_i(z_i) = \lambda^{ab}_iV^{S_b}_i(z_i)$ for all $z_i$ from the common domain of $V^{S_a}_i(z_i)$ and $V^{S_b}_i(z_i)$, if a common
  location is chosen for both functions.
\end{theorem}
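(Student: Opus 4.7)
The plan is to exploit the fact that the additive representations $V^{S_a}$ on $X^{S_a}$ and $V^{S_b}$ on $X^{S_b}$ are each unique up to positive affine transformation (Theorem \ref{theo:Vsa-on-Xsa}), and to use axiom \textbf{A4(c)} as the ``bridge'' that transfers cardinal information between the two sets. Concretely, it suffices to show that any standard sequence $c^0_i, c^1_i, c^2_i, \ldots$ built on coordinate $i$ inside $X^{S_a}$ remains a sequence with equal steps when re-measured by $V^{S_b}_i$; the ratio of the two step lengths then yields the constant $\lambda^{ab}_i$.

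First I fix a point $c^0_i$ in the common domain $X^{S_a}_i \cap X^{S_b}_i$ and use the freedom in the location of $V^{S_b}_i$ to align it with $V^{S_a}_i$ at $c^0_i$. Using restricted solvability (\textbf{A8}) I build a standard sequence $(c^n_i)$ in $X^{S_a}$ with a measuring rod $(y^0_{-i}, y^1_{-i})$ lying entirely in $X^{S_a}$, so that $c^{n-1}_i y^1_{-i} \sim c^n_i y^0_{-i}$ and every step has the same $V^{S_a}_i$-length $\delta^a$. I then pick a measuring rod $(w^0_{-i}, w^1_{-i})$ in $X^{S_b}$ with $c^0_i w^1_{-i} \sim c^1_i w^0_{-i}$, defining the unit step $\delta^b$ in the $V^{S_b}_i$ scale. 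Now I apply \textbf{A4(c)} with $a_i := c^0_i$, $b_i := c^1_i$, $c_i := c^{n-1}_i$, $d_i := c^n_i$, $(x_{-i}, y_{-i}) := (y^1_{-i}, y^0_{-i})$, $(w_{-i}, z_{-i}) := (w^1_{-i}, w^0_{-i})$, taking $X^{S_j} = X^{S_a}$ and $X^{S_k} = X^{S_b}$. All three premises of \textbf{A4(c)} become $\sim$ relations supplied by the standard-sequence equivalences in $X^{S_a}$ and by the defining equivalence of $(w^0, w^1)$ in $X^{S_b}$; applying \textbf{A4(c)} once as stated and once with the roles of $(a,b)$ and $(c,d)$ swapped yields both $\cgeq$ and $\cleq$, hence $c^{n-1}_i w^1_{-i} \sim c^n_i w^0_{-i}$. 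This means the $V^{S_b}_i$-step is also $\delta^b$ at every $n$, so $V^{S_a}_i(c^n_i) = n\delta^a$ and $V^{S_b}_i(c^n_i) = n\delta^b$, proving proportionality on the sequence with $\lambda^{ab}_i := \delta^a / \delta^b > 0$.

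The main obstacle is that \textbf{A4(c)} requires all four points $c^0_i w^1_{-i}, c^1_i w^0_{-i}, c^{n-1}_i w^1_{-i}, c^n_i w^0_{-i}$ to lie in $X^{S_b}$; a single rod $(w^0, w^1)$ chosen for the unit step need not preserve $X^{S_b}$-membership as $n$ grows. I would handle this by choosing, for each $n$, a step-specific rod $(w^0_n, w^1_n) \in X^{S_b}$ via restricted solvability and stitching the local equivalences together, invoking the closedness structural assumption to guarantee enough admissible rods and the uniqueness of the additive representation on $X^{S_b}$ to ensure that the $V^{S_b}_i$-length does not depend on the chosen rod. Finally, the density assumption together with the Archimedean axiom (\textbf{A9}) extends proportionality from the sequence points to the whole common domain, giving $V^{S_a}_i(z_i) = \lambda^{ab}_i V^{S_b}_i(z_i)$ for every $z_i \in X^{S_a}_i \cap X^{S_b}_i$.
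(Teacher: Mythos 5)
Your proposal is correct and follows essentially the same route as the paper: the paper's proof also uses \textbf{A4} (in its cross-set form, condition (c)) to show that a standard sequence on coordinate $i$ built in $X^{S_a}$ is simultaneously a standard sequence with respect to a measuring rod lying in $X^{S_b}$, and then counts steps and refines the sequence (density plus the Archimedean axiom) to obtain the constant ratio $\lambda^{ab}_i$ on the whole common domain. Your explicit handling of the rod-membership issue via step-specific rods is a minor refinement of a point the paper only notes parenthetically.
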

\begin{proof}
%
% Refer to the proof of theorem 2.4 (Holder) in KLST  for detailed description of why the limit exists etc.
%
  If the common domain of $V^{S_a}_i(z_i)$ and $V^{S_b}_i(z_i)$ is empty or contains just one point, the result is
  trivial. \footnote{Obviously, the common domain is not empty if we assume \textbf{SA4}, in which case $r^0_i$ and $r^1_i$ are in the common domain
    by assumption}  Assume that
  $i,j$ are essential on $X^{S_a}$ and $i,l$ are essential on $X^{S_b}$. First, we will establish that a standard sequence on coordinate $i$
  in $X^{S_a}$ is also a standard sequence in $X^{S_b}$ (provided all points of the sequence lie within a common domain of $V^{S_a}_i(z_i)$
  and $V^{S_b}_i(z_i)$). This follows from \textbf{A4}. Build any standard sequence $X^{S_a}_i$, say
  $\{\alpha^{k}_i : \alpha^k_iv_jx_{-ij} \sim \alpha^{k+1}_iw_jx_{-ij} \}$. Then,
  $\{\alpha^{k}_i : \alpha^k_it_lx_{-il} \sim \alpha^{k+1}_iu_lx_{-il}\}$ is a standard sequence in $S_b$, i.e. if exist $t_l,u_l \in X_l$
  such that $\alpha^k_it_lx_{-il} \sim \alpha^{k+1}_iu_lx_{-il}$ for some $k$, then by \textbf{A4}:
  \begin{equation*}
\left.
  \begin{aligned}
    \alpha^k_iv_jx_{-ij} & \sim \alpha^{k+1}_iw_jx_{-ij}\\
    \alpha^k_it_lx_{-il} & \sim \alpha^{k+1}_iu_lx_{-il}\\
    \alpha^{k+1}_iv_jx_{-ij} & \sim \alpha^{k+2}_iw_jx_{-ij}
  \end{aligned}
\right\}
\Rightarrow     \alpha^{k+1}_it_lx_{-il} \sim \alpha^{k+2}_iu_lx_{-il}
  \end{equation*}

  Pick two points $r^0_i$ and $r^1_i$ in the common domain and set $V^{S_a}_i(r^0_i) = V^{S_b}_i(r^0_i) = 0$. Assume we now have $V^{S_a}_i(r^1_i) =
v_a$ and $V^{S_b}_i(r^1_i) = v_b$. We need to show that for any point $z_i$ from the common domain of $V^{S_a}_i$ and $V^{S_b}_i$ we have
$V^{S_a}_i(z_i) = \lambda^{ab}_iV^{S_a}_i(z_i)$, where $\lambda^{ab}_i = \frac{v_b}{v_a}$.

  Build standard sequences from $r^0_i$ to $r^1_i$, and from $r^0_i$ to $z_i$. We have
  \begin{equation*}
    \begin{aligned}
      V^{S_a}_i(r^1_i) - V^{S_a}_i(r^0_i) & \approx  n[V^{S_a}_j(v_j) - V^{S_a}_j(w_j)] \\
      V^{S_a}_i(z_i) - V^{S_a}_i(r^0_i) & \approx  m[V^{S_a}_j(v_j) - V^{S_a}_j(w_j)].
    \end{aligned}
  \end{equation*}
  $V^{S_a}_i(r^0_i) = 0$, hence 
  \begin{equation*}
    V^{S_a}_i(z_i) \approx \frac{mV^{S_a}_i(r^1_i)}{n}.
  \end{equation*}
  Such $n$ and $m$ exist by the Archimedean axiom.
  By the argument above we get 
    \begin{equation*}
    \begin{aligned}
      V^{S_b}_i(r^1_i) - V^{S_b}_i(r^0_i) & \approx  n[V^{S_b}_j(t_l) - V^{S_b}_j(u_l)] \\
      V^{S_b}_i(z_i) - V^{S_b}_i(r^0_i) & \approx  m[V^{S_b}_j(t_l) - V^{S_b}_j(u_l)].
    \end{aligned}
  \end{equation*}
Similarly,
  \begin{equation*}
    V^{S_b}_i(z_i) \approx \frac{mV^{S_b}_i(r^1_i)}{n}.
  \end{equation*}
  By density, we can pick an arbitrary small step of the standard sequences, so the ratio $\frac{m}{n}$ converges to a limit. Thus, finally
  \begin{equation*}
    \begin{aligned}
      V^{S_a}_i(z_i) = \frac{V^{S_a}_i(r^1_i)}{V^{S_b}_i(r^1_i)}V^{S_b}_i(z_i) = \frac{v_a}{v_b}V^{S_b}_i(z_i) = \alpha^{ab}_iV^{S_b}_i(z_i).
    \end{aligned}
  \end{equation*}
\end{proof}

We proceed by picking common locations for all value functions. Since $r^0$ belongs to all $X^{S_a}$, we can set $V^a_i(r^0_i) = 0$. At this
point we can drop superscripts and say that we have representations $\lambda^a_iV_i + \ldots + \lambda^a_nV_n$ on each $X^{S_a}$, defining
also $\lambda^a_i := 0$ for variables $i$ that are inessential on the set $X^{S_a}$.

\subsection{Final rescaling}
\label{sec:final-rescaling}

By assumption, we have two points two points $r^0$ and $r^1$ such that $i \E^{r^0} j$ and $i \E^{r^1} j$ for every interacting $i,j$. From
this follows, that both points belong to every $X^{S_i}$. We can assume that $r^1_i \cgeq_i r^0_i$ for all
$i \in N$ (for variables not interacting with others we can take it to be so, for others see results in Section \ref{sec:shape-z_ij:-i}). Set $V_i(r^0_i) = 0$ for all $i \in N$. Choose some $j \in N$, such that $j$ is essential on at least one $X^{S_a}$, which has
two or more essential variables (including $j$). Set $V_j(r^1_j) = 1$. This sets unit and location for all functions $V_i$ such, that $i$ is essential on some $X^{S_a}$ where
at least one more coordinate is essential. For each such $V_i$, we now have $V_i(r^1_i) = k_i$ (thus $k_j = 1$). Define
$\phi_i := \frac{V_i}{k_i}$, for all $i \in N$.  Additive representations on various $X^{S_a}$ now have the form
$\lambda^a_1k_1\phi_1(x_1) + \ldots + \lambda^a_nk_n\phi_n(x_n)$. Finally, re-scale one more time by dividing everything by the sum of
coefficients:
\begin{equation*}
  \frac{\lambda^a_1k_1}{\sum_{i=1}^n \lambda^a_ik_i}\phi_1(x_1) + \ldots + \frac{\lambda^a_nk_n}{\sum_{i=1}^n \lambda^a_ik_i}\phi_n(x_n)
\end{equation*}
Denoting $\alpha^a_j = \frac{\lambda^a_jk_j}{\sum_{i=1}^n \lambda^a_ik_i}$, we arrive to:
\begin{equation*}
  \phi^a(x) := \alpha^a_1\phi_1(x_1) + \ldots + \alpha^a_n\phi_n(x_n),
\end{equation*}
note that $\sum_{i=1}^n \alpha_i = 1$.

Note that here we set $\phi_i(r^0_i) = 0 $ and $\phi_i(r^1_i) = 1$ for all $i$. As will be shown in the Section
\ref{sec:uniqueness}, this can be relaxed - origin and scaling factors can be chosen individually for each clique.

% if there is more than one clique, we repeat the process for each one, except for setting $r^{1A}_i = 1$, as the common unit is determined
% when aligning functions in the first clique. Instead, all functions in the second clique will need scaling by a proper $k_i$ after setting
% $r^{0A}_i = 0$, so that $r^{1A}_i = 1$. 

%%% Local Variables:
%%% mode: latex
%%% TeX-master: "nvar_simple"
%%% End:

\section{Constructing global representation on $X$}
\label{sec:global-repr-on-x}

At this stage we can show that representations $\phi^a(x) = \alpha^a_1\phi_1(x_1) + \ldots + \alpha^a_n\phi_n(x_n)$ assign the same value to
equivalence classes of $\cgeq$ in all $X^{S_i}$. To simplify the construction in the main theorem of this section, we introduce the
following lemma.

\begin{lemma}
\label{lm:eqiv-domin}
For every $X^{S_a}$ and every $z \in X^{S_a}$, such that $r^0 \cgt z$, we can find $z'$, such that $z' \sim z$, $z' \in X^{S_a}$, and $r^0_i
\cgeq_i z_i$. Likewise, for every $y \in X^{S_a}$, such that $y \cgt r^0$, we can find $y'$, such that $y' \sim y$, $y' \in X^{S_a}$, and
$y_i \cgeq_i r^0_i$. 
\end{lemma}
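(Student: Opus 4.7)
The plan is to mirror the coordinate-swap procedure of Theorem~\ref{theo:Vsa-global}, transitioning $z$ within its equivalence class toward $r^0$ until no coordinate strictly exceeds its $r^0$-counterpart. Let $P = \{i : z_i \cgt_i r^0_i\}$; I argue by induction on $|P|$. The base case $|P| = 0$ is immediate (take $z' = z$). For the inductive step I construct $z^* \in X^{S_a}$ with $z^* \sim z$ and at least one fewer coordinate $k$ satisfying $z^*_k \cgt_k r^0_k$.

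First, the set $N = \{j : z_j \clt_j r^0_j\}$ is non-empty: if $z_k \cgeq_k r^0_k$ for all $k$ then by weak separability (\textbf{A2}) together with the structural assumption excluding collapsed equivalent points, we would have $z \cgeq r^0$, contradicting $r^0 \cgt z$. Using the additive representation from Theorem~\ref{theo:Vsa-global} (normalized so that $V^{S_a}_i(r^0_i) = 0$ for each $i$), we have $V^{S_a}(z) < 0$, hence $\sum_{i \in P} V^{S_a}_i(z_i) < -\sum_{j \in N} V^{S_a}_j(z_j)$, so the available room in $N$ strictly exceeds the excess to be offloaded from $P$.

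Next, pick $i \in P$ that is $S_a$-maximal in $P$ and $j \in N$ that is $S_a$-minimal in $N$. Consider the rectangle $R_{ij}$ of points $a_i b_j z_{-ij}$ with $r^0_i \cleq_i a_i \cleq_i z_i$ and $z_j \cleq_j b_j \cleq_j r^0_j$. The $S_a$-extremal choice of $i$ and $j$ ensures $R_{ij} \subset X^{S_a}$: for every pair $(k,l)$ with $k \R_a l$, the $\SE{kl}$ condition is preserved under the allowed changes, because coordinates $k$ that are $S_a$-above $i$ are not in $P$ (by maximality) and coordinates $l$ that are $S_a$-below $j$ are not in $N$ (by minimality). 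Applying restricted solvability (\textbf{A8}) to the additive representation on $R_{ij}$, I obtain $a_i^*, b_j^*$ with $a_i^* b_j^* z_{-ij} \sim z$ and with either $a_i^* = r^0_i$ or $b_j^* = r^0_j$, whichever boundary of $R_{ij}$ is reached first along the equivalent path. In the first case $|P|$ strictly decreases; in the second, $j$ leaves $N$ but $i$ remains in $P$, and the procedure iterates with another $j' \in N$, available by the capacity estimate above.

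The main obstacle is verifying $X^{S_a}$ membership of every intermediate point; this rests on the $S_a$-extremal choice of the swapped coordinates and parallels the membership check in Theorem~\ref{theo:Vsa-global} that $y_i x_{-i} \in X^{S_a}$. The symmetric claim for $y \cgt r^0$ follows by an analogous argument with the roles of $P$ and $N$ exchanged: swap coordinates $i$ where $y_i \clt_i r^0_i$ against coordinates $j$ where $y_j \cgt_j r^0_j$ to drive the equivalent point $y'$ to coordinate-wise dominate $r^0$.
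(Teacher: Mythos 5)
Your overall route is the same as the paper's: the published proof of this lemma just replays the coordinate-swap procedure of Theorem~\ref{theo:Vsa-global}, symmetrically, moving $z$ rather than $r^0$. The gap is in your execution of the one step that carries all the content of the lemma, namely that the intermediate (and final) points stay in $X^{S_a}$. You pick $i \in P$ that is $S_a$-maximal in $P$ and $j \in N$ (your deficient set) that is $S_a$-minimal in $N$, and justify $R_{ij} \subset X^{S_a}$ by noting that coordinates $S_a$-above $i$ are outside $P$ and coordinates $S_a$-below $j$ are outside $N$. That protects the wrong constraints. When you \emph{decrease} the $i$-th coordinate, the conditions $k \R_a i$ for $k$ above $i$ can only become easier (the cone $\SE[z]{ki}$ shrinks); what is endangered are the conditions $i \R_a l$ for $l$ $S_a$-below $i$ (in representation terms, $\phi_i \geq \phi_l$). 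If some $l$ with $i \S_a l$ still lies in $P$, then driving $z_i$ down to $r^0_i$ brings $\phi_i$ to the common $r^0$-level while $\phi_l(z_l)$ is strictly above it, so the point leaves $\SE{il}$ and hence $X^{S_a}$. Concretely, with $1 \S_a 2 \S_a 3$ and $\phi_1(z_1)=0.5$, $\phi_2(z_2)=0.3$, $\phi_3(z_3)=-2$ relative to the common level $0$ at $r^0$, your rule swaps coordinates $1$ and $3$, and the equivalent point whose first coordinate is $r^0_1$ violates $1 \R 2$; it is outside $X^{S_a}$, which is exactly what the lemma must rule out. Symmetrically, \emph{increasing} $z_j$ endangers $k \R_a j$ for $k$ above $j$, not the pairs below $j$, so your choice of $j$ is also inverted.

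The correct mirror of Theorem~\ref{theo:Vsa-global} is the opposite extremality: decrease an $i \in P$ such that no coordinate $S_a$-below $i$ lies in $P$ (i.e.\ $S_a$-minimal within $P$), and increase a $j \in N$ such that no coordinate $S_a$-above $j$ lies in $N$ (i.e.\ $S_a$-maximal within $N$). This is precisely the pattern of steps 1--2 of that proof, where the coordinate being \emph{increased} is chosen so that every $\S_a$-greater coordinate already dominates the target; for a decrease the condition flips to the coordinates below. With that correction the rest of your argument (restricted solvability on the rectangle, at least one coordinate pinned to the corresponding coordinate of $r^0$ at each iteration, non-emptiness of the deficient set at every stage via pointwise monotonicity, Lemma~\ref{lm:4}) goes through and coincides with the paper's intended proof; the additive ``capacity estimate'' via $V^{S_a}$ is then not needed for termination, though it does no harm.
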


\begin{proof}
We can use the same procedure as was used in the proof of Theorem \ref{theo:Vsa-global}. For the case $y \cgt r^0$ the procedure is exactly
the same, while for $r^0 \cgt z$ it is symmetric, as we are moving $z$ this time, and not $r^0$. 
%
% A detailed proof is in the blue notebook
%
\end{proof}

Notice that as a result of the rescaling made in Section \ref{sec:final-rescaling}, points $r^0$ and $r^1$ have the same values (0 and 1) in all
$X^{S_a}$ (since values of all $\phi_i$ are equal, weights $\alpha^a_i$ sum up to 1, and weights of all inessential variables are zero). 

\begin{theorem}
\label{theo:eq-util-eqiv-class}
  Let each of $X^{S_a}$ and $X^{S_b}$ have at least two essential variables. Then for any $x \in X^{S_a}, y \in X^{S_b}$ we have $x \cgeq y$
  iff $\phi^a(x) \cgeq \phi^b(y)$. 
\end{theorem}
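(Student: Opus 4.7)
The plan is to reduce to the equivalence case $x \sim y \Leftrightarrow \phi^a(x) = \phi^b(y)$; the strict case then follows by density, Lemma \ref{lm:eqiv-domin}, and the monotonicity of $\phi^a, \phi^b$ on their respective domains (Theorem \ref{theo:Vsa-global}). Specifically, given $x \cgt y$ with $x \in X^{S_a}, y \in X^{S_b}$, density provides an intermediate $w$ with $x \cgt w \cgt y$, and if the equivalence case yields equality for $w \sim w$ across regions, strict inequality in $\phi$-values propagates from strict preference via in-region representation.

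For the equivalence case, I would first invoke Lemma \ref{lm:eqiv-domin} to replace $x$ by an equivalent $x' \in X^{S_a}$ coordinatewise comparable with $r^0$ (say $x'_i \cgeq_i r^0_i$ for all $i$ when $x \cgeq r^0$; the reversed case is symmetric), and $y$ by $y' \in X^{S_b}$ analogously. Theorem \ref{theo:Vsa-global} then gives $\phi^a(x)=\phi^a(x')$ and $\phi^b(y)=\phi^b(y')$, reducing us to showing $\phi^a(x')=\phi^b(y')$ for $x' \sim y'$ that both lie on the same coordinatewise side of $r^0$.

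Next, I would approximate $\phi^a(x')$ via a standard sequence inside $X^{S_a}$ from $r^0_i$ toward $x'_i$ along some $X^{S_a}$-essential coordinate $i$, using another $X^{S_a}$-essential coordinate $j$ as the mesh generator, so that $\phi^a(x') \approx M\bigl[\phi_j(v_j)-\phi_j(w_j)\bigr]$ for suitable step size (using A9 and density). Axiom A5 then transports this standard sequence through the common calibrators $r^0, r^1$ (guaranteed by the ``Geometry of $X$'' assumption, which puts both in $X^{S_a} \cap X^{S_b}$) into a standard sequence inside $X^{S_b}$ along some $X^{S_b}$-essential coordinate, yielding the same counting approximation for $\phi^b(x')$. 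Because Section \ref{sec:final-rescaling} fixes $\phi^{\cdot}(r^0)=0$, $\phi^{\cdot}(r^1)=1$, and because the $\phi_i$ have been globally identified across regions by Theorem \ref{theo:Vi-proportional}, the two approximations coincide; letting the mesh shrink (density plus A9) yields $\phi^a(x')=\phi^b(x')$. Running the same argument for $y'$ (or chaining $x'\sim y'$ through a compatible common region via A4 and Theorem \ref{theo:additive-XzSa}) gives $\phi^b(x')=\phi^b(y')$, and combining the two equalities delivers $\phi^a(x')=\phi^b(y')$.

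The main obstacle is justifying the transfer of the standard sequence from $X^{S_a}$ into $X^{S_b}$ via A5: each of the eight points appearing in the hypothesis of A5 must be shown to lie in the specific $X^{S_c}$ required by the axiom. This will require choosing the auxiliary points in the sequence so that the quadruples sit inside valid $X^{S_c}$ regions, exploiting the shared calibrators $r^0, r^1$ supplied by the ``Geometry of $X$'' assumption and repeated appeals to restricted solvability (A8) and coordinate ordering acyclicity (A3-ACYCL) to ensure intermediate points do not cross partition boundaries inconsistently. Once this transfer is in place, the rest of the argument is bookkeeping around the normalizations established in Section \ref{sec:final-rescaling}.
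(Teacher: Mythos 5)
Your overall strategy---calibrate at the shared points $r^0,r^1$, replace $x,y$ by coordinatewise-dominating equivalents via Lemma \ref{lm:eqiv-domin}, count sequence steps, transfer the count across regions with \textbf{A5}, and pass to the limit---is the same as the paper's. However, two steps as written do not work. First, a standard sequence along a single coordinate $i$ (with mesh generated on $j$) only measures $\phi_i$ on $X_i$; it cannot connect $r^0$ to $r^1$, nor to a general $x'$, since these differ from $r^0$ in several coordinates. So the approximation $\phi^a(x') \approx M\bigl[\phi_j(v_j)-\phi_j(w_j)\bigr]$ does not estimate the value of the whole point $x'$. The paper's proof relies on \emph{equispaced sequences} (Section \ref{sec:equispaced-sequences}), which move along several coordinates one at a time while keeping the increment constant, together with the extension of Lemma \ref{lm:A5} to such sequences; this is the missing tool behind the ``main obstacle'' you flag, and it is not reducible to placing quadruples of points into suitable sets $X^{S_c}$. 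Second, you speak of obtaining ``the same counting approximation for $\phi^b(x')$'', but $x'$ lies in $X^{S_a}$ and in general not in $X^{S_b}$, so $\phi^b(x')$ is not a meaningful quantity. The correct comparison is between the number of steps of the sequence in $X^{S_a}$ ending at $x'$ and the number of steps of the sequence in $X^{S_b}$ ending at $y'$; these are tied together through $x' \sim y'$ and the (extended) Lemma \ref{lm:A5}, and equality of the step ratios in the limit gives $\phi^a(x') = \phi^b(y')$ directly, without ever evaluating one region's representation at the other region's point.

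Your reduction of the strict case is also unsupported. Given $x \cgt y$ across regions, the density assumption only yields intermediate points obtained by perturbing a single coordinate, and there is no guarantee that the equivalence class of $x$ (or of $y$) meets the other region; restricted solvability would require $y$ to be sandwiched between two points differing in one coordinate, which is not automatic. So you cannot in general produce the point at which ``equality across regions'' is applied, and strict monotonicity of $\phi^a$ inside $X^{S_a}$ does not bridge the remaining cross-region comparison. The paper avoids this by handling $x \cgt y$ with the same step-counting argument: by \textbf{A5} the equispaced sequence from $r^0$ to $x$ requires strictly more steps than the one from $r^0$ to $y$, hence $\phi^a(x) > \phi^b(y)$, and the two implications together give the stated equivalence.
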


\begin{proof}
%
% Refer to the proof of theorem 2.4 (Holder) in KLST  for detailed description of why the limit exists etc.
% Also see there a method for reducing a standard sequence step more than in half 
% for that see also Wakker's book - III p.60 "Stage 5 - doubling the density of the grid"
%
  First take $x \sim y$, such that $x \in X^{S_a}$, and $y \in X^{S_b}$. If $x \sim y \sim r^0$ or $x \sim y \sim r^1$, the
  conclusion is immediate, so assume otherwise. Let $x \cgt r^0$. Using Lemma \ref{lm:eqiv-domin} we construct $x' \in X^{S_a}$ and $y' \in
  X^{S_b}$, such that $x \sim y \sim x' \sim y'$ and $x'_i \cgeq_i r^0_i$, while $y'_i \cgeq_i r^0_i$.

  Next, build equispaced sequences from $r^0$ to $r^1$ in $X^{S_a}$ and $X^{S_b}$, such that first steps of each sequence are equivalent
  (see details in Section \ref{sec:equispaced-sequences}). By \textbf{A5} the number of steps in both sequences is equal.% , hence, total increment
  % in utility per step is the same for both sequences (i.e. in $\phi^a$ and $\phi^b$).

  Finally, build sequences from $r^0$ to $x'$ and $y'$ (coordinate-wise dominance simplifies construction of the sequences). The number of
  steps again must be equal, hence the ratios between the number of steps it takes to reach $r^1$ and $x'$, and between the number of steps
  it takes to reach $r^1$ and $y'$ are equal, and hence taking the limit, we get $\phi^a(x) = \phi^a(x') = \phi^b(y') = \phi^b(y)$.

  The same approach applies for $x \cgt y$. By \textbf{A5} the number of steps in the equispaced sequence from $r^0$ to $x$ must be greater,
  than in the sequence from $r^0$ to $y$. Hence also $\phi^a(x) > \phi^b(x)$.

  We now have $x \sim y \Rightarrow \phi^a(x) = \phi^b(y)$ and $x \cgt y \Rightarrow \phi^a(x) > \phi^b(y)$. This implies that $x \cgeq y
  \iff \phi^a(x) \geq \phi^b(y)$.
\end{proof}

At this point we can define value functions on the areas with a single essential variable. Theorem \ref{theo:eq-util-eqiv-class} establishes
that all areas with two or more essential coordinates assign the same value to points from the same equivalence class. Define the value
assigned to an equivalence class belonging to an area with a single essential coordinate to be the same, as it has in some area with two or
more essential coordinates. Such equivalence classes must exist (e.g. those containing points $r^0$ and $r^1$). Finally, define
$\alpha^a_i = 1$ for the essential coordinate $i$ and $\alpha^a_j = 0$ for all other $j$. If after this procedure there remain points in
some $X_i$, for which $\phi_i$ is not yet defined, then we have some equivalence classes to which none of the representations $\phi^k$
assign any value. Since all equivalence classes found in the sets $X^{S_k}$, which have two or more essential variables, by now have a
defined value, such classes are entirely within sets, that have only a single essential variable. Hence, we can trivially extend the
representations, and get also $\phi_i(x_i) > \phi_i(y_i)$ iff $x_i \cgeq_i y_i$ (see Lemma \ref{lm:single-essential-equivalence} which shows
that functions are well-defined).

\begin{lemma}
  \label{lm:A-NA}
  Given $\S_a$ and $\S_b$, such that exists $A \subset N$, for which we have $i \R_a j$ iff $i \R_b j$ for all
  $i \in A, j \in N \setminus A$, the following is true
  \begin{equation*}
    \sum _{i \in A} \alpha^{a}_i = \sum _{i \in A} \alpha^{b}_i.
  \end{equation*}
\end{lemma}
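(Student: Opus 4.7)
My plan is to identify a point $y^* \in X$ where both additive formulas evaluate to the partial sums on each side of the equality, and then close using Theorem~\ref{theo:eq-util-eqiv-class}. Take $y^*_i = r^1_i$ for $i \in A$ and $y^*_i = r^0_i$ for $i \in N \setminus A$. By the normalisation in Section~\ref{sec:final-rescaling} we have $\phi_i(r^0_i) = 0$ and $\phi_i(r^1_i) = 1$, so on any region $X^{S_c}$ that contains $y^*$ the additive formula gives
$\phi^c(y^*) = \sum_{i \in A} \alpha^c_i$.

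The key observation is that at $y^*$ the local coordinate ordering $\R^{y^*}$ places every element of $A$ strictly above every element of $N \setminus A$ with equivalences within each side; thus $y^* \in X^{S_a}$ is equivalent to the boundary of $\S_a$ placing $A$ entirely above $N \setminus A$, and by the hypothesis the same statement holds for $\S_b$. In this \emph{clean} configuration $y^* \in X^{S_a} \cap X^{S_b}$, so Theorem~\ref{theo:eq-util-eqiv-class} yields $\phi^a(y^*) = \phi^b(y^*)$, i.e.\ the desired identity. The dual configuration where $A$ sits strictly below $N \setminus A$ is handled symmetrically by the reflected point $\bar y$ (with $\bar y_i = r^0_i$ on $A$, $\bar y_i = r^1_i$ on $N \setminus A$) together with the identity $\sum_{i \in A} \alpha^c_i = 1 - \sum_{j \in N \setminus A} \alpha^c_j$.

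For a general mixed boundary, $y^*$ need not lie in either region and I would reduce to the clean case by chaining. Since $\R_a$ and $\R_b$ share the boundary, they can be joined by a finite sequence of adjacent transpositions, each swapping two coordinates on the same side of $A / (N \setminus A)$. An $A$-internal transposition is only possible between coordinates sharing the same boundary profile $C_i = \{j \in N \setminus A : j \R_a i\}$ (otherwise a chain through $N \setminus A$ forces the internal order by transitivity); transpositions within $N \setminus A$ leave every $\alpha^a_i$ with $i \in A$ untouched. For a within-$A$ transposition of $(i_1, i_2)$ with common profile $C$, invariance of $\sum_{i \in A} \alpha_i$ follows by applying the clean-case argument to the sub-problem: freeze the coordinates outside $\{i_1, i_2\}$ at values dictated by $C$ (elements of $C$ at $r^1$, other $N \setminus A$ coordinates at $r^0$, and the remaining $A$ coordinates at values keeping $\{i_1, i_2\}$ in a common ``top slab'' of $A$ for both orderings), vary only $i_1, i_2$ between $r^0$ and $r^1$, and invoke Theorem~\ref{theo:eq-util-eqiv-class} on the resulting two-coordinate slice.

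The main obstacle is this reduction in the mixed-boundary case: verifying that the auxiliary points exist in both regions both before and after each single transposition, and that the slice argument truly isolates the contribution of $\{i_1, i_2\}$ rather than contaminating it with residual contributions from the frozen coordinates. This uses the structural assumption that $r^0, r^1$ belong to every $X^{S_c}$, restricted solvability to produce the intermediate points, and the closedness of the $\E$-relation to guarantee the boundary equivalence of $i_1$ and $i_2$ required to place the slice inside both pre- and post-transposition regions.
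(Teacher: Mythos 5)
Your central step---evaluating both additive formulas at $r^1_Ar^0_{-A}$, using the normalisation $\phi_i(r^0_i)=0$, $\phi_i(r^1_i)=1$, and closing with Theorem~\ref{theo:eq-util-eqiv-class}---is exactly the paper's proof; the paper simply asserts that $r^1_Ar^0_{-A}$ belongs to $X^{S_a}\cap X^{S_b}$ and stops. Your observation that this membership is only automatic in the ``clean'' configuration (every $i\in A$ strictly above every $j\in N\setminus A$, or dually below, handled by the reflected point) is correct: at $r^1_Ar^0_{-A}$ every interacting cross pair satisfies $i \S^{r^1_Ar^0_{-A}} j$, so any required relation $j \R_a i$ with $j\in N\setminus A$, $i\in A$ would be violated there. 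The paper glosses over this, even though it later invokes the lemma (in Lemma~\ref{lm:theta-ij-eq}) for singletons $A=\{k\}$ sitting in the middle of the ranking, so the mixed configuration genuinely arises.

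Where your proposal falls short of a complete proof is precisely the mixed-boundary reduction, and you flag this yourself: the transposition-and-slice chaining is only sketched, and the two-coordinate ``slice'' does not obviously isolate $\alpha_{i_1}+\alpha_{i_2}$, since Theorem~\ref{theo:eq-util-eqiv-class} compares the full values $\phi^a,\phi^b$, not values restricted to a sub-product. A simpler completion keeps the point-evaluation idea: reading the hypothesis as agreement of the cross relations in both directions, it fixes for both orders which coordinates of $N\setminus A$ lie strictly above each coordinate of $A$, and by acyclicity it also pins down the relative order of any two $A$-coordinates separated by some $j\notin A$; hence $A$ splits into the same consecutive blocks in both rankings, and each set $H$ consisting of all coordinates ranked at or above a given block is the same for $\S_a$ and $\S_b$ and lies strictly above its complement in both. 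Applying the clean-case argument to each such $H$ (with the point $r^1_Hr^0_{-H}$) and telescoping the block differences yields $\sum_{i\in A}\alpha^a_i=\sum_{i\in A}\alpha^b_i$ with no transposition chain; only cross-boundary ties $i \E j$ with $i\in A$, $j\notin A$ require the separate closedness/non-interaction discussion. As it stands, your argument fully covers the case the paper actually proves, and your extension beyond it is an honest but unfinished sketch.
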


\begin{proof}
  Consider $r^1_Ar^0_{-A}$, which belongs both to $X^{S_a}$ and $X^{S_b}$. By Theorem \ref{theo:eq-util-eqiv-class} and the above
  construction of value functions for the areas with a single essential coordinate, we have $\phi^a(r^1_Ar^0_{-A}) = \phi^b(r^1_Ar^0_{-A})$,
  hence $\sum _{i \in A} \alpha^{a}_i \phi(r^1_i) = \sum _{i \in A} \alpha^{b}_i \phi(r^1_i)$, from which the conclusion follows as
  $\phi_i(r^1_i) = 1$ for all $i \in N$.
\end{proof}

Now we can proceed with construction of a unique \emph{capacity} $\nu: 2^N \rightarrow \mathbb{R}$ from coefficients $\alpha^a_i$ which
exist on various $X^{S_a}$. As shown in \cite{wakker1989additive}, the condition of Lemma \ref{lm:A-NA} is a necessary requirement for
this. Capacity $\nu$ also has a unique M{\"o}bius transform $m: 2^N \rightarrow \mathbb{R}$ (see definition in Section \ref{sec:model}).

We can now construct a representation very similar to the Choquet integral. In order to so, let us define the following function first:
$\Phi_{\wedge}(x, A) := \phi_i(x_i)$ for $i$ such that $j \R^x i$ for all $j \in A \setminus i$, in case when this is true for several $i$,
any can be chosen. We can now construct a global value function (cf. Section \ref{sec:model}):
\begin{equation}
  \label{eq:mobuis-proto-choquet}
  \phi(x) := \sum _{A \in N} m(A) \Phi_{\wedge}(x,A).
\end{equation}
It is easy to see that for each $x \in X$ and every $X^{S_i}$, such that $x \in X^{S_i}$, we have $\phi^a(x) = \phi(x)$.  Now we can show
that $\phi_i(x_i) = \phi_j(x_j)$ whenever $i \E^x j$, providing $i,j$ interact.

\begin{lemma}
  \label{lm:theta-ij-eq}
  For any non-extreme $x \in X$ it holds:
  \begin{equation*}
    i \E^x j \Rightarrow \phi_i(x_i) = \phi_j(x_j),
  \end{equation*}
unless $i$ and $j$ do not interact.
\end{lemma}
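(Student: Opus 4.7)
The plan is to exploit that a point $x$ with $i \E^x j$ lies at the boundary between two ``neighbouring'' regions $X^{\S_a}$ and $X^{\S_b}$ whose partial orders differ only in the orientation of the $i,j$ pair, and to derive $\phi_i(x_i)=\phi_j(x_j)$ from the equality of the two additive representations at the common point $x$. First I would construct the two neighbouring orders. Since $i,j$ interact, fix a point $w$ with, say, $i \S^w j$, and set $\S_a := \S^w$. Swapping $i$ and $j$ inside $\S_a$ produces another partial order $\S_b$; because the only altered relation is $i \S j \leadsto j \S i$, acyclicity \textbf{A3-ACYCL} is preserved, and closedness together with restricted solvability \textbf{A8} realise $\S_b$ as $\S^{z'}$ for some $z'$. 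The hypothesis $i \E^x j$ supplies both $i \R^x j$ and $j \R^x i$, while for every pair $(k,l)$ outside $\{i,j\}$ the conditions required for $x \in X^{\S_a}$ and $x \in X^{\S_b}$ coincide with those that already hold at $w$, so $x \in X^{\S_a} \cap X^{\S_b}$.

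Each of $X^{\S_a}$ and $X^{\S_b}$ has $\{i,j\}$ essential, hence at least two essential coordinates, so Theorem~\ref{theo:eq-util-eqiv-class} gives $\phi^a(x)=\phi^b(x)$, i.e.
\begin{equation*}
\sum_{k \in N} (\alpha^a_k - \alpha^b_k)\,\phi_k(x_k) = 0.
\end{equation*}
Lemma~\ref{lm:A-NA} applied to each singleton $\{k\}$, $k \neq i,j$ (the two orders agree on every relation touching such a $k$) yields $\alpha^a_k = \alpha^b_k$. Applying it to $\{i,j\}$ itself (the orders agree on every cross-relation between $\{i,j\}$ and its complement, differing only on the internal pair) gives $\alpha^a_i + \alpha^a_j = \alpha^b_i + \alpha^b_j$. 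Substituting collapses the displayed sum to
\begin{equation*}
(\alpha^a_i - \alpha^b_i)\bigl(\phi_i(x_i) - \phi_j(x_j)\bigr) = 0.
\end{equation*}

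To conclude I would argue $\alpha^a_i \neq \alpha^b_i$ whenever $i,j$ interact. Choosing $\S_a$ so that $i$ is $\R_a$-minimal among coordinates comparable to $i$ and $j$, every set $A \supseteq \{i,j\}$ with $i$ as its $\R_a$-minimum contributes $m(A)$ to $\alpha^a_i$ but, after the swap, contributes to $\alpha^b_j$ rather than $\alpha^b_i$; a direct bookkeeping then identifies $\alpha^a_i - \alpha^b_i$ with a specific sum of Möbius masses $m(A)$ over such $A$, which is nonzero precisely when $i$ and $j$ sit in a common Möbius clique --- the analytical counterpart of interaction via the partition ${\cal I}$ introduced before Theorem~\ref{theo:uniqueness}.

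The main obstacle is the geometric construction in the first paragraph: exhibiting realisable $\S_a, \S_b$ both containing $x$ in the corresponding $X^{\S_\bullet}$ and differing only on the $i,j$ pair requires a careful interplay of \textbf{A3-ACYCL}, the closedness structural assumption, and restricted solvability. A secondary subtlety is ensuring the correct sign-structure so that $\alpha^a_i \neq \alpha^b_i$ for the chosen swap; if a first pair $(\S_a,\S_b)$ gives zero difference, the argument has to iterate over successive adjacent swaps in an interaction clique so that the net coefficient shift concentrated on the $(i,j)$ pair is demonstrably nonzero, and this is where the structural geometry assumption really pays off.
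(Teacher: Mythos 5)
Your first two paragraphs follow the paper's own route: the paper also places $x$ in two regions $X^{S_a},X^{S_b}$ whose orders differ only by the $i,j$ swap, invokes Theorem~\ref{theo:eq-util-eqiv-class} and Lemma~\ref{lm:A-NA} to get $\alpha^a_k=\alpha^b_k$ for $k\neq i,j$ and $\alpha^a_i+\alpha^a_j=\alpha^b_i+\alpha^b_j$, and collapses $\phi^a(x)=\phi^b(x)$ to the alternative ``$\phi_i(x_i)=\phi_j(x_j)$ or $\alpha^a_i=\alpha^b_i,\ \alpha^a_j=\alpha^b_j$''. Up to that point the proposal is fine (the paper is equally terse about why the two adjacent orders are both realized at $x$).

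The gap is in your last step. You try to close the argument by showing that interaction forces $\alpha^a_i\neq\alpha^b_i$, identifying $\alpha^a_i-\alpha^b_i$ with a sum of M\"obius masses $m(A)$ over $A\supseteq\{i,j\}$ whose remaining elements are ranked above $i,j$, and asserting this is ``nonzero precisely when $i$ and $j$ sit in a common M\"obius clique''. That implication fails: M\"obius masses can be negative, so this particular partial sum can vanish even though some $m(A)\neq 0$ with $\{i,j\}\subseteq A$ (e.g.\ $n=3$, $m(\{3\})=1$, $m(\{1,2\})=0.3$, $m(\{1,2,3\})=-0.3$: for the swap of $1,2$ below coordinate $3$ the difference is $m(\{1,2\})+m(\{1,2,3\})=0$, yet $1$ and $2$ interact). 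Your proposed fix --- iterating over other adjacent swaps in the clique --- does not repair this, because those swaps produce identities at \emph{other} boundary points; if $x$ lies only in $X^{S_a}$ and $X^{S_b}$ and $\alpha^a_i=\alpha^b_i$, no information from elsewhere recovers $\phi_i(x_i)=\phi_j(x_j)$ at $x$. The correct conclusion in that degenerate branch is the lemma's escape clause, and this is exactly how the paper proceeds: assuming the weights coincide, it repeats the comparison over all pairs of regions to conclude $m(B)=0$ for every $B\supseteq\{i,j\}$, and then --- the substantive second half of the paper's proof, absent from your proposal --- uses the decomposition (\ref{eq:mobuis-proto-choquet}), grouping the terms by whether $A$ contains $i$, $j$, both or neither, to show that $ij$-triple cancellation then holds on all of $X$, i.e.\ $i$ and $j$ do not interact. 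Note also that invoking the partition ${\cal I}$ / ``M\"obius clique'' as a proxy for interaction at this stage is circular: the equivalence between interaction and non-vanishing M\"obius mass on supersets of $\{i,j\}$ (Lemma~\ref{lm:null-mobius}) is established later and rests on precisely the $3$C-grouping argument you omit.
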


\begin{proof}
  Assume $x \in X^{S_a}, x \in X^{S_b}$ such that $k S_a l$ whenever $k S_b l$ for all $k,l \in N$ apart from $i,j$, for which we have
  $i S_a j $ and $j S_b i$. By Theorem \ref{theo:eq-util-eqiv-class}, $\phi^a(x) = \phi^b(x)$ and by Lemma \ref{lm:A-NA}, it is trivial to
  show that $\alpha^a_k = \alpha^b_k$ for all $k \neq i,j$, and $\alpha^a_i + \alpha^a_j = \alpha^b_i + \alpha^b_j$.

  We have $\alpha^a_i\phi_i(x_i) + \alpha^a_j\phi_j(x_j) = \alpha^b_i\phi_i(x_i) + \alpha^b_j\phi_j(x_j)$ (other sum components cancel
  out). Dividing by $\alpha^a_i + \alpha^a_j = \alpha^b_i + \alpha^b_j$, we get a convex combination of $\phi_i(x_i)$ and $\phi_j(x_j)$ on
  both sides. From this follows that either $\phi_i(x_i) = \phi_j(x_j)$ or $\alpha^a_i = \alpha^b_i$ and $\alpha^a_j = \alpha^b_j$. 

  Assume the latter. Repeating this operation for all possible combinations of $X^{S_k}$ and $X^{S_l}$ would lead us to the conclusion that
  $m(B) = 0$ for all $B \supset \{i,j\}$, as weights $\alpha^k_i, \alpha^l_i, \alpha^k_j, \alpha^l_j$ do not change when we move from
  $X^{S_k}$ to $X^{S_l}$, and, accordingly, from $\phi^k$ to $\phi^l$. The conclusion results from equation (\ref{eq:mobuis-proto-choquet}).

  Finally, we can show that this implies that $i$ and $j$ do not interact. This means that $ij$-triple cancellation -
    \begin{equation*}
      \left.
        \begin{aligned}
          a_ip_jz_{-ij} & \cleq b_iq_jz_{-ij}\\
          a_ir_jz_{-ij} & \cgeq b_is_jz_{-ij}\\
          c_ip_jz_{-ij} & \cgeq d_iq_jz_{-ij} 
        \end{aligned}
      \right\} \Rightarrow c_ir_jz_{-ij} \cgeq d_is_jz_{-ij},
    \end{equation*}
holds for all $a_i,b_i, c_i, d_i \in X_i$, $p_j,q_j,r_j,s_j \in X_j$, and all $z_{-ij} \in X_{-ij}$. 
To show this, use equation (\ref{eq:mobuis-proto-choquet}) to write the values for all involved points, grouping the sum components as
follows. For example, for $a_ip_jz_{-ij}$:
\begin{equation*}
  \begin{aligned}
    \phi(a_ip_jz_{-ij}) = & \sum _{\substack{A \supset i \\ A \not \supset j}} m(A) \Phi_{\wedge}(a_ip_jz_{-ij},A) + \sum _{\substack{A
        \supset j \\ A \not \supset i}} m(A) \Phi_{\wedge}(a_ip_jz_{-ij},A) + \\ 
    + & \sum _{A \not \supset i,j} m(A) \Phi_{\wedge}(a_ip_jz_{-ij},A) + \sum _{A \supset i,j} m(A) \Phi_{\wedge}(a_ip_jz_{-ij},A).
  \end{aligned}
\end{equation*}
Notice, that due to the above argument, we have $\sum _{A \supset i,j} m(A) \Phi_{\wedge}(a_ip_jz_{-ij},A) = 0$. Also notice that $\sum
_{\substack{A \supset j \\ A \not \supset i}} m(A) \Phi_{\wedge}(a_ip_jz_{-ij},A)$ does not depend on $a_i$, and $\sum _{A \not \supset i,j}
m(A) \Phi_{\wedge}(a_ip_jz_{-ij},A)$ does not depend on $a_i$ or $p_j$. 

Assume, towards a contradiction, that $d_is_jz_{-ij} \cgt c_ir_jz_{-ij}$. Writing values for all points, and summing the first and the third, and
the second and the fourth inequalities gives: 
\begin{equation*}
  \begin{aligned}
    \sum _{\substack{A \supset i \\ A \not \supset j}} m(A) [\Phi_{\wedge}(a_ip_jz_{-ij},A) + \Phi_{\wedge}(d_ip_jz_{-ij},A)] & \leq \sum
    _{\substack{A \supset i \\ A \not \supset j}} m(A) [\Phi_{\wedge}(b_ip_jz_{-ij},A) + \Phi_{\wedge}(c_ip_jz_{-ij},A)] \\
\sum _{\substack{A \supset i \\ A \not \supset j}} m(A) [\Phi_{\wedge}(a_ip_jz_{-ij},A) + \Phi_{\wedge}(d_ip_jz_{-ij},A)] & > \sum
    _{\substack{A \supset i \\ A \not \supset j}} m(A) [\Phi_{\wedge}(b_ip_jz_{-ij},A) + \Phi_{\wedge}(c_ip_jz_{-ij},A)],
  \end{aligned}
\end{equation*}
which is a contradiction. Hence, $ij$-triple cancellation holds for all $a_i,b_i, c_i, d_i \in X_i$, $p_j,q_j,r_j,s_j \in X_j$, and all
$z_{-ij} \in X_{-ij}$, and thus $i$ and $j$ do not interact.
\end{proof}

\begin{lemma}
  If for some $z \in X$ we have $i \S^z j$, then $\phi_i(z_i) > \phi_j(z_j)$. 
\end{lemma}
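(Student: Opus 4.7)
The plan is to argue by contradiction: suppose $i \S^z j$ but $\phi_i(z_i) \leq \phi_j(z_j)$, and derive that $j \R^z i$ must then hold, which directly contradicts the definition of $i \S^z j$ as NOT $j \R^z i$.

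First I would propagate the assumed inequality to the whole region $\NW[z]{ij}$. For any $x = x_ix_jz_{-ij} \in \NW[z]{ij}$ we have $z_i \cgeq_i x_i$ and $x_j \cgeq_j z_j$, so since $\phi_i, \phi_j$ represent $\cgeq_i, \cgeq_j$ on $X_i, X_j$, we obtain the chain $\phi_i(x_i) \leq \phi_i(z_i) \leq \phi_j(z_j) \leq \phi_j(x_j)$. Hence $\phi_i(x_i) \leq \phi_j(x_j)$ holds throughout $\NW[z]{ij}$.

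The core step is to show that $\phi$ restricted to the two-dimensional slice $\NW[z]{ij}$ (with $x_{-ij}=z_{-ij}$ fixed) decomposes as $\phi(x_ix_jz_{-ij}) = f(x_i) + g(x_j) + c$. Using the M\"obius form $\phi(x) = \sum_A m(A)\Phi_\wedge(x,A)$ from (\ref{eq:mobuis-proto-choquet}), I would split the subsets $A \subseteq N$ into four classes according to how they meet $\{i,j\}$. Sets disjoint from $\{i,j\}$ contribute a constant; sets meeting $\{i,j\}$ in exactly one index contribute a function of either $x_i$ or $x_j$ only, since the remaining coordinates are pinned at $z_{-ij}$. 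For $A \supset \{i,j\}$, I would argue that the $\R^x$-minimum of $A$ stays on a single side of $\{i,j\}$ as $(x_i,x_j)$ varies in $\NW[z]{ij}$, so $\Phi_\wedge(x,A)$ again depends on only one of $x_i, x_j$. Summing gives the claimed additive decomposition, and an additive representation on $\NW[z]{ij}$ is equivalent to $ji$-triple cancellation there, i.e.\ $j \R^z i$ --- the desired contradiction.

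The principal obstacle is the case $A \supset \{i,j\}$ in the M\"obius split: the $\R^x$-ordering between $\{i,j\}$ and between coordinates outside $\{i,j\}$ may in principle shift as $x_j$ varies, because the defining sets $\SE[x]{lk}$ involve $x_{-lk}$, which contains $x_j$. Ruling out switches that introduce a genuine $(x_i,x_j)$-cross term should follow from acyclicity (\textbf{A3-ACYCL}) combined with Lemma \ref{lm:theta-ij-eq}: any potential switch of the $\R^x$-minimum happens precisely on the boundary $\phi_l(x_l) = \phi_k(x_k)$ between two candidates, where Lemma \ref{lm:theta-ij-eq} forces the two candidate values to coincide, so $\Phi_\wedge(x,A)$ does not jump. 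A careful boundary analysis mirroring the weight-counting argument in the proof of Lemma \ref{lm:theta-ij-eq} together with Lemma \ref{lm:A-NA} should then close the proof.
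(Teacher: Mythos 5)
Your overall strategy (assume $\phi_i(z_i) \leq \phi_j(z_j)$, propagate the inequality over $\NW[z]{ij}$, show $\phi$ is additively decomposable in $(x_i,x_j)$ there via (\ref{eq:mobuis-proto-choquet}), conclude $ji$-triple cancellation, hence $j \R^z i$, contradicting $i \S^z j$) is a genuinely different route from the paper's, and its easy parts are sound: the propagation step is fine, and additive decomposability on the slice does imply triple cancellation. But the decisive step is exactly where your sketch stops being a proof. To get decomposability you must show that each cross term $\Phi_{\wedge}(\cdot,A)$ with $A \supseteq \{i,j\}$ and $m(A) \neq 0$ reduces on $\NW[z]{ij}$ to a function of $x_i$ alone (morally, to $\min(\phi_i(x_i), c_A)$). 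By its definition, $\Phi_{\wedge}(x,A)$ is the $\phi$-value of the $\R^x$-minimal coordinate of $A$; identifying that coordinate with the coordinate of smallest $\phi$-value requires precisely the implication $k \S^x l \Rightarrow \phi_k(x_k) > \phi_l(x_l)$ for the pairs inside $A$ at the points $x$ of the region --- that is, the very lemma you are proving, applied at other points and to other pairs. Lemma \ref{lm:theta-ij-eq} covers only the $\E$ case, and your ``no jump at the switching boundary'' observation is not sufficient: continuity of $\Phi_{\wedge}(\cdot,A)$ where the $\R^x$-minimizer changes does not rule out a genuine non-additive dependence on both $x_i$ and $x_j$ (nothing established so far prevents the relation from selecting $j$ as minimal at points where $\phi_j(x_j) > \phi_i(x_i)$, which would produce a max-type cross term). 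So the core of the argument is circular, and the closing appeal to ``a careful boundary analysis'' and Lemma \ref{lm:A-NA} does not supply the missing ingredient.

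The paper avoids all of this by using the ``Closedness'' structural assumption, which you never invoke: since $i \S^z j$ the coordinates $i,j$ interact, and Closedness yields a point $x_ix_jz_{-ij}$ with $i \E^{x_ix_jz_{-ij}} j$, $z_i \cgeq_i x_i$ and $x_j \cgeq_j z_j$; Lemma \ref{lm:theta-ij-eq} gives $\phi_i(x_i) = \phi_j(x_j)$, and since $\phi_i, \phi_j$ represent $\cgeq_i, \cgeq_j$ and no two distinct values are ``collapsed'' (structural assumption), the comparisons are strict, giving $\phi_i(z_i) > \phi_i(x_i) = \phi_j(x_j) > \phi_j(z_j)$. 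To salvage your route you would essentially have to establish the $\R^x$/$\phi$-order correspondence for all interacting pairs simultaneously, which amounts to re-proving the lemma; the short way out is the boundary-point sandwich via Closedness.
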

\begin{proof}
  It is easy to verify that (due to the ``Closedness'' assumption), there exists $x_{ij}z_{-ij}$, such that $i \E^{x_{ij}z_{-ij}} j$ and
  $z_i \cgeq_i x_i$, whereas $x_j \cgeq_j z_j$. Since $\phi_i$ represents $\cgeq_i$, by Lemma \ref{lm:theta-ij-eq}, and the fact that
  $\cgeq_i$ is asymmetric (due to the structural assumption), we have $\phi_i(z_i) > \phi_i(x_i) = \phi_j(x_j) > \phi_j(z_j)$.
\end{proof}

Now we have that $[i \E^x j] \Rightarrow [\phi_i(x_i) = \phi_j(x_j)]$ for interacting $i,j$, and $[i \S^x j] \Rightarrow [\phi_i(x_i) > \phi_j(x_j)]$, hence we
conclude that $[i \R^x j] \iff [\phi_i(x_i) \cgeq \phi_j(x_j)]$ for all interacting $i,j$, which allows us to finally rewrite
(\ref{eq:mobuis-proto-choquet}) as the Choquet integral:

\begin{equation*}
  \phi(x) = \sum _{A \subset N} m(A) \min _{i \in A} f_i(x_i).
\end{equation*}

To summarize the results of this section we state the following lemma:

\begin{lemma}
  \label{lm:representation-without-extreme}
Let $X'_i := X \setminus \{\text{ maximal and minimal elements of } X\}$. Let $X' := X'_1 \times \ldots \times X'_n$. Assume that at least
one of the sets $X^{'S_a}$, defined as previously, has more than two essential variables. Then for every $x,y \in X'$ we have 
\begin{equation*}
  x \cgeq y \iff \phi(x) \geq \phi(y).
\end{equation*}
\end{lemma}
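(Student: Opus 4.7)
The plan is to stitch together the results of Sections 5--7; the lemma is essentially a recap of what has been built piecewise. First, I would observe that every $x \in X'$ lies in a well-defined partition element $X^{S_a}$: since no coordinate of $x$ is extreme, the relations $\R^x$, $\S^x$, $\E^x$ are unambiguously defined on every interacting pair, and by \textbf{A3-ACYCL} they generate an acyclic partial order $\S^x$ that picks out a unique index $a$. Moreover, as remarked immediately after equation (\ref{eq:mobuis-proto-choquet}), the global function $\phi$ agrees with the local representation $\phi^a$ on each $X^{S_a}$.

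Given $x,y \in X'$, I would then case-split. If $x,y$ lie in the same $X^{S_a}$, Theorem \ref{theo:Vsa-global} together with the rescaling of Section \ref{sec:final-rescaling} gives $x \cgeq y \iff \phi^a(x) \geq \phi^a(y)$, which is $\phi(x) \geq \phi(y)$ by the agreement just noted. If $x \in X^{S_a}$ and $y \in X^{S_b}$ with $a \neq b$, and both areas carry at least two essential coordinates, Theorem \ref{theo:eq-util-eqiv-class} gives $x \cgeq y \iff \phi^a(x) \geq \phi^b(y) \iff \phi(x) \geq \phi(y)$. For the remaining case in which at least one of the two areas has only a single essential coordinate, I would invoke the extension performed in the text immediately after Theorem \ref{theo:eq-util-eqiv-class}: the value on such an area is defined so as to match the value a multi-essential area assigns to the same equivalence class, and the hypothesis of the lemma guarantees that at least one multi-essential $X^{'S_a}$ exists to anchor the alignment.

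It remains to confirm that the Möbius form $\phi(x)=\sum_{A\subset N} m(A)\Phi_\wedge(x,A)$ can be rewritten as the Choquet integral $\sum_{A\subset N} m(A)\min_{i\in A}\phi_i(x_i)$. Here I would appeal to Lemma \ref{lm:theta-ij-eq} and the subsequent lemma on $\S^x$, which together translate $i \E^x j$ into $\phi_i(x_i)=\phi_j(x_j)$ and $i \S^x j$ into $\phi_i(z_i) > \phi_j(z_j)$ whenever $i,j$ interact. Combined, these give $i \R^x j \iff \phi_i(x_i) \geq \phi_j(x_j)$ on every clique $A$ with $m(A)\neq 0$, whence $\Phi_\wedge(x,A) = \min_{i\in A}\phi_i(x_i)$ on exactly those $A$ that contribute to the sum.

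The main obstacle I expect is the bookkeeping at the seams of the partition: a point $x$ with $i \E^x j$ for some interacting pair lies in two distinct $X^{S_a}$'s simultaneously, so one must verify $\phi^a(x)=\phi^b(x)$ explicitly. This reduces to Lemma \ref{lm:A-NA} and Theorem \ref{theo:eq-util-eqiv-class}, but the check has to be carried out on each clique separately. The single-essential-coordinate case also requires care: one needs the extended $\phi_i$ to be monotone in $\cgeq_i$ and consistent with $\cgeq$ across partition boundaries, which is where the auxiliary Lemma \ref{lm:single-essential-equivalence} (alluded to in the construction) will be invoked.
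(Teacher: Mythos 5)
Your proposal is correct and follows essentially the same route as the paper, which states this lemma explicitly as a summary of Section \ref{sec:global-repr-on-x} and offers no separate proof beyond the results you cite: Theorem \ref{theo:Vsa-global}, Theorem \ref{theo:eq-util-eqiv-class}, the single-essential-coordinate extension with Lemma \ref{lm:single-essential-equivalence}, Lemma \ref{lm:A-NA}, and Lemma \ref{lm:theta-ij-eq} together with the subsequent lemma turning $i \S^x j$ into $\phi_i(x_i) > \phi_j(x_j)$, which yields the Choquet rewriting of (\ref{eq:mobuis-proto-choquet}). Your added attention to points lying in several $X^{S_a}$ simultaneously and to the well-definedness of $\phi$ there is exactly the bookkeeping the paper's construction relies on.
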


\subsection{Case with a single essential variable on every $X^{S_a}$}
\label{sec:case-with-single}

For this case  we only need \textbf{A3} to construct the representation. Since $\cgeq$ is a weak order and each $X^{S_i}$ has a countable order-dense subset, there exists a function $F:X \Rightarrow \mathbb{R}$,
such that $x \cgeq y \iff F(x) \geq F(y)$. To perform the construction of the value functions we need the following lemma.

\begin{lemma}
\label{lm:single-essential-equivalence}
  Let $x_iz_{-i} \in X^{S_a}$ and $y_iz_{-i} \in X^{S_b}$. Let also $i$ be the only essential coordinate on $X^{S_a}$ and $X^{S_b}$. Then,
  $x_iz_{-i} \sim y_iz_{-i}$. 
\end{lemma}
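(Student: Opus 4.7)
The plan is to show that the hypothesis is essentially vacuous (given the structural assumption \textbf{SA1}), or else trivializes, by carefully analyzing the structure of $\S_a$ and $\S_b$. The first observation, which is purely structural, is that since $x_iz_{-i}$ and $y_iz_{-i}$ share the tuple $z_{-i}$, the set $\SE[z]{jk}$ for any pair $j,k \in N\setminus\{i\}$ depends only on $z_j$ and $z_k$. Therefore $j \R^{x_iz_{-i}} k$ holds if and only if $j \R^{y_iz_{-i}} k$ does, so $\S_a$ and $\S_b$ agree on every pair not involving $i$ and can differ only on pairs of the form $(i,j)$ with $j \neq i$.

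Next I would split into cases. If $\S_a = \S_b$, then the two points lie in the same cell; only $i$ is essential there, so combined with \textbf{A2} and \textbf{SA1} the restriction of $\cgeq$ to this cell is strictly monotone in the $i$-th coordinate, whence $x_iz_{-i} \sim y_iz_{-i}$ reduces to $x_i = y_i$ and the conclusion is immediate. Otherwise there exists some $j \neq i$ with (say) $i \R_a j$ and $j \S_b i$. My aim would be to derive a contradiction with \textbf{SA1} by showing $j$ must then be inessential on \emph{every} cell of $X$. Non-essentiality of $j$ on $X^{\S_a}$ and $X^{\S_b}$ is given; to extend it to an arbitrary cell $X^{\S_c}$, I would argue via \textbf{A3} (which governs how cells are stitched together by triple cancellation) together with restricted solvability \textbf{A8} to produce intermediate witness points, exploiting that $\S_c$ must agree with $\S_a$ or $\S_b$ on the pair $(i,j)$ and therefore inherits the inability of the $j$-th coordinate to drive strict preference at that pair.

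The main obstacle will be this last step: propagating the non-essentiality of $j$ to every cell using only the axioms available at this stage of the paper, since we have not yet constructed the Choquet representation that would make the argument transparent. I anticipate a careful case analysis on how $\S_c$ arranges $(i,j)$ relative to the remaining pairs, combined with short chains of restricted-solvability moves that transport non-essentiality witnesses between cells. A secondary difficulty is the handling of maximal and minimal elements, where $\SE[z]{ij}$ and $\NW[z]{ij}$ can degenerate and the definitions of $\SE{ij}$, $\NW{ij}$ become more delicate; these would be treated using the Closedness structural assumption to produce interior witnesses to which the preceding argument applies without modification.
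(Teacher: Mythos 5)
Your strategy---showing that the hypothesis is vacuous or trivializes---attacks the wrong target, because it rests on the literal reading of a mis-stated hypothesis. The lemma is invoked to show that setting $\phi_i(x_i) := F(x)$, for $x \in X^{S_a}$ with $i$ the only essential coordinate, is well defined, i.e.\ that the value does not depend on the chosen completion of $x_i$; accordingly the content actually proved in the paper is: if $x_iz_{-i} \in X^{S_a}$ and $x_iy_{-i} \in X^{S_b}$, with $i$ the only essential coordinate on both cells, then $x_iz_{-i} \sim x_iy_{-i}$ --- same $i$-th coordinate, different completions. Under your literal reading (same $z_{-i}$, different $x_i,y_i$) the claim is simply false in the target model: take $C=\min(f_1,\ldots,f_n)$, $i=1$, and $x_1,y_1$ with $f_1(x_1)<f_1(y_1)<\min_{j\neq 1} f_j(z_j)$; both points lie in cells on which $1$ is the only essential coordinate, yet they are not indifferent, so no argument can close along your lines. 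Your own case analysis already signals this: in the case $\S_a=\S_b$ you concede the conclusion ``reduces to $x_i=y_i$'', which is not part of the hypothesis; and in the case $\S_a\neq\S_b$ the plan to force $j$ to be inessential on every cell and contradict \textbf{SA1} cannot work, since in the very setting where the lemma is used (every $X^{S_k}$ has a single essential coordinate, as in the MIN model) all coordinates but one are inessential on each cell while \textbf{A7} and \textbf{SA1} hold --- essentiality of $j$ on $X$ is witnessed in other cells, and cell-wise inessentiality contradicts neither assumption.

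The paper's proof is a path-tracing argument that your proposal does not contain: keep $x_i$ fixed and modify the remaining coordinates pairwise, first re-ranking the coordinates below $i$ and then those above, using density and the ``Closedness'' assumption to pass through intermediate points where two adjacent coordinates are tied ($k \E j$), so that each intermediate point belongs simultaneously to the current cell and the next one and $i$ remains the only essential coordinate throughout; since only $i$ is essential on each cell and $x_i$ never changes, consecutive points are indifferent, and transitivity of $\sim$ carries the indifference from $X^{S_a}$ to $X^{S_b}$. Note finally that your opening structural observation --- that $\S_a$ and $\S_b$ can differ only on pairs involving $i$ --- depends entirely on the two points sharing $z_{-i}$, and therefore evaporates for the intended statement, where bridging cells that differ on pairs \emph{not} involving $i$ is precisely the work the path construction has to do.
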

\begin{proof}
  The idea of the proof is to ``trace a path'' from $X^{S_a}$ to $X^{S_b}$ by constructing a sequence of points and subsets $X^{S_j}$. We
  will keep $x_i$ unchanged, but will move the remaining coordinates, in order to show that each point in the sequence belongs to the
  current and the subsequent subsets, moreover $i$ will be the only essential coordinate on all $X^{S_j}$.

  There are several steps. We start with pairs of coordinates $j,k$ such that $i \S_a j,k$, and $j \S_a k$ but $k \S_b j$ (or vice
  versa). Note that both $j,k$ are inessential in $X^{S_a}$. Let $j \S_a k$, and $j,k$ be subsequent in $\S_a$, i.e. there is no $m$ such
  that $j \S_a m \S_a k$. Such a pair must obviously exist. Using a simple argument, we can show that we can construct a point $x_iz^1_{-i}$
  by changing $z_j$ and $z_k$, such that $k \E^{x_iz'_{-i}} j$. Call this new ordering $\S_1$. By a density argument, we can show that if
  both $j$ and $k$ were not essential on $X^{S_a}$, they will not be essential on $X^{S_1}$ either. Essentiality of all other coordinates is
  also not affected, so $i$ remains the only essential coordinate on $X^{S_1}$. Note, that because $k \E^{x_iz'_{-i}} j$, we have
  $x_iz^1_{-i} \in X^{S_a}$, $x_iz^1_{-i} \in X^{S_1}$. Proceeding like this, we build a sequence $x_iz^j_{-i}$ until eventually all
  pairs $j,k$ such that $i \S_a j$, $i \S_a k$ are ranked in the same order as in $\S_b$. Using the same approach, we repeat it for $j,k$
  such that $j \S_a i$, $k \S_a i$. Eventually, the sequence ends with an order $\S_n$.

It remains to switch the order of coordinates which change the relative position with $i$ in $\S_a$ and $\S_b$. By using the closedness structural
assumption, we can construct a point which will belongs both to $X^{S_n}$ and $X^{S_b}$, and since $i$ is the only essential coordinate on
both sets, we obtain the result.
\end{proof}

Using this lemma, we can now define value functions $\phi_i(x_i) = F(x)$ by picking $x \in X^{S_a}$ where $i$ is the essential
coordinate. Lemma shows that the functions are well-defined.  It remains to construct a capacity. We can do so, by letting $\alpha^a_i = 1$
for essential $i$ and $\alpha^a_j = 0$ for the remaining coordinates. Points $r^0$ and $r^1$ can be used to show a result similar to Lemma
\ref{lm:A-NA}. This means, that there exist a capacity $\nu$, and the preference relation can be represented by a Choquet integral with
respect to this capacity and value functions, defined as above.

An alternative construction is given in Section \ref{sec:altern-treatm-case}.

%%% Local Variables:
%%% mode: latex
%%% TeX-master: "nvar_simple"
%%% End:

\section{Extending the representation to the extreme points of $X$}
\label{sec:extend-repr-extr}

\subsection{Definition value functions at maximal and minimal points of $X_i$}
\label{sec:defin-value-funct}

Due to a larger number of possible cases we can't easily show that all value functions are bounded on $X_i$ as in
\cite{wakker1991additive-RO}. Instead, we can show which functions are bounded on $X^{S_a}_i$. There can be two cases in which we do not
know if $\phi_i$ is bounded on $X^{S_a}_i$. For maximal points these cases are: 
\begin{itemize}
\item $\max X^{S_a}_i = \max X_i$, or 
\item $M^a_i := \max X^{S_a}_i$ and $M^a_iz_{-i} \in X^{S_a}_i$ only if for some $j \neq i$, we have $z_j = \max X_j$. 
\end{itemize}

\begin{lemma}
  \label{lm:maximal-in-xsa}
  Let $i$ be essential on $X^{S_a}$ which has two or more essential variables. Let $M^a_i$ be the maximal element of $X^{S_a}_i$. Then, $\phi_i$
  is bounded on $X^{S_a}_i$ if either:
  \begin{enumerate}
  \item Exists $y$ and $z_{-i}$, such that $y, M^a_iz_{-i} \in X^{S_a}$ and $y \cgeq M_iz_{-i}$, or 
  \item $i$ is in the same interaction clique as variable $j$, for which the first option applies. 
  \end{enumerate}
\end{lemma}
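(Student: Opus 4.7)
The plan is to exploit the additive representation $\phi^a(x) = \sum_k \alpha^a_k \phi_k(x_k)$ on $X^{S_a}$ (Theorem \ref{theo:Vsa-global}), together with the facts that essentiality of $i$ on $X^{S_a}$ forces $\alpha^a_i > 0$ and that $\phi_i$ represents $\cgeq_i$ on $X^{S_a}_i$ (Lemma \ref{lm:2}). Since $M^a_i$ is the $\cgeq_i$-maximum of $X^{S_a}_i$, we have $\phi_i(M^a_i) = \sup \phi_i(X^{S_a}_i)$, so ``$\phi_i$ bounded on $X^{S_a}_i$'' reduces to $\phi_i(M^a_i) < \infty$. For case 1 this is immediate: given $y \cgeq M^a_i z_{-i}$ with both points in $X^{S_a}$,
\begin{equation*}
  \phi^a(y) \geq \alpha^a_i \phi_i(M^a_i) + \sum_{k \neq i} \alpha^a_k \phi_k(z_k),
\end{equation*}
and since each $\phi_k(z_k)$ is finite (having been defined at non-extreme points of $X_k$ in Section \ref{sec:global-repr-on-x}; one may choose $z_{-i}$ non-extreme in the remaining coordinates if necessary) and $\alpha^a_i > 0$, solving gives a finite upper bound on $\phi_i(M^a_i)$.

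For case 2 I would propagate boundedness along the interaction clique. By assumption there is a chain $i = i_0, i_1, \ldots, i_m = j$ with consecutive pairs interacting, and case 1 applied to $j$ shows $\phi_j$ is bounded on $X^{S_a}_j$. One then inducts backward: assuming $\phi_{i_{k+1}}$ is bounded, deduce the same for $\phi_{i_k}$. The step uses interaction of $i_k$ and $i_{k+1}$: combining the ``Closedness'' structural assumption with Lemma \ref{lm:theta-ij-eq} produces points $w \in X^{S_a}$ with $i_k \E^w i_{k+1}$, at which $\phi_{i_k}(w_{i_k}) = \phi_{i_{k+1}}(w_{i_{k+1}})$. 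By sliding such $w$ via restricted solvability (\textbf{A8}) and density, these equivalence points can be arranged so that $w_{i_k}$ sweeps cofinally through $X^{S_a}_{i_k}$, transferring the finite bound on $\phi_{i_{k+1}}$ to $\phi_{i_k}$.

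The main obstacle I expect is precisely this last step: turning the single equivalence point guaranteed by ``Closedness'' into a cofinal family, in order to conclude $\sup \{\phi_{i_k}(w_{i_k}) : w \in X^{S_a},\, i_k \E^w i_{k+1}\} = \phi_{i_k}(M^a_{i_k})$. Concretely, one must show that whenever $\phi_{i_{k+1}}(w_{i_{k+1}})$ approaches its supremum along $X^{S_a}_{i_{k+1}}$, a matching value of $\phi_{i_k}$ is realized at some $w \in X^{S_a}$ with $i_k \E^w i_{k+1}$. This should follow from essentiality of both $i_k$ and $i_{k+1}$ on $X^{S_a}$ combined with \textbf{A8} and density, by adjusting the off-$(i_k,i_{k+1})$ coordinates so that the $i_k \R^w m$ and $m \R^w i_{k+1}$ conditions needed to keep $w$ inside $X^{S_a}$ are preserved while $w_{i_k}$ increases; the careful bookkeeping of which cone conditions survive under such adjustments is the technical heart of the argument.
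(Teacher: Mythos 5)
Your case~1 follows the same mechanism as the paper (the additive form on $X^{S_a}$, $\alpha^a_i>0$ from essentiality, and monotonicity), but it presupposes exactly what this lemma exists to secure: that every $\phi$-value you write down is already defined and finite. At this stage $\phi$ is only defined at points with no maximal or minimal coordinates (Lemma~\ref{lm:representation-without-extreme}), and the problematic situations singled out immediately before the lemma are precisely $\max X^{S_a}_i=\max X_i$ (so $\phi_i(M^a_i)$ does not yet exist, and reducing boundedness to ``$\phi_i(M^a_i)<\infty$'' is ill-posed --- the bound must be uniform over non-extreme $w_i\in X^{S_a}_i$), and the case where \emph{every} $z_{-i}$ with $M^a_iz_{-i}\in X^{S_a}$ has some $z_j=\max X_j$, so you cannot simply ``choose $z_{-i}$ non-extreme''; the given $y$ may contain extreme coordinates too. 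The paper's proof of case~1 consists almost entirely of repairing this: it replaces the extreme coordinates of $z_{-i}$ and $y$ one at a time (decreasing the $\S_a$-minimal maximal coordinate, raising minimal ones from the $\S_a$-maximal side, via monotonicity and Lemma~\ref{wakker-lemma-11}), staying inside $X^{S_a}$ and preserving $y'\cgeq M^a_iz'_{-i}$, and only then applies your inequality. That perturbation argument is the actual content of case~1 and is missing from your proposal.

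Your case~2 takes a genuinely different route, and its key step --- the one you flag yourself --- is a real gap, not a technicality. Transferring the bound from $i_{k+1}$ to $i_k$ requires points $w\in X^{S_a}$ with $i_k\E^{w}i_{k+1}$ whose $i_k$-coordinates are cofinal in $X^{S_a}_{i_k}$. ``Closedness'' only yields an equivalence point when both cones are realized for the same fixed remaining coordinates, and essentiality, \textbf{A8} and density do not give cofinality: when $i_k\S_a i_{k+1}$ near the top of $X^{S_a}_{i_k}$, the $\E$-points only attain $\phi_{i_k}$-values matched by values of $\phi_{i_{k+1}}$, which may stay strictly below the supremum of $\phi_{i_k}$ on $X^{S_a}_{i_k}$ (think of the Choquet picture with $\sup f_{i_{k+1}}<\sup f_{i_k}$). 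What makes the paper's argument work is information your plan never uses: one may assume option~1 fails for $i$ itself, and the paper shows this failure forces a rigid configuration --- no coordinate of $z_{-i}$ can be increased, every essential coordinate lies in $i$'s interaction clique, the $\S_a$-maximal coordinate is at its maximum, and all interacting pairs stand in relation $\E$ at the top --- after which Lemma~\ref{lm:theta-ij-eq} equates $\phi_i$ there with the already bounded $\phi_j$. Without that case split (or some substitute proof of your cofinality claim), the backward induction along the clique chain does not go through.
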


\begin{proof}
  We start from the first case. We would need to construct $y'$ and $z'_{-i}$, such that they do not contain any minimal or maximal points,
  and still $y' \cgeq M^a_iz'_{-i}$.  Assume $z_{-i}$ contain some maximal points. All coordinates can't be maximal by monotonicity. Find
  $\S^a$-minimal $j$, such that $z_j$ is maximal in $X^{S_a}_j$ (in can be also maximal in $X_j$ or not), but for all coordinates $k$, such
  that $j \S^{M^a_iz_{-i}} k$ (note the superscript, point can belong to more than one $X^{S_a}$ in case if some coordinates are in $\E$),
  we have that $z_k$ is not maximal in $X^{S_a}_k$. We can slightly decrease $z_j$, finding $z'_j \cleq_j z_j$, such that
  $M_iz_{-ij}z'_j \in X^{S_a}$. By monotonicity still $y \cgeq M_iz_{-ij}z'_j$. Proceeding in a similar way we can
  construct $z'_{-i}$ not containing any maximal points. If it contains minimal points as well, we can increase them, starting with the
  $\S^a$-maximal one, staying in $X^{S_a}$ and keeping the relation $y \cgeq M_iz_{-ij}z'_j$ (see Lemma \ref{wakker-lemma-11}). Similarly,
  we can replace all maximal and minimal coordinates of $y$.

  Now we have $y' \cgeq M_iz'_{-i}$ and neither $y'$, nor $z'_{-i}$ contain any extreme coordinates. We can therefore conclude, that by
  monotonicity for every $w_i \in X^{S_a}_i$, we have $\phi_i(w_i) < \phi(y') - \sum _{j \neq i} \alpha_j \phi_j(z'_j)$, which shows that
  $\phi_i$ is bounded from above on $X^{S_a}_i$. 

  The second case relies on Lemma \ref{lm:theta-ij-eq}. It is easy to see that if $y$, such that $y \cgeq M_iz_{-i}$, does not exist, we
  must not be able to increase any of coordinates $z_{-i}$, otherwise doing so would give us such $y$ by monotonicity. It also implies, that
  all other essential variables are in the same interaction clique as $i$, otherwise we would be able to change them and again obtain a
  $y$. Finally, it means that $S_a$-maximal coordinate has a maximal value, and relations between all interacting coordinates are $\E$
  (since we are not able neither to decrease, nor to increase any coordinate).  Thus, we can use Lemma \ref{lm:theta-ij-eq} and conclude
  that $\phi_i$ has an upper bound on $X^{S_a}_i$.  

\end{proof}

Having proved Lemma \ref{lm:maximal-in-xsa}, we can now define $\phi(M_i) := \lim _{z_i \to M_i} \phi_i(z_i)$, and $\phi(m_i) := \lim
_{z_i \to m_i} \phi_i(z_i)$ for all $i \in N$.  Assigning values to minimal elements of $X_i$ can be done in a similar manner. Finally we
can prove the final theorem.

\subsection{Global representation on the whole $X$}
\label{sec:glob-repr-whole}

\begin{theorem}
  \label{theo:representation-with-extremes}
For any $x,y \in X$, we have $x \cgeq y \iff \phi(x) \cgeq \phi(y)$.
\end{theorem}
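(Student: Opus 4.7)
The extension of $\phi$ to all of $X$ is defined coordinate-wise by $\phi_i(M_i) := \lim_{z_i \to M_i} \phi_i(z_i)$ (and symmetrically at minima), which makes $\phi(x) = \sum_{A \subset N} m(A) \min_{i \in A} \phi_i(x_i)$ monotone and one-sided continuous in each coordinate. My plan is to reduce the claim on $X$ to Lemma \ref{lm:representation-without-extreme} (representation on $X'$) by approximating points that have extreme coordinates by points in $X'$ and passing to the limit.

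The central step I would prove is an approximation lemma: for every $x \in X$ that is not the global maximum of $X$ and every $w \cgt x$, there exists $x^+ \in X'$ with $x \cleq x^+ \clt w$ and $\phi(x^+)$ arbitrarily close to $\phi(x)$; symmetrically, for $x$ not the global minimum and $w \clt x$, there exists $x^- \in X'$ with $w \clt x^- \cleq x$ and $\phi(x^-)$ close to $\phi(x)$. To construct $x^+$, I would perturb each extreme coordinate of $x$ inward (maximal decreased, minimal increased), and if needed compensate on a non-extreme coordinate using restricted solvability (A8) in order to preserve $x^+ \cgeq x$; the Archimedean axiom (A9) and order density guarantee that each perturbation can be chosen small enough that $x^+ \clt w$, and the one-sided limit definition of $\phi_i$ at extreme points ensures that each perturbation changes $\phi$ by an arbitrarily small amount.

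Given the approximation lemma, the theorem follows quickly. For $x \cgt y$ in $X$, density yields $w \in X$ with $x \cgt w \cgt y$ (replacing $w$ by a nearby point in $X'$ using the same perturbation if necessary). Pick $x^- \in X'$ with $w \clt x^- \cleq x$ and $y^+ \in X'$ with $y \cleq y^+ \clt w$; by Lemma \ref{lm:representation-without-extreme}, $\phi(x^-) > \phi(w) > \phi(y^+)$, and by monotonicity of $\phi$ in each coordinate, $\phi(x) \geq \phi(x^-) > \phi(y^+) \geq \phi(y)$. The case $x \sim y$ is handled by two-sided bracketing: approximations of both $x$ and $y$ produce intervals containing $\phi(x)$ and $\phi(y)$ whose widths tend to zero and whose endpoints are ordered by Lemma \ref{lm:representation-without-extreme}, forcing $\phi(x) = \phi(y)$. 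The reverse implication $\phi(x) \geq \phi(y) \Rightarrow x \cgeq y$ then follows by contrapositive using completeness of $\cgeq$.

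The main obstacle is the approximation lemma itself when $x$ simultaneously has maximal and minimal extreme coordinates: moving a maximal coordinate inward pushes the point in the $\cleq$ direction while moving a minimal coordinate inward pushes it in the $\cgeq$ direction, so $x^+$ cannot be obtained by a purely monotone coordinate-wise perturbation. My plan is to perturb one coordinate at a time, invoking restricted solvability at each step to control both the direction and the magnitude of change while staying on the correct side of $w$. A secondary technical point is that the perturbed point must belong to an $X^{S_a}$ on which the Choquet-sum formula used in Lemma \ref{lm:representation-without-extreme} is valid; this is handled by noting that the $\S^x$-partition at a point is stable under sufficiently small perturbations, so $x^{\pm}$ inherits a compatible partition membership and its $\phi$-value agrees with the Choquet sum.
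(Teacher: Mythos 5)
Your overall strategy---perturb the extreme coordinates inward, invoke Lemma \ref{lm:representation-without-extreme} on $X'$, and pass to the limit via the definition $\phi_i(M_i)=\lim_{z_i\to M_i}\phi_i(z_i)$---is the same as the paper's, but two steps of your reduction do not hold as stated. First, the chain $\phi(x)\geq\phi(x^-)>\phi(y^+)\geq\phi(y)$ is not justified: once you compensate (e.g.\ raise a formerly minimal coordinate of $x$ and lower a non-extreme one to keep $x^-\cleq x$), the approximant $x^-$ is no longer coordinate-wise dominated by $x$, so coordinate-wise monotonicity of the Choquet sum gives nothing; and you cannot instead infer $\phi(x^-)\leq\phi(x)$ from $x^-\cleq x$, because $x\notin X'$ and that inference is precisely the theorem being proved. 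The same circularity undermines the claim that $\phi(x^{\pm})$ is ``arbitrarily close'' to $\phi(x)$: the limit definition controls the perturbed extreme coordinate, but the $\phi_i$-value of the compensating coordinate produced by restricted solvability is pinned down only through an indifference with $x$, which you cannot convert into a $\phi$-estimate. The paper avoids compensation altogether: it perturbs only in the direction that is harmless for the pair being compared, using Lemma \ref{wakker-lemma-11} to decrease the maximal coordinates of the preferred point $x$ (and increase the minimal coordinates of $y$) while preserving $x'\cgt y$, and then obtains $\phi(x)\geq\phi(y)$ as an infimum/supremum statement. Your approximation lemma could be repaired along these lines (fix the compensating increment first, small in $\phi_i$-terms, and use Lemma \ref{wakker-lemma-11} to shrink the decrement of the maximal coordinate so that the comparison with $x$ survives), but as written it is not established.

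Second, the endgame is logically incomplete. The contrapositive of ``$\phi(x)\geq\phi(y)\Rightarrow x\cgeq y$'' is ``$y\cgt x\Rightarrow\phi(y)>\phi(x)$'', i.e.\ the strict forward implication; once the unjustified strict chain above is removed, your limiting argument only yields $x\cgeq y\Rightarrow\phi(x)\geq\phi(y)$, and weak inequalities in both directions do not exclude $x\cgt y$ together with $\phi(x)=\phi(y)$. A separate strictness argument is needed; the paper supplies it via denserangedness of $\phi_j$: given $\phi(x)>\phi(y)$, lower a suitable non-minimal coordinate of $x$ slightly so that the strict $\phi$-inequality persists, apply the already proved direction to the lowered point, and conclude $x\cgt y$ by strict monotonicity. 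Two smaller points: your intermediate $w$ with $x\cgt w\cgt y$ may itself contain extreme coordinates, so $\phi(w)$ cannot be compared through Lemma \ref{lm:representation-without-extreme} (compare $x^-$ with $y^+$ directly instead), and the paper's density assumption is coordinate-wise, so the existence of such a $w$ for arbitrary $x\cgt y$ itself requires an argument.
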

\begin{proof}
  We proceed as in \cite{wakker1991additive-RO} (Lemma 21) with some modifications. For points that do not contain any maximal or minimal
  coordinates, this has been already proved (Section \ref{sec:global-repr-on-x}). Thus, assume that $x$ or $y$ contain maximal or minimal
  coordinates. Let $x \cgeq y$, and let $x \in X^{S_a}, y \in X^{S_b}$. Find $S_a$-minimal $j$ such that $x_j$ is maximal. We can also
  assume that for all $k$, such that $j \S_a k$, we have $j \S^x k$. If this is not the case, then $x$ belongs to several $X^{S_i}$ (by
  definition of these sets), and there must be one where this condition holds. By Lemma \ref{wakker-lemma-11} we can find
  $x'_j: x_j \cgeq_j x'_j$ such that $x'_jx_{-j} \in X^{S_a}$ and still $x'_jx_{-j} \cgt y$. Proceeding like this we get $x'$ which does not
  contain any maximal coordinates, and $x' \cgt y$. We now need to show that $\phi(x') > \phi(y)$.  Similarly, we can replace minimal
  coordinates of $y$, and so it is now required to show that $\phi(x') > \phi(y')$. So we can assume that $x$ has no maximal and $y$ has no
  minimal coordinates.  $x$ must have a non-minimal $X^{S_a}$-essential coordinate, find a $S_a$-maximal one $i$. Again, we can assume that
  for all $k$, such that $j \S_a k$, we have $j \S^x k$. ;By Lemma \ref{wakker-lemma-11} we can decrease it slightly and find
  $x'_i: x_i \cgt x'_i$ and $x'_ix_{-i} \in X^{S_a}$ and still $x' := x'_ix_{-i} \cgt y$. So we need to show now only that
  $\phi(x') \geq \phi(y)$. If we replace all minimal coordinates of $x'$ by non-minimal ones and all maximal coordinates of $y$ by
  non-maximal ones, then by monotonicity the preference between them is not affected, and by Lemma \ref{lm:representation-without-extreme},
  we have $\phi(x') > \phi(y)$. Thus any small increase of minimal and small decrease of maximal values leads to a strict inequality. By
  definition of $\phi_i$ at extreme elements of $X_i$, we have that $\phi(x')$ is the infimum of all such $\phi$-values, and $\phi(y)$ is
  the supremum. Hence, $\phi(x') \cgeq \phi(y)$.  $x \cgeq y \Rightarrow \phi(x) \cgeq \phi(y)$ also implies
  $\phi(x) \geq \phi(y) \Rightarrow x \cgeq y$.

  Now let $\phi(x) > \phi(y)$. $x$ cannot have all it's essential coordinates minimal, so find $S_a$-maximal $j$, such that $x_j$ is not
  minimal. By denserangedness of $\phi_j$, we can find a non-minimal $x'_j : x_j \cgt_j x'_j$ and still $\phi(x'_jx_{-j}) > \phi(y)$. By the
  above argument, we have $x'_jx_{-j} \cgeq y$, and by strict monotonicity we have $x \cgt y$. 
\end{proof}

%%% Local Variables:
%%% mode: latex
%%% TeX-master: "nvar_simple"
%%% End:

\section{Uniqueness}
\label{sec:uniqueness}

When defining functions $\phi_i$ we set $\phi_i(r^0_i) = 0 $ and $\phi_i(r^1_i) = 1$ for all $i$. This can be relaxed, in fact for every
interaction clique $A \in N$, we can choose the origin and scaling factor independently. 

Changing the origin alone would not alter the capacity, but changing the scaling factor would. For example, let us define, as previously,
$\phi'_i(r^0_i) = 0 $ for all $i \in N$, but set $\phi'_i(r^1_i) = 1$ for $i \not \in A$ and $\phi'_j(r^1_j) = t_A$ for $j \in A$, so
now $\alpha^{a'}_j = t_A\alpha^a_j$ for some $j \in A$. Accordingly, when normalizing coefficients $\alpha^{a'}_i$ in the additive
representations, we will be dividing by $\sum_{i \not \in A} \alpha^a_i + t_A \sum_{i \in A} \alpha^a_i$. 

It is not hard to see, that the $A-NA$ lemma (Lemma \ref{lm:A-NA}) is still intact - indeed, sums of $\alpha^{a'}_i$ within each clique
remain the same, just scaled by some common factor - $(\sum_{i \not \in A} \alpha^a_i + t_A \sum_{i \in A} \alpha^a_i)$ for $\alpha^{a'}_i, i \not \in A$,
and $t_A(\sum_{i \not \in A} \alpha^a_i + t_A \sum_{i \in A} \alpha^a_i)$ for $\alpha^{a'}_j, j \in A$.

It is also not hard to show that equivalence classes would still have identical values in different $X^{S_a}$ after such operation. The
following Lemmas prepare this.

\begin{lemma}
\label{lm:null-mobius}
  Let $A_1, A_2, ... $ be interaction cliques of $N$. Then, for any $B : B \cap A_i \neq \emptyset, B \cap A_j \neq \emptyset$, we have
  $m(B) = 0$. Also, if for two sets $A_1$ and $A_2$ we have $m(B) = 0$ for all $B : B \cap A_1 \neq \emptyset, B \cap A_2 \neq \emptyset$,
  then coordinates from $A_1$ do not interact with coordinates from $A_2$.
\end{lemma}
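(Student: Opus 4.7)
The plan is to prove the two statements separately, in each case decomposing $\phi$ from equation (\ref{eq:mobuis-proto-choquet}) according to how the summation index $A$ meets the coordinate pair of interest.

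For the forward direction, fix distinct interaction cliques $A_p \neq A_q$ and a set $B$ with $i \in B \cap A_p$, $j \in B \cap A_q$. Since $i,j$ lie in different cliques they do not interact, so $ij$-triple cancellation holds on all of $X$. For each subset $K \subseteq N \setminus \{i,j\}$ I would pick a point $z_{-ij}$ with $\phi_k(z_k)$ strictly below (resp.\ above) every relevant $\phi_i(a_i),\phi_j(p_j)$ for $k \in K$ (resp.\ $k \in N \setminus (K \cup \{i,j\})$); order density and restricted solvability, applied away from the extreme points handled in Section \ref{sec:extend-repr-extr}, guarantee such a $z_{-ij}$ exists. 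For this $z_{-ij}$, equation (\ref{eq:mobuis-proto-choquet}) collapses to
\begin{equation*}
\phi(a_ip_jz_{-ij}) = \alpha\,\phi_i(a_i) + \beta\,\phi_j(p_j) + \gamma_K\,\min\bigl(\phi_i(a_i),\phi_j(p_j)\bigr) + c,
\end{equation*}
where $\gamma_K = \sum_{\{i,j\}\subseteq A\subseteq N\setminus K} m(A)$ and the remaining coefficients do not depend jointly on $(a_i,p_j)$. Because $\min$ is not additive, $ij$-triple cancellation applied to quadruples straddling the kink $\phi_i(a_i) = \phi_j(p_j)$ forces $\gamma_K = 0$. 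Varying $K$ produces a triangular system in the $m(A)$ for $A \supseteq \{i,j\}$ that is solved by induction on $|A \setminus \{i,j\}|$: the choice $K = N \setminus \{i,j\}$ yields $m(\{i,j\}) = 0$ directly, the choice $K = N \setminus \{i,j,k\}$ yields $m(\{i,j,k\}) = 0$, and so on. Hence $m(A) = 0$ for every $A \supseteq \{i,j\}$, and in particular $m(B) = 0$.

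For the converse, assume $m(B) = 0$ for every $B$ meeting both $A_1$ and $A_2$, and fix arbitrary $i \in A_1$, $j \in A_2$, $z_{-ij} \in X_{-ij}$. Every $A \supseteq \{i,j\}$ meets both $A_1$ and $A_2$, so the crossing sum $\sum_{A \supseteq \{i,j\}} m(A)\,\Phi_\wedge(x,A)$ vanishes identically. The surviving portion of (\ref{eq:mobuis-proto-choquet}) splits according to whether $A$ contains only $i$, only $j$, or neither, yielding an additive decomposition of $\phi(a_ip_jz_{-ij})$ in the $i$- and $j$-components. Additive representations satisfy triple cancellation, so by the same computation used at the end of the proof of Lemma \ref{lm:theta-ij-eq} we conclude that $i$ and $j$ do not interact; as $i,j$ were arbitrary, no coordinate of $A_1$ interacts with any coordinate of $A_2$.

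The main obstacle is the forward direction, specifically constructing $z_{-ij}$ that realizes each rank pattern indexed by $K$ and verifying that the non-additivity of $\min$ translates into a genuine violation of $ij$-triple cancellation rather than merely an equality. Both reduce to essentiality of $i$ and $j$ together with order density, which justifies choosing quadruples $a_i,b_i,c_i,d_i$ and $p_j,q_j,r_j,s_j$ that sit on opposite sides of the kink with strict inequalities preserved; after this the Möbius inversion step is purely combinatorial.
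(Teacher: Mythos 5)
Your proposal is correct in outline, but it takes a genuinely different route from the paper. The paper argues at the level of the additive representations constructed on the areas $X^{S_a}$: if $m(B)\neq 0$ for some $B\supseteq\{i,j\}$, then there are two areas differing only in the relative ranking of $i$ and $j$ whose weights satisfy $\alpha^a_k=\alpha^b_k$ for $k\neq i,j$ but $\alpha^a_i\neq\alpha^b_i$, $\alpha^a_j\neq\alpha^b_j$, and this failure of $ij$-trade-off consistency across the two areas is read as interaction; the converse is the observation that equal weights everywhere give trade-off consistency on all $X_{ij}$. You instead work directly with the M\"obius--min expansion: pinning the coordinates outside $\{i,j\}$ at reference levels indexed by $K$, extracting the coefficient $\gamma_K$ of the kinked term, deriving a violation of $ij$-triple cancellation from the kink, and then inverting the triangular system in $K$ to kill each $m(A)$, $A\supseteq\{i,j\}$, individually. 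Your converse coincides in substance with the paper's (and with the computation at the end of Lemma \ref{lm:theta-ij-eq}, which you correctly reuse). What your route buys is a more explicit and self-contained argument that isolates each M\"obius coefficient; what the paper's route buys is that it leans on machinery already built (the area representations and their weights), so the existence of the needed configurations is immediate. Two points in your forward direction deserve to be made explicit, though they are at the same level of sketchiness as the paper's own proof: (i) the existence of a $z_{-ij}$ realizing each pattern $K$ is best justified by the reference points $r^0,r^1$ of the geometry assumption together with the normalization $\phi_k(r^0_k)=0$, $\phi_k(r^1_k)=1$ (take $z_k=r^0_k$ for $k\in K$, $z_k=r^1_k$ otherwise, and values of $a_i,\ldots,s_j$ strictly inside $(0,1)$), not merely by density and restricted solvability; (ii) the step ``$\gamma_K\neq 0$ forces a 3C violation'' needs the check that the trade-off ratios on the two sides of the diagonal actually differ, i.e.\ $\gamma_K(\alpha+\beta+\gamma_K)\neq 0$; this follows from monotonicity of $\nu$ (which gives $\alpha,\beta,\alpha+\gamma_K,\beta+\gamma_K\geq 0$), and density then lets you place the quadruples straddling the kink so that the violation is witnessed by actual points, all of which lie in a single cone $\SE[z']{ij}$ or $\NW[z']{ij}$ and hence certify interaction.
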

\begin{proof}
  It's enough to show this for singletons. Let $i \in A_1$ and $j \in A_2$. We need to show that for any $B : i,j \in B$, we have
  $m(B) = 0$, and vice versa, if $m(B) = 0$ for every such $B$, then $i$ and $j$ do not interact.  Assume that for some such $B$ we have
  $m(B) \neq 0$. Then, we can find $x_{ij}z_{-ij} \in X^{S_a}$, $y_{ij}z_{-ij} \in X^{S_b}$, such that $\alpha^a_k = \alpha^b_k$ for all
  $k \neq i,j$, and $\alpha^a_i \neq \alpha^b_i, \alpha^a_j \neq \alpha^b_j$. This implies that $ij$-trade-off consistency does not hold on
  all $X_{ij}$, hence the variables interact. To show the reverse, note that we have $\alpha^a_i = \alpha^b_i, \alpha^a_j = \alpha^b_j$ for
  all possible points in $X$, which implies $ij$-trade-off consistency on all $X_{ij}$.
\end{proof}
The following two Lemmas follow trivially.
\begin{lemma}
  $\sum _{B \subset A_i} m(B) \geq 0$ for all interaction cliques $A_i$.
\end{lemma}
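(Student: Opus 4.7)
The plan is to prove this by exhibiting a specific point whose Choquet value equals $\sum_{B \subset A_i} m(B)$, and then arguing by monotonicity that this value must be non-negative.

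First, I would take the point $p := r^1_{A_i} r^0_{-A_i} \in X$, which is well-defined because $X$ is a Cartesian product and both $r^0$ and $r^1$ exist by the geometry structural assumption, with $\phi_j(r^0_j) = 0$ and $\phi_j(r^1_j) = 1$ for every $j$. Since $r^1_j \cgeq_j r^0_j$ for all $j \in N$ and weak separability yields coordinatewise monotonicity of $\cgeq$, we obtain $p \cgeq r^0$. Applying the global representation from Theorem \ref{theo:representation-with-extremes} gives $\phi(p) \geq \phi(r^0) = 0$.

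Next, I would expand $\phi(p)$ using the Möbius form of the Choquet integral,
\begin{equation*}
\phi(p) = \sum_{B \subset N} m(B) \min_{j \in B} \phi_j(p_j),
\end{equation*}
and split the sum according to how $B$ meets the partition of $N$ into interaction cliques. By Lemma \ref{lm:null-mobius}, $m(B) = 0$ for every $B$ that intersects two distinct cliques, so only subsets $B$ contained in a single clique contribute. For $B \subset A_i$ all coordinates of $p$ indexed by $B$ equal $r^1$, hence $\min_{j \in B} \phi_j(p_j) = 1$; for $B$ contained in any other clique $A_k$ ($k \neq i$) all coordinates of $p$ indexed by $B$ equal $r^0$, hence $\min_{j \in B} \phi_j(p_j) = 0$. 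Therefore
\begin{equation*}
\phi(p) = \sum_{B \subset A_i} m(B),
\end{equation*}
and combining with $\phi(p) \geq 0$ yields the claim.

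The proof is essentially a one-line computation once the right test point is chosen; the only subtlety is justifying that $m(B) = 0$ for cross-clique $B$, which is already handled by Lemma \ref{lm:null-mobius}, and that the mixed point $r^1_{A_i} r^0_{-A_i}$ is admissible — guaranteed by the Cartesian product structure of $X$. No main obstacle arises beyond verifying that all earlier normalization conventions ($\phi_i(r^0_i) = 0$, $\phi_i(r^1_i) = 1$ across cliques) are in force, which they are by the construction in Section \ref{sec:final-rescaling}.
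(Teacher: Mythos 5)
Your argument is correct, and it is more explicit than what the paper does: the paper offers no written proof (the lemma is declared to "follow trivially"), the intended shortcut being the Möbius inversion identity $\sum_{B \subset A_i} m(B) = \nu(A_i)$ together with the fact that the set function $\nu$ constructed from the non-negative normalized weights $\alpha^a_i$ is a capacity, hence non-negative. Your route instead verifies non-negativity directly from the preference side: you evaluate the representation at the test point $r^1_{A_i}r^0_{-A_i}$ (the same kind of mixed point the paper already uses in Lemma \ref{lm:A-NA}), invoke pointwise monotonicity (Lemma \ref{lm:4}) to get $\phi(r^1_{A_i}r^0_{-A_i}) \geq \phi(r^0) = 0$, and kill the cross-clique terms with Lemma \ref{lm:null-mobius}, so that the Choquet value collapses to exactly $\sum_{B \subset A_i} m(B)$ under the normalization $\phi_j(r^0_j)=0$, $\phi_j(r^1_j)=1$. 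What your version buys is independence from the claim that the constructed $\nu$ is already known to be non-negative and monotone: you only need the representation formula, the normalization of Section \ref{sec:final-rescaling}, and coordinatewise monotonicity, so the lemma becomes a genuine consequence of the axioms rather than a restatement of a property of the construction. The cost is that you rely on the global representation (Theorem \ref{theo:representation-with-extremes}) at a mixed point, which is legitimate here since that theorem and Lemma \ref{lm:null-mobius} both precede this lemma in the paper's logical order, so there is no circularity.
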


\begin{lemma}
\label{lm:cliq-add-decomp}
  Let $A_1, A_2, ... $ be interaction cliques of $N$. Then the Choquet integral wrt a corresponding $\nu$,, can be written as a sum of
  integrals wrt ``sub-capacities'', defined on sets of all subsets of $A_i$. 
\end{lemma}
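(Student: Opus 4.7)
The plan is to start from the Möbius form of the Choquet integral that was established in the representation, namely
\[
\phi(x) \;=\; \sum_{B \subseteq N} m(B)\,\min_{i\in B} f_i(x_i),
\]
and then use Lemma \ref{lm:null-mobius} to kill every term whose support straddles two distinct cliques. Concretely, if $A_1,A_2,\ldots,A_p$ are the maximal interaction cliques of $N$, then for any $B \subseteq N$ that meets two different $A_r$'s we have $m(B)=0$. Consequently every $B$ with $m(B)\neq 0$ is contained in a single clique $A_r$, and the sum above decomposes cleanly into one block per clique.

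Next I would define sub-capacities on each clique by Möbius inversion. For each $r$, set $m_r(B):=m(B)$ for all $B\subseteq A_r$, and define $\nu_r:2^{A_r}\to\mathbb{R}_+$ by $\nu_r(C):=\sum_{B\subseteq C} m_r(B)$. I would then verify that $\nu_r$ is genuinely a capacity on $A_r$: $\nu_r(\varnothing)=0$ is immediate from $m(\varnothing)=0$, nonnegativity of $\nu_r$ on each subset follows from the preceding lemma (applied to the sub-clique structure, or directly by noticing that the sum $\sum_{B\subseteq C} m(B)$ equals $\nu(C)$ since all terms of $\nu(C)=\sum_{B\subseteq C} m(B)$ with $B\not\subseteq A_r$ vanish by Lemma \ref{lm:null-mobius} when $C\subseteq A_r$), and monotonicity of $\nu_r$ inherits directly from the monotonicity of $\nu$. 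Once $\nu_r$ is known to be a capacity, the inner block
\[
\sum_{B\subseteq A_r} m_r(B)\,\min_{i\in B} f_i(x_i)
\]
is by definition the Choquet integral of $(f_i(x_i))_{i\in A_r}$ with respect to $\nu_r$, via the Möbius form given in Section \ref{sec:choquet-integral}.

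Combining these two steps yields
\[
\phi(x) \;=\; \sum_{r=1}^{p} C\!\left(\nu_r,(f_i(x_i))_{i\in A_r}\right),
\]
which is the desired decomposition. The main obstacle I anticipate is the bookkeeping involved in confirming that $\nu_r$ really is a capacity on $2^{A_r}$; in particular, that restricting $\nu$ to subsets of $A_r$ agrees with the Möbius-reconstructed $\nu_r$, which is exactly where Lemma \ref{lm:null-mobius} is essential (otherwise cross-clique Möbius masses would spoil the identity $\nu(C)=\sum_{B\subseteq C\cap A_r} m(B)$ for $C\subseteq A_r$). Normalization is a minor point: if one wants each $\nu_r$ to be a normalized capacity one rescales by $\nu(A_r)=\sum_{B\subseteq A_r} m(B)$ and absorbs the scalars outside the integrals, using the previous lemma to guarantee the denominators are nonnegative.
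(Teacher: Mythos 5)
Your proof is correct and takes essentially the route the paper intends: the paper states this lemma as following ``trivially'' from Lemma \ref{lm:null-mobius}, which is exactly your argument --- all M\"obius masses on sets straddling two cliques vanish, so the M\"obius form of the integral splits into per-clique blocks. Your explicit verification that each block is a genuine Choquet integral with respect to the restriction of $\nu$ to $2^{A_r}$ (and the remark on normalization) is just the bookkeeping the paper leaves implicit.
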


Consequently, we can substitute $\phi_i$ for $k_A\psi_i$ for all $i \in A$, and re-normalize the capacity by multiplying every $m(A)$, hence
also $\nu(A)$ by $\sum _{B \not \in A} m(B) + k_A \sum {B \in A} m(B)$. Apparently, this implies that points from different $X^{S_a}$
belonging to the same equivalence class would still have identical values.

Uniqueness properties of the value functions are similar to those obtained in the homogeneous case $X = Y^n$, but are modified to accommodate for the
heterogeneous structure of the set $X$ in this paper. Because of our general setup, value functions might admit ``ordinal'' transformations
at certain points, and ``cardinal'' at the others, even within the same coordinate. In particular, if a point $x_i$ belongs to some
$X^{S_a}_i$, and $X^{S_a}$ has two or more essential coordinates, one of which is $i$, then $\phi_i(x_i)$ admits only a cardinal
transformation, unless an extreme case applies, when $x_iz_{-i} \in X^{S_a}$ for a single $z_{-i}$.  Two dimensional case is much more
transparent to understanding, and the general idea remains intact. 

Let ${\cal I} = \{A_1, \ldots, A_k \}$ be a set of interaction cliques of $N$. Obviously, $I$ is a partition of $N$. 

\begin{lemma}
  \label{lm:uniqueness-XSA}
  Let $g_1:X_1 \rightarrow \mathbb{R}, \ldots, g_n:X_n \rightarrow \mathbb{R}$ be such that (\ref{eq:repr}) holds with $f_i$ substituted by
  $g_i$. Then, at all $x_i \in X_i$, such that for more than one $z_{-i}$ we have $x_iz_{-i} \in X^{S_a}$, where $X^{S_a}$ has two essential
  coordinates, one of which is $i$, value functions $f_i$ and $g_i$ are related in the following way:
\begin{equation*}
  f_i(x_i) = \alpha_{A_j}g_i(x_i) + \beta_{A_j},
\end{equation*}

Capacity changes as follows
\begin{equation*}
  m'(B) = \frac{\alpha_{A_j}m(B)}{\sum _{C \subset A_i, A_i \in {\cal I}} \alpha_{A_i}m(C)}.
\end{equation*}

At the remaining points of $X$, value functions $f_i$ have the following uniqueness properties:
\begin{equation*}
  f_i(x_i) = \psi_i(g_i(x_i)),
\end{equation*}
where $\psi_i$ is an increasing function, and for all $j \in N, j \neq i$, such that exists $A \in N: i,j \in A, m(A) > 0$, we additionally
have
\begin{equation*}
f_i(x_i) = f_j(x_j) \iff g_i(x_i) = g_j(x_j).
\end{equation*}
\end{lemma}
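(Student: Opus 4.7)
The plan is to leverage the standard uniqueness theory of additive conjoint measurement on each $X^{\S_a}$ and then propagate the resulting affine coefficients across the full structure using the alignment results of Section \ref{sec:align-addit} together with Lemmas \ref{lm:A-NA} and \ref{lm:null-mobius}. First I would establish cardinal uniqueness on a single $X^{\S_a}$ with at least two essential coordinates: if $g_1, \ldots, g_n$ yield another Choquet representation, then restricting to any $X^{z(\S_a)}$ produces a second additive representation by Theorem \ref{theo:additive-XzSa}, and the classical uniqueness theorem for additive conjoint measurement forces $g_i = \alpha^{\S_a} f_i + \beta^{\S_a}_i$ on the common domain, with a single dilation $\alpha^{\S_a}$ shared by all essential coordinates $i$ (only a common multiplicative constant preserves the additive sum). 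The gluing argument of Theorem \ref{theo:Vsa-on-Xsa} then extends this affine relation uniformly to all of $X^{\S_a}_i$.

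Next I would propagate across different $X^{\S_a}$ and $X^{\S_b}$ within the same interaction clique. Theorem \ref{theo:Vi-proportional} applied to both $(f_i)$ and $(g_i)$ forces $\alpha^{\S_a} = \alpha^{\S_b}$ whenever $\S_a$ and $\S_b$ share essential coordinates from the same clique $A$, because the standard-sequence ratios that determine these dilations are preference-theoretic and therefore identical in both representations. Across different cliques $A_i \neq A_j$ the dilations may be chosen independently: Lemma \ref{lm:null-mobius} guarantees $m(B) = 0$ for any $B$ straddling two cliques, so by Lemma \ref{lm:cliq-add-decomp} the integral splits as a sum of independent clique-integrals that can be rescaled separately. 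This produces exactly one constant $\alpha_{A_j}$ per clique, as the statement claims.

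The capacity transformation then follows by a direct substitution in the M\"obius form $\phi(x) = \sum_{A} m(A) \min_{i \in A} f_i(x_i)$. Replacing $f_i$ by $\alpha_{A_j}^{-1}(g_i - \beta_{A_j})$ for each $i \in A_j$, and using that every $A$ with $m(A) \neq 0$ lies inside a single clique, yields a new integrand of the form $\sum_A \alpha_{A_j} m(A) \min_{i \in A} g_i(x_i)$ plus additive constants from the $\beta_{A_j}$ terms, the latter absorbed by renormalizing so that the new capacity still satisfies $\nu'(N) = 1$. The renormalizing denominator is precisely $\sum_{C \subset A_i, A_i \in {\cal I}} \alpha_{A_i} m(C)$, which delivers the stated formula for $m'(B)$.

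Finally, for the remaining points only ordinal information is pinned down, so $\psi_i$ can be any increasing function that represents $\cgeq_i$. The extra constraint $f_i(x_i) = f_j(x_j) \iff g_i(x_i) = g_j(x_j)$ whenever $i,j$ lie in some $A$ with $m(A) > 0$ is forced by Lemma \ref{lm:theta-ij-eq}: equality of value functions corresponds to the preference-theoretic relation $i \E^{x} j$, which is representation-independent. The main obstacle is the second step, namely showing that one multiplicative constant $\alpha_{A_j}$ governs all coordinates in an entire clique; this requires carefully tracking how per-coordinate dilations of $f_i$ propagate through the alignment procedure and interact with the capacity-normalization constraint of Lemma \ref{lm:A-NA}. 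Once this step is in hand, the capacity transformation formula and the handling of the extreme points are essentially bookkeeping.
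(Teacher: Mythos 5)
Your proposal is correct and follows essentially the same route as the paper: cardinal uniqueness of the additive conjoint representations on the sets $X^{\S_a}$ with two essential coordinates, propagation of a single dilation per interaction clique via Lemmas \ref{lm:null-mobius} and \ref{lm:cliq-add-decomp} together with the per-clique rescaling and capacity renormalization of Section \ref{sec:uniqueness}, and ordinal uniqueness plus Lemma \ref{lm:theta-ij-eq} at the remaining points. The only item the paper makes explicit that you leave implicit is the boundary case where $x_iz_{-i} \in X^{\S_a}$ for a unique $z_{-i}$ (necessarily an extreme representative of its equivalence class), which is precisely why such points are excluded from the affine part of the statement and fall under the ordinal clause.
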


\begin{proof}
  Mostly follows from uniqueness properties of additive and ordinal representations and Lemma \ref{lm:theta-ij-eq}. The only complication is
  the special case, when $x_iz_{-i} \in X^{S_a}$ for a unique $z_{-i}$. This effectively means that $x_iz_{-i}$ is the only representative
  of its equivalence class in $X^{S_a}$, hence it must be maximal or minimal. It also implies, that the transformation of $\phi_i(x_i)$ does
  not have to be the same as for all other points in $X^{S_a}_i$, as the only condition it has to satisfy is that $\phi(x_iz_{-i})$ has to
  be greater (or less) than values of all other equivalence classes in $X^{S_a}$.
\end{proof}

%%% Local Variables:
%%% mode: latex
%%% TeX-master: "nvar_simple"
%%% End:

\section*{Appendix}
\label{cha:appendix}

\renewcommand{\thesection}{A.\arabic{section}}
\setcounter{section}{0}

\section{Technical lemmas}
\label{sec:technical-lemmas}

\begin{lemma}
  \label{lm:3C-independence}
  If $\cgeq$ satisfies triple cancellation then it is independent.
\end{lemma}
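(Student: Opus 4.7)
Plan for Lemma A.1.

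The goal is to show that if $\cgeq$ satisfies $ij$-triple cancellation for every pair $i,j \in N$, then $\cgeq$ is independent, meaning $a_ix_{-i} \cgeq b_ix_{-i}$ for some $x_{-i}$ implies $a_iy_{-i} \cgeq b_iy_{-i}$ for every $y_{-i}$. I would first reduce the proof to the case where $x_{-i}$ and $y_{-i}$ differ in exactly one coordinate $j$, since a general $y_{-i}$ can be reached from $x_{-i}$ by changing one coordinate at a time, and the one-coordinate case applied successively yields the full statement.

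The main step is the one-coordinate case: given $a_ip_jz_{-ij} \cgeq b_ip_jz_{-ij}$, deduce $a_iq_jz_{-ij} \cgeq b_iq_jz_{-ij}$. I would apply $ij$-3C with the substitution $a := b_i,\ b := b_i,\ p := p_j,\ q := p_j,\ r := q_j,\ s := q_j,\ c := a_i,\ d := b_i$. The first premise $b_ip_jz_{-ij} \cleq b_ip_jz_{-ij}$ and the second premise $b_iq_jz_{-ij} \cgeq b_iq_jz_{-ij}$ both hold by reflexivity of $\cgeq$ (using that $\cgeq$ is assumed to be a weak order, or equivalently that $\cgeq$ is reflexive, which is implicit in the ambient setting), while the third premise is precisely the hypothesis $a_ip_jz_{-ij} \cgeq b_ip_jz_{-ij}$. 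The conclusion of $ij$-3C is then $a_iq_jz_{-ij} \cgeq b_iq_jz_{-ij}$, as required.

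Iterating this argument across the coordinates in $N \setminus \{i\}$ where $x_{-i}$ and $y_{-i}$ differ, using the appropriate pair $(i,j)$ at each step, yields $a_iy_{-i} \cgeq b_iy_{-i}$. There is no real obstacle beyond finding the correct substitution into $ij$-3C; once two of the three hypotheses of triple cancellation are made trivial via reflexivity, the axiom directly transports a preference inequality from one value of the $j$-coordinate to any other, which is exactly the coordinate-wise version of independence.
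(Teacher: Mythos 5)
Your substitution is exactly the trick the paper uses: make the first two premises of triple cancellation trivial by reflexivity and feed the hypothesis in as the third, so the conclusion is the transported preference. The only difference is cosmetic (you fix the repeated element as $b_i$ where the paper uses $a_i$, and you spell out the coordinate-by-coordinate iteration), so this is essentially the paper's own proof.
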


\begin{proof}
  $a_ip_{-i} \cleq a_ip_{-i}, a_iq_{-i} \cgeq a_iq_{-i}, a_ip_{-i} \cgeq b_ip_{-i} \Rightarrow a_iq_{-i} \cgeq b_iq_{-i}.$
\end{proof}

\begin{lemma}
  \label{lm:X-unionof-XSa}
  $X = \bigcup X^{S_i}$.
\end{lemma}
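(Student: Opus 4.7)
Given $z \in X$, the plan is to exhibit an $\S_i$ in the enumeration such that $z \in X^{S_i}$. By \textbf{A3-ACYCL}, the partial order $\S^z$ is acyclic, so it coincides with some $\S_i$ in the finite enumeration of partial orders. I would claim $z \in X^{S_i}$ for this choice; unpacking the definition of $X^{S_i}$, this reduces to verifying $z \in \SE{kj}$ for every pair $(k,j)$ with $k \R^z j$.

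I would dispatch the verification by cases following the three bullets in the definition of $\SE{kj}$. First, in the generic interior case where $z_k$ is not maximal in $X_k$ and $z_j$ is not minimal in $X_j$, the first bullet applies directly from $k \R^z j$, yielding $z \in \SE{kj}$ at once.

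The remaining work is the boundary cases. When $z_k$ is maximal, the slice $\SE[z]{kj}$ degenerates in the $k$-direction: the ``collapsed equivalent points'' assumption forces $x_k = z_k$ whenever $x_k \cgeq_k z_k$, so $kj$-triple cancellation holds vacuously on the slice and $k \R^z j$ is automatic. To invoke the second bullet I would check the ``no flipping'' condition: for no $x_j, y_j$ with $z_j \cgeq_j x_j \cgeq_j y_j$ can $j \R^{x_jz_{-j}} k$ hold while $j \R^{y_jz_{-j}} k$ fails. I would argue this by contradiction, showing that any such flip together with the $\S$-structure at nearby interior points would produce a cycle in some $\S^{\cdot}$, contradicting \textbf{A3-ACYCL}. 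The symmetric case where $z_j$ is minimal is handled by the third bullet via an analogous argument.

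I expect the main obstacle to be precisely this last ``no flipping'' verification on the boundary; the interior reduction is immediate and the degeneracy of the slice at the boundary trivialises triple cancellation, but ruling out inconsistent behaviour of $\R^{\cdot}$ along a coordinate slice requires combining \textbf{A3-ACYCL} with \textbf{A2} (weak separability) and possibly density. If the direct route with $\S_i = \S^z$ fails for some boundary configuration of $z$, my fallback would be to choose a finer acyclic partial order in the enumeration that is consistent both with $\S^z$ and with the pattern of which $\SE{kj}$ contain $z$, and then re-verify acyclicity for that refinement before running the same case split.
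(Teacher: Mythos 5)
Your interior case is exactly the paper's argument: the paper dismisses the lemma with ``Immediate by \textbf{A3}'', which is fully justified for points $z$ with no extreme coordinates, since $k \R^z j$ then gives $z \in \SE{kj}$ by the first bullet of the definition, hence $z \in X^{\S^z}$, and $\S^z$ is one of the indexed orders simply because the enumeration is over the realized orders $\S^w$, $w \in X$ (acyclicity is not what is needed for that step). Note also that the main construction first removes maximal and minimal elements of the $X_i$, so this interior case is the one actually used there.

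The gap is in your boundary case, and it is not a detail that an acyclicity argument can repair: the ``no flipping'' condition is genuinely false for some pairs $(k,j)$ with $k \R^z j$. When $z_k$ is maximal, $k \R^z j$ holds only vacuously (the cone $\SE[z]{kj}$ degenerates), including at points where the intended ordering has $j$ strictly above $k$; at such a point one typically has $x_j, y_j$ with $z_j \cgeq_j x_j \cgeq_j y_j$, $j \R^{x_jz_{-j}} k$ and NOT $j \R^{y_jz_{-j}} k$ (in representation terms, $f_j$ crosses the maximal level $f_k(z_k)$ somewhere below $z_j$; take $x_j = z_j$ and $y_j$ below the crossing). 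Then $z \notin \SE{kj}$ even though $k \R^z j$, so with your choice $\S_i = \S^z$ the pair $(k,j)$ sits in the index set of $X^{\S_i}$ and the membership you need fails; no contradiction with \textbf{A3-ACYCL} arises because nothing is wrong — the point simply belongs to $\NW{kj}$ via its first bullet, i.e.\ the pair must be oriented the other way. Consequently the real work at extreme points is what you relegate to the ``fallback'': for each pair, decide which of $\SE{kj}$, $\NW{kj}$ contains $z$ (showing that at least one always does is precisely the content of the boundary bullets and needs an argument), and then show the assembled orientation is acyclic and is realized as some $\S_a$ in the enumeration so that $X^{\S_a}$ is one of the sets being united. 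As written, your proposal proves the lemma only on $X$ stripped of extreme coordinates, which coincides with the paper's one-line proof, and does not establish it at points with maximal or minimal coordinates.
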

\begin{proof}
  Immediate by \textbf{A3}.
\end{proof}

\begin{definition}
  For any set $I$ of consecutive integers, a set $\{g^k_i : g^k_i \in X_i, k \in I \}$ is a \emph{standard sequence} on coordinate $i$ if exist $z_{-ij}, y^0_j, y^1_j$ such that $
  y^0_j \not \sim_j y^1_j $ and for all $i, i+1 \in I$ we have $g^k_iy^1_jz_{-ij} \sim g^{k+1}_iy^0_jz_{-ij}$. Further, we say that $\{g^k_i : k \in I\}$ \emph{is contained in}
  $X^{S_a}$ if $z_{-ij}, y^0_j, y^1_j$ can be chosen in such a way, that all resulting points are in $X^{S_a}$.
\end{definition}

\begin{lemma}
  \label{lm:A5}
  Axiom \textbf{A5} implies the following condition. Let $\{g^k_i : k \in I \}$ and $\{h^k_j : k \in I \}$ be two standard sequences, each contained in some $X^{S_a}$. Assume also,
  that for some $m \in I$, $g^m_iy^0_lz_{-il} \sim h^m_jw^0_nx_{-jn}$ and $g^{m+1}_iy^0_lz_{-il} \sim h^{m+1}_jw^0_nx_{-jn}$. Then, for all $k \in I$, $g^k_iy^0_lz_{-il} \sim
  h^k_jw^0_nx_{-jn}$.
\end{lemma}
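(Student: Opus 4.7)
The plan is to introduce the abbreviations $P_k := g^k_i y^0_l z_{-il}$ and $Q_k := h^k_j w^0_n x_{-jn}$, so the hypothesis reads $P_m \sim Q_m$ and $P_{m+1} \sim Q_{m+1}$, and then induct on $|k-m|$. At each step I will invoke \textbf{A5} twice, once with coordinate $i$ playing the ``primary'' role and once with coordinate $j$ playing it, and sandwich the new term $P_{k+2}$ between $Q_{k+2}$ and itself (respectively $Q_{k+2}$ between $P_{k+2}$ and itself in the mirror application).

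For the forward step, assume $P_k \sim Q_k$ and $P_{k+1} \sim Q_{k+1}$. I apply \textbf{A5} with the substitution $a_i = c_i = g^k_i$, $b_i = d_i = g^{k+1}_i$, $x_{-i} = y^1_l z_{-il}$, $y_{-i} = y^0_l z_{-il}$, $y^0_{-i} = y^0_l z_{-il}$, $y^1_{-i} = y^1_l z_{-il}$. The two ``interval'' premises $a_i x_{-i} \cleq b_i y_{-i}$ and $c_i x_{-i} \cgeq d_i y_{-i}$ both hold as $\sim$ directly from the $i$-standard sequence. On the $j$-side I put $p_j = h^k_j$, $q_j = r_j = h^{k+1}_j$, $x^0_{-j} = x^1_{-j} = w^0_n x_{-jn}$, and use restricted solvability (\textbf{A8}) to pick $s_j$ with $s_j w^0_n x_{-jn} \sim g^{k+1}_i y^1_l z_{-il}$; the $i$-standard sequence rewrites the right-hand side as $P_{k+2}$. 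The four equivalences $a_i y^0_{-i} \sim p_j x^0_{-j}$, $b_i y^0_{-i} \sim q_j x^0_{-j}$, $c_i y^1_{-i} \sim r_j x^1_{-j}$, $d_i y^1_{-i} \sim s_j x^1_{-j}$ now hold by the alignment hypothesis at $k, k+1$ combined with the $i$-standard sequence and the defining property of $s_j$. Taking finally $e_{-j} = w^1_n x_{-jn}$, $f_{-j} = w^0_n x_{-jn}$, the last premise $p_j e_{-j} \cgeq q_j f_{-j}$ is again an $\sim$ instance of the $j$-standard sequence, and the conclusion $r_j e_{-j} \cgeq s_j f_{-j}$ unfolds via one more $j$-standard-sequence equivalence to $Q_{k+2} \cgeq P_{k+2}$. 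The mirror application of \textbf{A5}, with $i$ and $j$ swapped, supplies a symmetric $s_i$ with $s_i y^0_l z_{-il} \sim Q_{k+2}$ and yields the reverse bound $P_{k+2} \cgeq Q_{k+2}$; combining gives $P_{k+2} \sim Q_{k+2}$. The backward step to $k = m-1$ is structurally identical, using the intervals $[g^{m-1}_i, g^m_i]$ and $[g^m_i, g^{m+1}_i]$ on the $i$-side and their counterparts on the $j$-side.

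The main obstacle will be bookkeeping the 3C-set conditions built into \textbf{A5}: each of its four quadruples of points must live inside a single $X^{\S}$-slice with the appropriate coordinate essential there. Because the hypothesis says each standard sequence is contained in some $X^{\S_a}$ (respectively $X^{\S_b}$) on which the primary coordinate is essential, all $i$-side quadruples sit inside $X^{\S_a}$, and almost all $j$-side points inside $X^{\S_b}$. The only genuinely new point is $s_j w^0_n x_{-jn}$, but it is $\cgeq_j$-sandwiched between $h^{k+1}_j w^0_n x_{-jn}$ and $h^{k+2}_j w^0_n x_{-jn}$, which are both in $X^{\S_b}$, so the same ``sandwich'' argument used in the proof of Theorem~\ref{theo:additive-XzSa} places $s_j w^0_n x_{-jn}$ in $X^{\S_b}$ as well; the symmetric claim handles $s_i$. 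Existence of $s_j$ and $s_i$ themselves follows from restricted solvability applied between two consecutive already-aligned standard-sequence points, so the entire induction goes through.
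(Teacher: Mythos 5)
Your overall route is the paper's: fix the induction step, feed the two already-aligned points and the standard-sequence equivalences into \textbf{A5}, and use restricted solvability to manufacture one auxiliary point on the $j$-side; the $X^{S}$-membership bookkeeping is also handled in the same spirit. The genuine gap is the unconditional appeal to \textbf{A8} for the existence of $s_j$ (and of the mirror point $s_i$). Restricted solvability produces $s_j$ with $s_jw^0_nx_{-jn} \sim g^{k+1}_iy^1_lz_{-il}$ only if this target is sandwiched between two points differing from $w^0_nx_{-jn}$ in coordinate $j$ alone. The lower bound is available (in the increasing case $g^{k+1}_iy^1_lz_{-il} \cgeq g^{k+1}_iy^0_lz_{-il} \sim h^{k+1}_jw^0_nx_{-jn}$), but the only candidate upper bound is $h^{k+2}_jw^0_nx_{-jn} \cgeq g^{k+2}_iy^0_lz_{-il}$, which is exactly (half of) the claim you are proving; it cannot come from ``two consecutive already-aligned standard-sequence points'', since those all lie weakly below your $P_{k+2}$. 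Symmetrically, your mirror application needs some $u_i$ with $u_iy^0_lz_{-il} \cgeq Q_{k+2}$, which is equally unavailable, and in general only one of $s_i$, $s_j$ need exist, so the plan of running both applications unconditionally and combining $Q_{k+2} \cgeq P_{k+2}$ with $P_{k+2} \cgeq Q_{k+2}$ does not get off the ground.

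The repair is the paper's device: argue by contradiction (equivalently, split into $P_{k+2} \sim Q_{k+2}$, $Q_{k+2} \cgt P_{k+2}$, $P_{k+2} \cgt Q_{k+2}$). Assuming, say, $Q_{k+2} \cgt P_{k+2}$ supplies the missing upper bound, so \textbf{A8} yields $\hat h_j$ with $\hat h_jw^0_nx_{-jn} \sim P_{k+2}$. Note also that with this case hypothesis your forward instantiation of \textbf{A5} only returns $Q_{k+2} \cgeq P_{k+2}$, i.e.\ the case hypothesis itself; the contradiction comes from the application with the roles of the pairs exchanged (take $a_i = c_i = g^{k+1}_i$, $b_i = d_i = g^k_i$, $p_j = h^{k+1}_j$, $q_j = h^k_j$, $r_j = \hat h_j$, $s_j = h^{k+1}_j$, $e_{-j} = w^0_nx_{-jn}$, $f_{-j} = w^1_nx_{-jn}$), which gives $\hat h_jw^0_nx_{-jn} \cgeq h^{k+1}_jw^1_nx_{-jn} \sim h^{k+2}_jw^0_nx_{-jn}$, hence $P_{k+2} \cgeq Q_{k+2}$, contradicting the case hypothesis; the other strict case is symmetric. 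With this case analysis inserted, your induction and your sandwich argument for placing $\hat h_jw^0_nx_{-jn}$ in the appropriate $X^{S_b}$ go through and essentially coincide with the paper's proof.
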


\begin{proof}
  The proof is very similar to the one from \citep{krantz1971foundation} (Lemma 5 in section 8.3.1).  Assume wlog that $\{g^k_i: k \in I, g^k_iy^1_lz_{-il} \sim
  g^{k+1}_iy^0_lz_{-il}\}$ is an increasing standard sequence on $i$, which is contained in $X^{S_a}$, whereas $\{h^k_j: h^k_jw^1_nx_{-jn} \sim h^{k+1}w^0_nx_{-jn}\}$ is an
  increasing standard sequence on $j$, and lies entirely in $X^{S_b}$. We will show that $g^{m+2}_iy^0_lz_{-il} \sim h^{m+2}_jw^0_nx_{-jn}$ from which everything follows by
  induction.

  Assume $ h^{m+2}_jw^0_nx_{-jn} \cgt g^{m+2}_iy^0_lz_{-il} $. Then, by restricted solvability exists $\hat{h}_j \in X_j$ such that $\hat{h}_jw^0_nx_{-jn} \sim g^{m+2}_iy^0_lz_{-il}$. By
  \textbf{A5},
    \begin{equation*}
      \left.
      \begin{aligned}
        g^{m}_iy^1_lz_{-il}   & \sim g^{m+1}_iy^0_lz_{-il} \\
        g^{m+1}_iy^1_lz_{-il} & \sim g^{m+2}_iy^0_lz_{-il} \\
        g^m_iy^0_lz_{-il}     & \sim h^m_jw^0_nx_{-jn}     \\ 
        g^{m+1}_iy^0_lz_{-il} & \sim h^{m+1}_jw^0_nx_{-jn} \\
        g^{m+1}_iy^0_lz_{-il} & \sim h^{m+1}_jw^0_nx_{-jn} \\
        g^{m+2}_iy^0_lz_{-il} & \sim \hat{h}_jw^0_nx_{-jn} \\
        h^{m}_jw^1_nx_{-jn}   & \sim h^{m+1}_jw^0_nx_{-jn} 
      \end{aligned}
    \right\} \Rightarrow
    h^{m+1}_jw^1_nx_{-jn} \sim \hat{h}_jw^0_nx_{-jn}.
    \end{equation*}

    By definition of $\{h^k_j\}$, we have $h^{m+1}_jw^1_nx_{-jn} \sim h^{m+2}_jw^0_nx_{-jn} $. Thus, $h^{m+2}_jw^0_nx_{-jn} \sim \hat{h}_jw^0_nx_{-jn}$, hence also
    $g^{m+2}_iy^0_lz_{-il} \sim h^{m+2}_jw^0_nx_{-jn}$, a contradiction. The other cases are symmetrical. 
\end{proof}

\begin{lemma}
  \label{wakker-lemma-11}
Let there be $x,y,i$ such that $x \cgt y$, $x \in X^{S_a}$, $i \S^x k$ for all $k: i \S_a k$, and $x_i$ is non-minimal if $i$ is minimal in
$\S_a$. Then exists $z_i$, such that $z_i \cgeq_i x_i$, $i \S^{z_ix_{-i}} k$ for all $k: i \S_a k$, and $z_ix_{-i} \cgt y$. 
Similarly, let there be $x,y,i$ such that $y \cgt x$, $x \in X^{S_a}$, $k \S^x i$ for all $k: k \S_a i$, and $x_i$ is non-maximal if $i$ is maximal in
$\S_a$. Then exists $z_i$, such that $x_i \cgeq_i z_i$, $k \S^{z_ix_{-i}} i$ for all $k: k \S_a i$, and $y \cgt z_ix_{-i}$. 
\end{lemma}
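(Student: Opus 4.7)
My plan is to prove the first case by a perturbation argument on coordinate $i$, using restricted solvability (\textbf{A8}), order density, and the stability of the strict relations under small perturbations; the second case is symmetric under reversal of preference and roles of ``min'' and ``max''.

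First, I would locate a candidate $z_i$ satisfying the preference condition. Using density and $x \cgt y$, pick $w \in X$ with $x \cgt w \cgt y$; then restricted solvability applied to coordinate $i$ yields $z_i \in X_i$ with $z_ix_{-i} \sim w$, so $z_ix_{-i} \cgt y$. The non-extremality assumption on $x_i$, combined with the strict relations $i \S^x k$ for every $k$ with $i \S_a k$, provides the room needed for $z_i$ to lie on the required side of $x_i$: the strict $\S$ relations give a margin when $i$ has $\S_a$-neighbours below, and the non-minimality of $x_i$ handles the $\S_a$-minimal case where no such margin exists.

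Next, I would verify that $i \S^{z_ix_{-i}} k$ for each $k: i \S_a k$, along with the remaining relations placing $z_ix_{-i}$ in $X^{S_a}$. The relation $i \S^x k$ means that $ki$-triple cancellation fails on $\NW[x]{ik}$, witnessed by a quadruple in $X_i^4 \times X_k^4$ satisfying the three premises of $ki$-3C but violating its conclusion. By order density, these witnesses can be chosen strictly interior to $\NW[x]{ik}$ (i.e., bounded away from the $x_i$-boundary); consequently, for $z_i$ sufficiently close to $x_i$, the same quadruple lies in $\NW[z_ix_{-i}]{ik}$, so the failure of $ki$-3C persists and hence $i \S^{z_ix_{-i}} k$. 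The same interior-witness argument handles the pairs $(j,l)$ not involving $i$, whose defining sets depend on $z_i$ only through the parameter $z_{-jl}$, so small perturbations preserve both 3C failures and validities.

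The main technical obstacle is showing that a single $z_i$, chosen close enough to $x_i$, simultaneously preserves $z_ix_{-i} \cgt y$ and every relevant $\S$ relation placing $z_ix_{-i}$ in $X^{S_a}$. Each such preservation condition is ``open'' in $z_i$ in the sense that it is stable on an interval of $X_i$ around $x_i$; density together with restricted solvability provide enough flexibility to select $z_i$ inside the intersection of finitely many such intervals, following the construction in \cite{wakker1991additive-RO}. The key technical point is the interior-witness step, as it is what converts the qualitative ``strict'' conditions into robust ones under a coordinate perturbation.
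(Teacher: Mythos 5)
Your step (a) is in the same spirit as the paper's own (one-line) proof, which just appeals to restricted solvability, monotonicity and Wakker's Lemma 11; but as written your proposal has two genuine problems. First, a direction problem. The printed conclusion asks for $z_i \cgeq_i x_i$; since increasing the $i$-th coordinate of the base point only enlarges the sets $\SE[z_ix_{-i}]{ki}$ on which the 3C-failures witnessing $i \S^{x} k$ live, and pointwise monotonicity preserves $\cgt y$, that version is satisfied by $z_i = x_i$ itself and needs no argument. The nontrivial content — and the way the lemma is actually used in Lemma \ref{lm:maximal-in-xsa} and Theorem \ref{theo:representation-with-extremes}, and what Wakker's Lemma 11 supplies — is a \emph{strict} perturbation in the opposite direction (e.g.\ strictly lowering a maximal $x_i$) while keeping $i \S^{z_ix_{-i}} k$ and $z_ix_{-i} \cgt y$. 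Your construction in fact produces exactly that: solving $z_ix_{-i} \sim w$ with $x \cgt w$ forces $x_i \cgt_i z_i$ by weak separability, so you prove the downward version while never reconciling it with the inequality you were asked to deliver. Moreover, the solvability step is not licensed as stated: density is assumed only within a coordinate, so you cannot simply ``pick $w$ with $x \cgt w \cgt y$'', and \textbf{A8} needs a two-sided sandwich $x_ix_{-i} \cgeq w \cgeq b_ix_{-i}$. The standard repair is a case split: if $y \cgeq b_ix_{-i}$ for some $b_i$, solve $c_ix_{-i} \sim y$ and take $z_i$ strictly between $c_i$ and $x_i$ by within-coordinate density; otherwise every $z_i$ preserves $\cgt y$.

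Second, the ``interior-witness'' step, which you yourself identify as the key point, is asserted rather than proved, and it is precisely where the work lies. For the pairs $(i,k)$ required by the conclusion you must show that the quadruple violating $ki$-3C that witnesses $i \S^{x} k$ can be chosen with its $i$-coordinates strictly below $x_i$; nudging witness points preserves neither the weak premises nor the strict violation automatically, so ``by order density'' is not an argument. The paper's own route to this ``room below'' is through the boundary structure of Section A.5 — Lemma \ref{lm:E-thin-1ess}, Theorem \ref{theo:theta-qual} and Corollary \ref{lm:qual-coord-order}, together with the Closedness and density assumptions — which make $\{z_i : i \S^{z_ix_{-i}} k\}$ an up-set whose boundary is a single $\E$-point; some such invocation (or a direct argument of comparable length) is needed. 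Finally, your claim that the same persistence argument handles pairs $(j,l)$ not involving $i$ is false as stated: changing $z_i$ moves the entire two-dimensional slice on which $j \R l$ is tested, so the old witnesses and validities refer to different points and nothing ``persists''. Fortunately the stated conclusion only concerns pairs involving $i$, so that portion of your argument should simply be deleted rather than repaired.
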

\begin{proof}
  By restricted solvability and monotonicity. See \cite{wakker1991additive-RO} Lemma 11. 
\end{proof}

\section{Essentiality and monotonicity}
\label{sec:essentiality}

In what follows the essentiality of coordinates within various $X^{S_i}$ is critical. The central mechanism to guarantee consistency in the
number of essential coordinates within various $X^{S_i}$ is bi-independence, which is closely related to comonotonic strong monotonicity of
\cite{wakker1989additive}.

In the Choquet integral representation problem for a heterogeneous product set $X = X_1 \times \ldots \times X_n$, strong monotonicity is
actually a necessary condition. It is directly implied by \textbf{A6} - bi-independence, together with the structural assumption.

\begin{lemma} Pointwise monotonicity.
  \label{lm:4}
  If for all $i \in N$ it holds $x_i \cgeq_i y_i$ , then $x \cgeq y$.
\end{lemma}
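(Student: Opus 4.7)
The plan is to substitute coordinates of $y$ by the corresponding coordinates of $x$ one at a time, apply the definition of $\cgeq_i$ at each step, and then chain the resulting comparisons together using transitivity of the weak order \textbf{A1}.

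Concretely, I would introduce a sequence of intermediate points $z^0, z^1, \ldots, z^n$ in $X$ defined by $z^0 := y$ and, for $k = 1, \ldots, n$, $z^k := (x_1, \ldots, x_k, y_{k+1}, \ldots, y_n)$, so that $z^n = x$. Consecutive points $z^{k-1}$ and $z^k$ agree on every coordinate except the $k$-th, on which $z^k$ takes the value $x_k$ and $z^{k-1}$ takes $y_k$. Since by hypothesis $x_k \cgeq_k y_k$, and $\cgeq_k$ was defined (via weak separability \textbf{A2}) to mean precisely that $x_k w_{-k} \cgeq y_k w_{-k}$ for every $w_{-k} \in X_{-k}$, we get $z^k \cgeq z^{k-1}$ immediately by taking $w_{-k}$ to be the common $(-k)$-part of $z^{k-1}$ and $z^k$.

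Chaining these $n$ relations together by transitivity of $\cgeq$ (\textbf{A1}) gives
\begin{equation*}
x \;=\; z^n \;\cgeq\; z^{n-1} \;\cgeq\; \cdots \;\cgeq\; z^1 \;\cgeq\; z^0 \;=\; y,
\end{equation*}
which is the desired conclusion.

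There is really no substantive obstacle here: the lemma is an almost immediate consequence of the definition of the marginal relations $\cgeq_i$ produced by \textbf{A2}, together with the Cartesian product structure of $X$ (which guarantees that every intermediate $z^k$ is indeed an element of $X$). No use of the more delicate axioms \textbf{A3}--\textbf{A6} or of any structural assumption is needed for this particular step; those enter only when one wants the stronger \emph{strict} monotonicity discussed in the surrounding section.
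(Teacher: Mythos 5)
Your proof is correct and is essentially the paper's own argument: the paper's proof is precisely the one-line chain $x \cgeq y_1x_{-1} \cgeq y_{12}x_{-12} \cgeq \ldots \cgeq y$, i.e.\ the same coordinate-by-coordinate substitution using the definition of $\cgeq_i$ from \textbf{A2} and transitivity from \textbf{A1}. You merely run the chain from $y$ up to $x$ instead of from $x$ down to $y$, which is an immaterial difference.
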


\begin{proof}
  $x  \cgeq y_1x_{-1} \cgeq y_{12}x_{-12} \cgeq \ldots \cgeq y$.
\end{proof}

\begin{lemma}
  \label{lm:strong-com-monotonicity}
  If $i$ is essential on $X^{S_a}$, then $a_i \cgeq_i b_i$ iff $a_ix_{-i} \cgt b_ix_{-i}$ for all $a_ix_{-i},b_ix_{-i} \in X^{S_a}$.
\end{lemma}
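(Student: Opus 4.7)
My plan is to prove the two directions of the iff separately, reading the statement (under the ``collapsed equivalent points'' structural assumption) as an equivalence between $a_i \cgt_i b_i$ (equivalently, $a_i \cgeq_i b_i$ with $a_i \neq b_i$) and $a_ix_{-i} \cgt b_ix_{-i}$. The backward direction is immediate from weak separability (\textbf{A2}): a witnessed strict preference $a_ix_{-i} \cgt b_ix_{-i}$ forces $a_iy_{-i} \cgeq b_iy_{-i}$ for every $y_{-i}$, hence $a_i \cgeq_i b_i$, and the presence of the strict comparison rules out $b_i \cgeq_i a_i$.

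The forward direction is where the work is. Assume $a_i \cgt_i b_i$; by definition there exists $y^*_{-i}$ with $a_iy^*_{-i} \cgt b_iy^*_{-i}$. Weak separability already gives $a_ix_{-i} \cgeq b_ix_{-i}$ for the target slice, so it suffices to rule out $a_ix_{-i} \sim b_ix_{-i}$ by contradiction. The natural tool is bi-independence (\textbf{A6}): if I can exhibit any pair $(p_i, q_i)$ such that $p_ix_{-i}, q_ix_{-i} \in X^{S_a}$ and $p_ix_{-i} \cgt q_ix_{-i}$, then \textbf{A6} applied with the roles $(a_{\textbf{A6}}, b_{\textbf{A6}}) = (p_i, q_i)$ (local strict witness at $x_{-i}$) and $(c_{\textbf{A6}}, d_{\textbf{A6}}) = (a_i, b_i)$ (remote strict witness at $y^*_{-i}$) yields $a_ix_{-i} \cgt b_ix_{-i}$, contradicting the hypothesized equivalence.

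Thus the whole task reduces to producing a \emph{local essentiality witness} at the target slice $x_{-i}$. Essentiality of $i$ on $X^{S_a}$ only delivers a witness $p_iz^*_{-i} \cgt q_iz^*_{-i}$ with both points in $X^{S_a}$, at some possibly different slice $z^*_{-i}$. To migrate it, I would use the description of $X^{S_a}$ as an intersection of $\SE_{kj}$ and $\NW_{kj}$ sets (so that being in $X^{S_a}$ at a slice $x_{-i}$ is determined coordinate-by-coordinate through $\cgeq$-comparisons to fixed thresholds) together with restricted solvability (\textbf{A8}) and the ``closedness'' structural assumption. The plan is to replace $z^*_{-i}$ by $x_{-i}$ one coordinate at a time, using restricted solvability to adjust $p_i, q_i$ inside $X_i$ if necessary so that $p_ix_{-i}, q_ix_{-i}$ remain in $X^{S_a}$, and invoking \textbf{A6} at each intermediate slice to preserve the strict preference. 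Weak separability guarantees at each step that the weak preference $p_ix_{-i} \cgeq q_ix_{-i}$ survives; an application of \textbf{A6} (with $(p, q)$ itself, using the existing strict preference at the previous slice as the global witness) upgrades it to strict, provided some other essentiality witness lives at the intermediate slice. Iterating this through the finitely many coordinates of $-i$ furnishes the required local witness at $x_{-i}$.

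The main obstacle is precisely this migration step: ensuring that along the chain of slices from $z^*_{-i}$ to $x_{-i}$ one never loses membership in $X^{S_a}$ and one always retains at least one strict witness to feed into \textbf{A6}. This is delicate because \textbf{A6} is the only tool available to upgrade $\cgeq$ to $\cgt$ at a new slice, and it itself requires a pre-existing strict witness at that slice, so the argument must be arranged so that the pair $(p_i, q_i)$ from the previous slice serves as its own witness after a small restricted-solvability adjustment. The closedness and density structural assumptions are precisely what make this bootstrapping possible without falling into extreme (and possibly inessential) configurations.
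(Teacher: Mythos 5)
Your two-step skeleton is precisely the paper's own proof, which consists of exactly those two sentences: the structural assumption (equivalently, the definition of $\cgt_i$) supplies a slice $z_{-i}$ with $a_iz_{-i} \cgt b_iz_{-i}$, and \textbf{A6} is then invoked to conclude $a_ix_{-i} \cgt b_ix_{-i}$ on $X^{S_a}$; the backward direction via \textbf{A2} is as you say. The paper performs no migration of the essentiality witness at all — it reads the hypothesis ``$i$ is essential on $X^{S_a}$'' as supplying the strict comparison inside $X^{S_a}$ that \textbf{A6} needs and stops there. So the part of your proposal that goes beyond the paper is the coordinate-by-coordinate transport of the witness from $z^*_{-i}$ to the target slice $x_{-i}$, and that is also the part that does not work as described. \textbf{A6} can only upgrade $\cgeq$ to $\cgt$ at a slice which already carries a strict pair of $X^{S_a}$-points at that same slice; the strict preference you propagate lives at the previous slice, and adjusting $p_i,q_i$ inside $X_i$ by restricted solvability never changes the slice, so the pair cannot ``serve as its own witness'' at the new one. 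Neither closedness nor density repairs this: the bootstrap you describe is circular, exactly the obstacle you flag, and your sketch leaves it unresolved.

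If one does want to justify the slice-to-slice transfer rather than accept the paper's terse reading, the working tool is not \textbf{A6} but triple cancellation/independence on the rectangular subsets. On each block $X^{z(S_a)}$ generalized triple cancellation holds (cf. the proof of Theorem \ref{theo:additive-XzSa}), hence independence by Lemma \ref{lm:3C-independence}; then a pair $(p_i,q_i)$ strict at one slice of the block is strict at every slice of the block, since $q_iv_{-i} \cgeq p_iv_{-i}$ together with independence would force $q_iu_{-i} \cgeq p_iu_{-i}$, contradicting strictness at $u_{-i}$. Transport across $X^{S_a}$ is then done by chaining overlapping blocks as in the $r',z'$ construction of Theorem \ref{theo:Vsa-on-Xsa}, after which your final application of \textbf{A6}, with $(a_i,b_i)$ strict at the remote slice $y^*_{-i}$, gives the conclusion. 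In short: your reduction to ``produce a local witness at $x_{-i}$, then apply \textbf{A6}'' is a faithful reading of the axiom and of the paper's intent, but the mechanism you propose for producing that witness would fail; it has to run on 3C/independence within the rectangular cones, not on \textbf{A6} itself.
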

\begin{proof}
  If $a_i \cgeq_i b_i$ then by the structural assumption exists $a_iz_{-i} \cgt b_iz_{-i}$. The result follows by bi-independence (\textbf{A6}).
\end{proof}

Conceptually, Lemma \ref{lm:strong-com-monotonicity} implies that if a coordinate $i$ is essential on some subset of $X^{S_a}$, then it is
also essential on the whole $X^{S_a}$.  This allows us to make statements like ``coordinate $i$ is essential on $X^{S_a}$''.

\section{Equispaced sequences}
\label{sec:equispaced-sequences}

A usual standard sequence goes along a single dimension as defined above. In this paper we often require to move along several dimensions,
one at a time, maintaining the increment between steps constant in some sense. In order to achieve this we will introduce the concept of
\emph{equispaced} sequences \footnote{See also \citep{bouyssou2010additive} for a similar idea}. Figure \ref{fig:equispaced} illustrates the
process.

\begin{figure}[h!]
  \begin{tikzpicture}[scale=1.5]

    % Draw axes
    \draw [<->,thick, name path=axis] (0,5) node (yaxis) [left] {$X_j$}
        |- (5.5,0) node (xaxis) [below] {$X_i$};

    \coordinate (r0) at (1,1);
    \draw[dashed, name path=r0xaxis] (r0) -- (5.5,1) coordinate (r0x5);
    \draw[dashed, name path=r0yaxis] (r0) -- (1,4.7) coordinate (r0y5);
    % Draw two intersecting lines
    \draw[name path=diag] (0,0) coordinate (a_1) -- (4,4) coordinate (a_2);
    \draw (1.5,0) coordinate (b_1) -- (1,1) coordinate (b_2) -- (0,1.3) coordinate (b_3);
    % Calculate the intersection of the lines a_1 -- a_2 and b_1 -- b_2
    % and store the coordinate in c.
    \coordinate (c) at (intersection of a_1--a_2 and b_1--b_2);

    \fill[black] (r0) circle (1.5pt) node [below left=4pt and -3pt] {$r^0$};

    \draw [dashed] (1.5,0) -- ($(r0)!(b_1)!(r0x5)$);
    \fill [black] ($(r0)!(b_1)!(r0x5)$) circle (1pt);% node [font=\scriptsize, above right] {$\alpha^{(1)}_1$};
    \draw let \p1 = ($(r0)!(b_1)!(r0x5)$) in (\x1,1.5pt) -- (\x1,-1.5pt) node[font=\scriptsize, below] {$\alpha^{1}_i$};
    \coordinate (r1) at (3.8,3.8);
    \fill[black] (r1) circle (1.5pt) node [above=4pt] {$r^1$};

    \coordinate (dir1) at ($4*(b_2)-4*(b_1)$);
    \coordinate (dir2) at ($4*(b_3)-4*(b_2)$);

    \foreach  \t in {1.5, 2, ..., 4}
    {
    \coordinate (t) at (\t,0);
    \path[name path=line1] (t) -- +(dir1);
    \draw[name intersections={of=diag and line1,by={Int1}}] (t) -- (Int1);
    \path[name path=line2] (Int1) -- +(dir2);
    \draw[name intersections={of=axis and line2,by={Int2}}] (Int1) -- (Int2);
    \fill [name intersections={of=r0xaxis and line1, by={ssp}}] (ssp) circle (1pt);
%    \fill [name intersections={of=p6 and line1, by={te1}}] (te1) circle (1pt);
  }
  
    \path[name path=line1] (t) -- +(dir1);
    \path[name path=p6] (r1) -| (xaxis -| r1);
    \draw[dashed, name intersections={of=line1 and p6,by={a2_5}}] (r1) -- (a2_5);

    \fill[black] (a2_5) circle (1pt); % node [font=\scriptsize, above right] {$\gamma^{(k)}_2$};
    \draw let \p1 = (a2_5) in (1.5pt,\y1) -- (-1.5pt,\y1) node[font=\scriptsize, left] {$\gamma^{k}_j$};
    \fill [name intersections={of=r0xaxis and line1, by={a_5}}] (a_5) circle (1pt); % node [above right] {\scriptsize $\alpha^{(k)}_1$};
    \draw let \p1 = (a_5) in (\x1,1.5pt) -- (\x1,-1.5pt) node[font=\scriptsize, below] {$\alpha^{k}_i$};

    % \fill [name intersections={of=r0yaxis and line2, by={b_5}}] (b_5) circle (1pt);% node [font=\scriptsize, above right] {$\beta_i$};
    % \draw let \p1 = (b_5) in (1.5pt,\y1) -- (-1.5pt,\y1) node[font=\scriptsize, left] {$\beta^{(k)}_2$};

    \coordinate (t) at (4.5,0);
    \path[name path=line1] (t) -- +(dir1);
    \fill [name intersections={of=p6 and line1, by={te1}}] (te1) circle (1pt);
    \draw[name intersections={of=diag and line1,by={Int1}},name path=sl6] (t) -- (Int1);
    \path [name intersections={of=r0xaxis and sl6, by={a_6}}] (a_6) -- (Int1);
    \fill[black] (a_6) circle (1pt);% node [font=\scriptsize, below right] {$\alpha^{(k+1)}_1$};
    \draw let \p1 = (a_6) in (\x1,1.5pt) -- (\x1,-1.5pt) node[font=\scriptsize, below left=0pt and -16pt] {$\alpha^{k+1}_i$};
    \path[name path=line2] (Int1) -- +(dir2);
    \draw[name intersections={of=axis and line2,by={Int2}}] (Int1) -- (Int2);
    
    \coordinate (a2_6) at (intersection of r1--a2_5 and t--Int1);
    \fill[black] (a2_6) circle (1pt);% node[font=\scriptsize, above left] {$\gamma^{(k+1)}_2$};
    \draw let \p1 = (a2_6) in (1.5pt,\y1) -- (-1.5pt,\y1) node[font=\scriptsize, left] {$\gamma^{k+1}_j$};
    \path [name path = ss2_unit] (a2_5)--+(1,0);
    \draw[dashed,name intersections={of=ss2_unit and sl6,by={ss2}}] (a2_5) -- (ss2);
 %   \fill[black] (ss2) circle (1pt);
    \draw[dashed, name path = ss2_low] let \p1 = (ss2) in (\x1,4) -- ($(0,0)!(ss2)!(6,0)$);% node[font=\scriptsize, below=6pt] {$x_1$};    
%    \draw let \p1 = ($(0,0)!(ss2)!(6,0)$) in (\x1,1.5pt) -- (\x1,-1.5pt) node[font=\scriptsize, below=4.5pt] {$x_1$};
    \draw[dashed] (a2_5) -- ($(0,0)!(a2_5)!(6,0)$);    
    
%    \fill [name intersections={of=r0yaxis and line2, by={b_6}}] (b_6) circle (1pt);% node [font=\scriptsize, above right] {$\beta_{i+1}$};
    % \draw let \p1 = (b_6) in (1.5pt,\y1) -- (-1.5pt,\y1) node[font=\scriptsize, left] {$\beta^{(k+1)}_2$};

    \coordinate (t) at (5,0);
    \path[name path=line1] (t) -- +(dir1);
    \path[name intersections={of=diag and line1,by={Int1}}, name path=sl7] (t) -- (Int1);
    \path[name path=line2] (Int1) -- +(dir2);
    \draw[name intersections={of=axis and line2,by={Int2}}] (Int1) -- (Int2);
    \fill [name intersections={of=p6 and line1, by={te2}}] (te2) circle (1pt);
    \draw let \p1 = (te2) in (1.5pt,\y1) -- (-1.5pt,\y1) node[font=\scriptsize, left] {$\gamma^{k+2}_j$};

    \coordinate (a2_7) at (intersection of r1--a2_5 and t--Int1);
    \path[name path=tint] (t)--(Int1);
    \path [name intersections={of=ss2_low and tint, by=g2_7}];
 %   \fill[black] (g2_7) circle (1pt);% node [font=\scriptsize, above right] {$\gamma^{(k+2)}_2$};
  
% \draw let \p1 = (a2_7) in (1.5pt,\y1) -- (-1.5pt,\y1) node[font=\scriptsize, left] {$\gamma^{(k+2)}_2$};
   \draw (a2_7) -- (Int1);
   \draw[name intersections={of=sl7 and r0xaxis, by={a7}}] (a2_7) -- (a7);

\end{tikzpicture}

%%% Local Variables: 
%%% mode: latex
%%% TeX-master: "../nvar_simple"
%%% End:   
  \caption{Equispaced sequences in two dimensions}
  \label{fig:equispaced}
\end{figure}
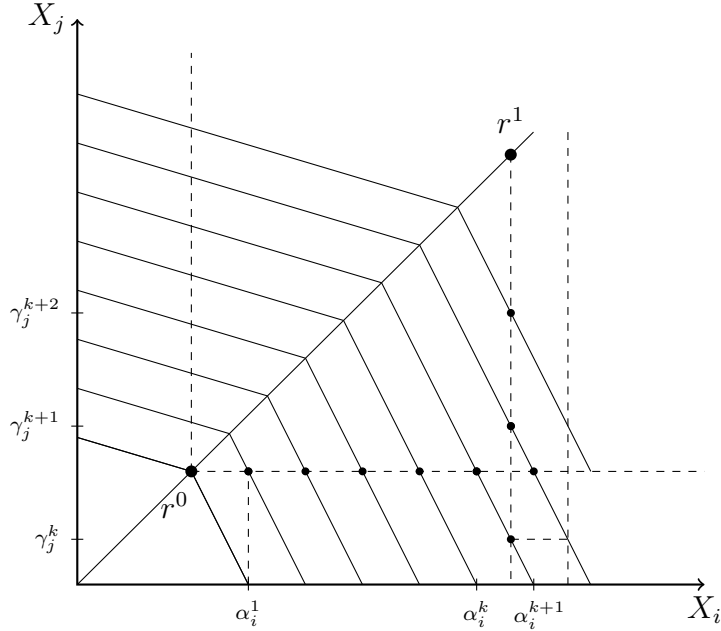

Assume that $r^0, r^1$ are such that $i \R^{r^0} j$ and $i \R^{r^1} j$. We would like to build a sequence from $r^0$ to $r^1$ staying in the
area where $i \R j$. We can choose the size of the sequence step arbitrarily. However, the problem is that $r^1$ does not have an equivalent
point with the second coordinate equal to $r^0_j$, so we cannot build a ``normal'' standard sequence to achieve that. Our aim is to maintain
the sequence within set where $i \R j$. We also assume that $X_i$ and $X_j$ do not have maximal or minimal elements (or they have been removed).

By density and the absence of maximal and minimal elements, we can find $\alpha^k_i, \alpha^{k+1}_i$ such that
$\alpha^{k+1}_i \cgeq_i r^1_i \cgeq_i \alpha^k_i$. We need to change the direction of the sequence from the dimension $i$ to the dimension
$j$ at $r^1_i$.  We construct a point equivalent to $a^k_i$ and a point equivalent to $a^{k+1}_i$ such that their $i$'s coordinate is
$r^1_i$ ( points $\gamma^k_j$ and $\gamma^{k+1}_j$). Since we can choose the step of the sequence arbitrarily, by density and absence of
maximal elements, we can move on and construct a standard sequence on the coordinate $j$ using these two points. 

Remarkably the spacing between subsequent members of the equispaced sequence $\alpha^1,\ldots,\alpha^{k-1},\gamma^k,\gamma^{k+1},\ldots$
stays in a certain sense the same, no matter along which dimension we are moving. Once an additive interval scale is constructed, the
vague notion of the equal spacing will convert into a clear constant difference of values for subsequent members of the sequence.

\paragraph{Extension of A5 to equispaced sequences}

Construction of equispaced sequences allows us to extend the statement of \textbf{A5}, and more precisely that of Lemma \ref{lm:A5} to equispaced
sequences.

\begin{lemma}
  If $g^k$ and $h^k$ are two equispaced sequences entirely lying in $X^{S_a}$ and $X^{S_b}$ correspondingly, and for some $i$ we have
  $g^i \sim h^i$ and $g^{i+1} \sim h^{i+1}$, then for all $j$ such that exist $g^j$ and $h^j$ we have $g^j \sim h^j$.
\end{lemma}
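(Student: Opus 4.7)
The plan is to extend the proof of Lemma \ref{lm:A5} by a straightforward induction that exploits the segment structure of equispaced sequences. Each equispaced sequence is a concatenation of standard sequences on individual coordinates, with the transition between adjacent segments supplied by an equivalence of the form $(\alpha^l_i, y^0_j, z_{-ij}) \sim (r^1_i, \gamma^l_j, z_{-ij})$ as constructed in Section \ref{sec:equispaced-sequences}. Consequently, at every index near a transition, each sequence admits dual representatives in both the old and the new coordinate direction, and these representatives can be fed into Lemma \ref{lm:A5}.

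Without loss of generality I take $k > i$; the other side is symmetric. The induction is on $k-i$, with base cases $k \in \{i,i+1\}$ given by hypothesis. For the inductive step I would assume $g^l \sim h^l$ for all $l \leq k$ and split on whether the portion $g^{k-1} \to g^k \to g^{k+1}$ proceeds along a single coordinate, and analogously for $h$. In the case where both sequences remain on their current coordinates between indices $k-1, k, k+1$, the triples $(g^{k-1}, g^k, g^{k+1})$ and $(h^{k-1}, h^k, h^{k+1})$ form portions of two (single-coordinate) standard sequences lying in $X^{S_a}$ and $X^{S_b}$ respectively, so Lemma \ref{lm:A5} with $m = k-1$ yields $g^{k+1} \sim h^{k+1}$ directly from the inductive hypothesis. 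When a direction change occurs at step $k$ in $g$ (respectively $h$, or in both), I would use the transition equivalence from the equispaced-sequence construction to replace the offending point by its equivalent representative in the previously used coordinate, thereby reducing to the single-coordinate case, applying Lemma \ref{lm:A5}, and finally using transitivity of $\sim$ to conclude.

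The main obstacle will be the bookkeeping when both sequences change direction simultaneously, possibly into different target coordinates: each substituted representative must remain within the appropriate $X^{S_a}$ or $X^{S_b}$, and the composed chain of equivalences must still satisfy the hypotheses of Lemma \ref{lm:A5} on each segment used. Restricted solvability (\textbf{A8}), density, and the closedness assumption together ensure the existence of the necessary bridging points and of the equivalent representatives in the required regions; the real technical care lies in verifying that these can be assembled coherently across every combination of direction-change patterns in $g$ and $h$.
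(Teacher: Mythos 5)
Your proposal is correct and takes essentially the same route as the paper: the paper also propagates the equivalence segment by segment via Lemma \ref{lm:A5}, using the dual old-coordinate/new-coordinate representatives that the equispaced-sequence construction supplies at each direction change together with transitivity of $\sim$, and handles further turns by ``applying the technique again'' (your explicit induction on $k-i$ just formalizes that iteration).
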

\begin{proof}
  Without loss of generality assume that $g^i := g^i_kg_{-k}$ and $g^{i+1} := g^{i+1}_kg_{-k}$, while $h^i := h^i_lh_{-l}$ and
  $h^{i+1} := h^{i+1}_lh_{-l}$, i.e. in both cases the points are from subsequences on the same dimensions. Assume further, that
  $h^{i+2} := h^{i+2}_lh_{-l}$ and $g^{i+2} := g^{i+2}_mg'_{-m}$, i.e. there is a change of dimension in the sequence $g^k$. We will show
  that this entails that the following steps of the equispaced sequence $g^k$ will be equivalent to the corresponding steps of the sequence
  $h^k$ as long as both keep going along dimensions $m$ and $l$ correspondingly. If either of them changes its direction the technique can
  be applied again. We have by construction (see Figure \ref{fig:equispaced}) $g^{i+2}_kg_{-k} \sim g^{i+2}_mg'_{-m}$ and
  $g^{i+3}_kg_{-k} \sim g^{i+3}_mg'_{-m}$. Therefore, $g^{i+2}_mg'_{-m} \sim h^{i+2}_lh_{-l}$ and $g^{i+3}_mg'_{-m} \sim
  h^{i+3}_lh_{-l}$. The statement follows by Lemma \ref{lm:A5} (\textbf{A5}).
\end{proof}

% \subsection{Shape of $\Theta$}
% \label{sec:shape-theta}

\section{Necessity of axioms}
\label{sec:necessity-axioms}

\begin{lemma}
  \label{lm:a3-necessity}
  \textbf{A3} is necessary.
\end{lemma}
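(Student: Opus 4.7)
The plan is to show that any Choquet-integral representation $(\nu, f_1,\ldots,f_n)$ forces, at each $z\in X$ and each pair $i,j\in N$, additivity of $\cgeq$ in the $(i,j)$-coordinates on at least one of $\SE[z]{ij}$ or $\NW[z]{ij}$, from which $ij$-triple cancellation follows at once. Fix $z$ and $i,j$, and compare the scalars $f_i(z_i)$ and $f_j(z_j)$; by symmetry I only handle the case $f_i(z_i)\geq f_j(z_j)$ and show $ij$-3C holds on $\SE[z]{ij}$. The case $f_j(z_j)\geq f_i(z_i)$ is identical after swapping the roles of $i,j$ and replacing $\SE[z]{ij}$ by $\NW[z]{ij}$.

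The key observation is that on $\SE[z]{ij}$, every point $x=x_ix_jz_{-ij}$ satisfies
\begin{equation*}
f_i(x_i)\geq f_i(z_i)\geq f_j(z_j)\geq f_j(x_j),
\end{equation*}
so the relative order $f_i(x_i)\geq f_j(x_j)$ is frozen throughout the set. I would then use the Möbius form
$C(\nu,f)=\sum_{A\subseteq N}m(A)\min_{k\in A}f_k(x_k)$ and split the sum according to whether $A$ contains $i$, $j$, both, or neither. For any $A$ with $\{i,j\}\subseteq A$, the frozen inequality $f_j(x_j)\leq f_i(x_i)$ implies $\min_{k\in A}f_k(x_k)=\min_{k\in A\setminus\{i\}}f_k(x_k)$, so the dependence on $x_i$ drops out of that term. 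Grouping terms, one obtains
\begin{equation*}
C(\nu,f)(x)=F(x_i)+G(x_j)+H,
\end{equation*}
where $F(x_i)=\sum_{A\ni i,\,A\not\ni j}m(A)\min\bigl(f_i(x_i),\min_{k\in A\setminus\{i\}}f_k(z_k)\bigr)$, the function $G(x_j)$ collects the contributions of all $A\ni j$ (with $i$ removed whenever $i\in A$), and $H$ depends only on $z_{-ij}$, hence is constant on $\SE[z]{ij}$.

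Since $\cgeq$ restricted to $\SE[z]{ij}$ is represented by the additive form $F(x_i)+G(x_j)$, and any two-factor additive representation trivially satisfies triple cancellation (all four chain inequalities in the definition of $ij$-3C amount to comparisons of sums $F(\cdot)+G(\cdot)$, whose consistency is immediate), I conclude that $ij$-3C holds on $\SE[z]{ij}$, which is exactly what \textbf{A3} requires in this case. The argument is routine; the only thing requiring care is the bookkeeping of the Möbius sum when isolating the $x_i$- and $x_j$-dependent pieces, and I expect that to be the only, mild, obstacle.
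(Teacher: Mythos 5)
Your proposal is correct and follows essentially the same route as the paper's own (very terse) argument: compare $f_i(z_i)$ with $f_j(z_j)$, write the integral in Möbius form, observe that on the corresponding cone every term whose index set contains both $i$ and $j$ loses its dependence on $x_i$ (resp.\ $x_j$), and conclude $ij$-3C from the resulting two-factor additive separability. The only point worth a passing remark is that the frozen chain $f_i(x_i)\geq f_i(z_i)\geq f_j(z_j)\geq f_j(x_j)$ uses that $f_i$ orders $X_i$ consistently with $\cgeq_i$, which holds by monotonicity of the Choquet integral up to the degenerate ``collapsed points'' case that the paper excludes by structural assumption.
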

\begin{proof}
  At any point $z \in X$ and for every $i,j$, we must have either $f_i(z_i) \geq g_i(z_i)$ or $g_i(z_i) \geq f_i(z_i)$. From this everything
  follows trivially (write the condition using Mobius representation of the integral).
\end{proof}

\begin{lemma}
\label{lm:a6-necessity}
  \textbf{A6} is necessary.
\end{lemma}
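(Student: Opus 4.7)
The plan is to exploit the comonotonic additivity of the Choquet integral: on any partition cell $X^{\S_a}$ the integral linearises to a weighted sum $\sum_j w^a_j f_j(x_j)$ with non-negative weights $w^a_j$, after which \textbf{A6} reduces to a direct calculation.

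First I would fix the cell $X^{\S_a}$ in which the four points $a_ix_{-i}, b_ix_{-i}, c_ix_{-i}, d_ix_{-i}$ all lie. By the definition of $X^{\S_a}$ as an intersection of sets $\SE{kj}$, every point in this cell exhibits the same ordering of $f$-values (up to ties between $\E$-related coordinates). Writing the Choquet integral in its standard form $\sum_{k=1}^n (f_{(k)} - f_{(k-1)})\,\nu(\{j : f_j \geq f_{(k)}\})$ and using the fact that the relevant permutation is constant on the cell, I would extract weights $w^a_j$ that are differences $\nu(U) - \nu(V)$ with $V \subset U$; monotonicity of $\nu$ then yields $w^a_j \geq 0$. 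Ties between coordinates in the $\E$ relation amount to reshuffling the permutation among equal values, which leaves the sum unchanged, so the linearisation is well-defined across the whole cell.

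Next, the hypothesis $a_ix_{-i} \cgt b_ix_{-i}$ with both points in $X^{\S_a}$ rewrites as $w^a_i\bigl(f_i(a_i) - f_i(b_i)\bigr) > 0$, whence $w^a_i > 0$ and $f_i(a_i) > f_i(b_i)$. Independently, the hypothesis $c_iy_{-i} \cgt d_iy_{-i}$ combined with coordinate-wise weak monotonicity of the Choquet integral---immediate from the integral's definition and monotonicity of $\nu$---forces $f_i(c_i) > f_i(d_i)$, since $f_i(c_i) \leq f_i(d_i)$ would give $C(\nu, f(c_iy_{-i})) \leq C(\nu, f(d_iy_{-i}))$. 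Applying the same linearisation of the cell to $c_ix_{-i}$ and $d_ix_{-i}$ yields
\begin{equation*}
C(\nu, f(c_ix_{-i})) - C(\nu, f(d_ix_{-i})) = w^a_i\bigl(f_i(c_i) - f_i(d_i)\bigr) > 0,
\end{equation*}
so $c_ix_{-i} \cgt d_ix_{-i}$, as required.

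The main obstacle I anticipate is the bookkeeping around $\E$-ties inside the cell: one has to argue that the weight $w^a_i$ attached to coordinate $i$ is the same at $a_ix_{-i}$ and $c_ix_{-i}$ (and at $b_ix_{-i}$, $d_ix_{-i}$) even when $f_i$ happens to equal the value of another coordinate at some of these points. Because $\min$ is invariant under permuting equal arguments, any admissible tie-breaking produces the same value of $C(\nu, \cdot)$ on the cell; this is precisely what legitimises treating $w^a_i$ as a single cell-wide constant and closes the argument.
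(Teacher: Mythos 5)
Your proof is correct and follows essentially the same route as the paper: both arguments linearise the Choquet integral on the cell $X^{S_a}$ into a weighted sum $\sum_j w^a_j f_j$ with nonnegative weights, extract $w^a_i>0$ from $a_ix_{-i} \cgt b_ix_{-i}$, and then compare $c_ix_{-i}$ and $d_ix_{-i}$ within that cell. The only real difference is that you derive $f_i(c_i)>f_i(d_i)$ directly from coordinate-wise monotonicity of the integral, which neatly replaces the paper's three-case analysis of where $c_iy_{-i}$ and $d_iy_{-i}$ lie and, by arguing directly rather than from the hypothesis $c_ix_{-i} \sim d_ix_{-i}$, also covers the strict-reversal case the paper leaves implicit.
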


\begin{proof}
Assume $a_ix_{-i}, b_ix_{-i}, c_ix_{-i}, d_ix_{-i} \in X^{S_a}$ and $a_ix_{-i} \cgt b_ix_{-i}, c_ix_{-i} \sim d_ix_{-i}, c_iy_{-i} \cgt
d_iy_{-i}$. There can be three cases:
\item 
1. $c_iy_{-i},d_iy_{-i} \in X^{S_b}$. We have additive representations on $X^{S_a}$ and $X^{S_b}$, so 
\begin{equation*}
  \begin{aligned}
    \alpha_if_i(a_i) + \sum _{j \in N \setminus i} \alpha_jf_j(x_j) & > \alpha_if_i(b_i) + \sum _{j \in N \setminus i} \alpha_jf_j(x_j) \\
    \alpha_if_i(c_i) + \sum _{j \in N \setminus i} \alpha_jf_j(x_j) & = \alpha_if_i(d_i) + \sum _{j \in N \setminus i} \alpha_jf_j(x_j) \\
    \beta_if_i(c_i) +  \sum _{j \in N \setminus i} \beta_jf_j(y_j)  & > \beta_if_i(d_i) +  \sum _{j \in N \setminus i} \beta_jf_j(y_j). 
  \end{aligned}
\end{equation*}
The first inequality entails $\alpha_i \neq 0$. From this and the following equality follows $f_i(c_i) = f_i(d_i)$, which contradicts with the last inequality. Thus $c_iy_{-i} \cgt
d_iy_{-i}$ implies $c_ix_{-i} \cgt d_ix_{-i}$ but only in the presence of $a_ix_{-i} \cgt b_ix_{-i}$ in the same $X^{S_a}$ (the case when $c_iy_{-i}$ and $d_iy_{-i}$ are not both
in the same $X^{S_b}$ can be reduced to this one. This is also the reason behind the name we gave to this condition - ``weak bi-independence''.
\item
2. $c_iy_{-i},d_iy_{-i} \in X^{S_a}$. In this case we get $\alpha_if_i(c_i) > \alpha_if_i(d_i)$ and $\alpha_if_i(c_i) = \alpha_if_i(d_i)$, a contradiction. 
\item  
3. $c_iy_{-i} \in X^{S_a}, d_iy_{-i} \in X^{S_b}$. As above $\alpha_i \neq 0$, so it follows that $f_i(a_i) = f_i(b_i)$. But then we must
have $d_iy_{-i} \in X^{S_a}$ (value functions are all equal to those for $c_iy_{-i}$), and hence the conclusion follows as in the previous
case.  
\end{proof}

\section{Shape of $\{z_{ij}: i \E^z j\}$}
\label{sec:shape-z_ij:-i}

Shape of the boundary between subsets of $X_{ij}$ where $i \R j$ and $j \R i$ is an interesting and important question. Axiom \textbf{A3}
only guarantees that this boundary is in a certain sense ``quasiconvex'', i.e. an increase along $i$ cannot be matched by a decrease along
$j$. Strengthening this statement requires invoking other axioms.

\subsection{Every $X^{S_i}$ has one essential variable}
\label{sec:single-vari-essent}

Assume that every $X^{S_i}$ has only one essential variable. We will show that an increase along $i$ must be matched by an \emph{increase}
along $j$. This is actually required to construct a representation (see Section \ref{sec:case-with-single}). The main axiom, required to
show this in addition to \textbf{A3} is strong monotonicity (\textbf{A6}). 

\begin{lemma}
\label{lm:E-thin-1ess}
  Let $a_ip_j$ be such that $i \E^{a_ip_jz_{-ij}} j$. Then, unless $i$ and $j$ do not interact, for no $b_i$ we can have $i
  \E^{b_ip_jz_{-ij}} j$. 
\end{lemma}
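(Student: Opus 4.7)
The plan is to argue by contradiction, leveraging the Choquet representation from Section \ref{sec:case-with-single} together with Lemma \ref{lm:theta-ij-eq}. I would suppose that $i$ and $j$ interact and, contrary to the conclusion, that some $b_i \neq a_i$ gives both $i \E^{a_ip_jz_{-ij}} j$ and $i \E^{b_ip_jz_{-ij}} j$. By the ``collapsed'' structural assumption, distinct elements of $X_i$ are strictly comparable under $\cgeq_i$, so WLOG one may take $b_i \cgt_i a_i$.

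Before applying the main tool I would first observe that $\E$ extends to the whole interval between $a_i$ and $b_i$ along this line: for any $c_i$ with $b_i \cgeq_i c_i \cgeq_i a_i$, the nested inclusions $\SE[c_ip_jz_{-ij}]{ij} \subseteq \SE[a_ip_jz_{-ij}]{ij}$ and $\NW[c_ip_jz_{-ij}]{ij} \subseteq \NW[b_ip_jz_{-ij}]{ij}$ give $ij$-triple cancellation on both sets and hence $i \E^{c_ip_jz_{-ij}} j$. Using the density structural assumption I would then pick interior non-extreme $a_i', b_i'$ strictly between $a_i$ and $b_i$ with $b_i' \cgt_i a_i'$, so that the non-extremeness hypothesis of Lemma \ref{lm:theta-ij-eq} is satisfied at both perturbed points.

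I would then apply Lemma \ref{lm:theta-ij-eq} to the two $\E$-points: under the interaction assumption, the two equivalences give
\begin{equation*}
\phi_i(a_i') = \phi_j(p_j) = \phi_i(b_i'),
\end{equation*}
hence $\phi_i(a_i') = \phi_i(b_i')$. On the other hand, the construction of $\phi_i$ in Section \ref{sec:case-with-single} sets $\phi_i(x_i) = F(x_iz_{-i})$ for any $z_{-i}$ placing $x_iz_{-i}$ in an $X^{S_k}$ on which $i$ is essential (such an $X^{S_k}$ exists by \textbf{A7}), where $F$ represents $\cgeq$. By strong monotonicity (Lemma \ref{lm:strong-com-monotonicity}), $b_i' \cgt_i a_i'$ then forces $\phi_i(b_i') > \phi_i(a_i')$, a contradiction. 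In the complementary case where $i$ and $j$ do not interact, the conclusion holds vacuously through the ``unless'' clause.

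The hardest step will be verifying that Lemma \ref{lm:theta-ij-eq} genuinely applies in this single-essential-per-$X^{S_a}$ regime, where the additive representations on each $X^{S_a}$ degenerate to a single term and the convex-combination step inside that lemma could in principle collapse. I expect this to reduce to checking that, when $i$ and $j$ interact, the $\E$-point in question sits in some pair of orderings $X^{S_a}, X^{S_b}$ differing precisely in the relation between $i$ and $j$ with at least one of $i,j$ essential on those sets, which is exactly what interaction provides.
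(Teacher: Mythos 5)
Your route is circular precisely in the regime where this lemma is needed, and the step you yourself flag as the hardest is the one that fails. Lemma \ref{lm:E-thin-1ess} lives in Section \ref{sec:single-vari-essent}, which assumes every $X^{S_a}$ has exactly one essential coordinate, and the paper uses it as an \emph{input} to constructing the representation for that case (it is cited from Section \ref{sec:altern-treatm-case} exactly to make the definition of the value functions consistent). So you cannot presuppose the functions $\phi_i$, the order-representing $F$ of Section \ref{sec:case-with-single}, or Lemma \ref{lm:theta-ij-eq}. Moreover, Lemma \ref{lm:theta-ij-eq} is genuinely unavailable here: its proof runs through Theorem \ref{theo:eq-util-eqiv-class} and Lemma \ref{lm:A-NA}, which require sets $X^{S_a}$, $X^{S_b}$ each having at least two essential coordinates together with their aligned additive representations $\phi^a,\phi^b$; in the single-essential regime no such sets exist. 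Interaction of $i$ and $j$ does not repair this --- it only yields some point $z$ with $i \S^z j$ or $j \S^z i$, not a common $X^{S_a}$ on which two coordinates are essential --- so the convex-combination step you hope carries over indeed collapses, exactly as you feared in your closing paragraph. The same objection applies to your final appeal to Lemma \ref{lm:strong-com-monotonicity} through $\phi_i$: strict monotonicity of $\phi_i$ is a property of the very construction whose consistency this lemma is meant to secure.

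The paper's own proof is representation-free: assuming two $\E$-points $a_ip_jz_{-ij}$ and $b_ip_jz_{-ij}$ with $i$ and $j$ interacting, it takes the extremal such pair, uses the failure of $ij$-independence (forced by interaction) and density to produce $c_i \cleq_i b_i$ and $s_j$ with $i \E^{c_is_jz_{-ij}} j$, extends the cones $\SE[c_iq_jz_{-ij}]{ij}$ and $\NW[a_ip_jz_{-ij}]{ij}$ by adjusting coordinates of $z_{-ij}$, and then derives a contradiction from bi-independence (\textbf{A6}): the coordinate $i$ would have to be essential on one cone, giving $d_ip_jz_{-ij} \cgt b_ip_jz_{-ij}$, and inessential on the other, giving $d_ip_jz_{-ij} \sim b_ip_jz_{-ij}$. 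To salvage your idea you would need a representation-free analogue of the ``equal value at $\E$-points'' claim in the one-essential setting, which is essentially what this direct argument supplies. (Your preliminary observation that $\E$ propagates to every $c_i$ between $a_i$ and $b_i$ by nesting of the rectangular cones is correct, but it is not the missing piece.)
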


\begin{proof}
  Assume such $b_i$ exists. Moreover, assume, wlog, that $a_i \cgeq_i b_i$, and we took maximal $a_i$ and minimal $b_i$ for which this
  holds. 
  \begin{enumerate}
  \item If $b_i$ is minimal in $X_i$ and $a_i$ is maximal in $X_i$, then $i,j$ do not interact by \textbf{A3}, so assume that $b_i$ is not
    minimal (the other case is symmetric).
  \item Since we took the smallest $b_i$ for which $i \E^{b_ip_jz_{-ij}} j$, we can assume wlog (other cases are similar) that exists
    $c_i \cleq_i b_i$, such that $b_iq_jz_{-ij} \cgt c_iq_jz_{-ij}$, and $b_ip_jz_{-ij} \sim c_ip_jz_{-ij}$ (a violation of
    $ij$-independence, required by the presence of interaction).
  \item By density assumption there must exist $s_j : i \E^{c_is_jz_{-ij}}$.
  \item We have $b_ip_jz_{-ij} \cgt c_is_jz_{-ij}$, hence $c_ip_jz_{-ij} \cgt c_is_jz_{-ij}$. 
  \item Now we need to extend $X^{S_a} \supset c_iq_jz_{-ij}$ ``to the right'', so that $X^{S_a}_i \supset d_i: d_i \cgt_i b_i$. We also
    need to extend $X^{S_b} \supset b_ip_jz_{-ij}$, so that $X^{S_b}_j \supset t_j: t_j \cgt_j p_j$. This can be done by adjusting
    particular coordinates of $z_{ij}$. Assume, that $z_{ij}$ is already such, that the above conditions hold. In the case when such
    adjustment cannot be performed, we might need to perform a similar extension ``to the left'' from $d_ip_jz_{-ij}$.
  \item Note that on $\SE[c_iq_jz_{-ij}]{ij}$ we can either have $i$ essential, or neither $i$ nor $j$, whereas on $\NW[a_ip_jz_{-ij}]{ij}$ it
    can either be $j$, or none as well.
  \item The point of the extensions is to show that by strong monotonicity \textbf{A6}, we must have $d_ip_jz_{-ij} \cgt b_ip_jz_{-ij}$, as $i$ is essential on
    $\SE[c_iq_jz_{-ij}]{ij}$, but also $d_ip_jz_{-ij} \sim b_ip_jz_{-ij}$, as $i$ is inessential on $\NW[a_ip_jz_{-ij}]{ij}$. If an
    extension ``to the left'' was performed, we must get $c_is_jz_{-ij} \sim b_ip_jz_{-ij}$, but also $ b_ip_jz_{-ij} \cgt c_is_jz_{-ij}$,
    as $j$ stays essential. 
  \end{enumerate}
\end{proof}

\subsection{$X^{S_i}$ have two or more essential variables}
\label{sec:xs_a-has-two}

In section \ref{sec:global-repr-on-x} we have shown, that in the representation value functions for sets $X_i$ and $X_j$ are equal for
points $z$ where $i \E^z j$. Theorem \ref{theo:theta-qual} provides a qualitative version of this statement. The assumption we must make is
that $i$ and $j$ are essential on $X^{S_a}$ and $X^{S_b}$, such that $S_a$ and $S_b$ differ only with respect to order of $i$ and $j$. In
the below proof, we assume that $i,j$ are $S_a$ and $S_b$-maximal, but this can be easily changed, by starting from some $r^1_Ar^0_{-A}$
instead of $r^0$.

\begin{theorem}
  \label{theo:theta-qual}  
  Let $r^0: i \E^{r^1} j, r^1: i \E^{r^1} j$ and $a^k_i$ and $b^k_j$ are two standard sequences such that $a^0_ir^0_{-i} \sim r^0_ir^0_{-i}$
  and $b^0_jr^0_{-j} \sim r^0_jr^0_{-j}$ and $r^1_ir^0_{-i} \sim a^m_ir^0_{-i}$ whereas $r^1_jr^0_{-j} \sim b^m_jr^0_{-j}$.  Assume $r^2$ is
  such that $i \E^{r^2} j$ and $r^2_ir^0_{-i} \sim a^n_ir^0_{-i}$. Then $r^2_jr^0_{-j} \sim b^n_jr^0_{-j}$.
\end{theorem}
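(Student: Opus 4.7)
The strategy is to invoke Lemma \ref{lm:theta-ij-eq}, which tells us that for interacting $i,j$ we have $\phi_i(z_i) = \phi_j(z_j)$ at every $z$ with $i\E^z j$, and to combine it with the characterization of standard sequences as arithmetic progressions in the value scale. First I would apply Lemma \ref{lm:theta-ij-eq} at each of $r^0, r^1, r^2$, so that $\phi_i(r^k_i) = \phi_j(r^k_j)$ for $k = 0,1,2$. (If $i,j$ do not interact, the statement is either vacuous or collapses to a one-dimensional argument via Section \ref{sec:single-vari-essent}.)

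Next, I would use the additive representations $V^{S_a}$ constructed in Theorems \ref{theo:additive-XzSa}--\ref{theo:Vsa-on-Xsa} and their global alignment from Sections \ref{sec:align-addit}--\ref{sec:global-repr-on-x}. In these representations any standard sequence contained in a given $X^{S_a}$ is an arithmetic progression, so after alignment $\phi_i(a^{k+1}_i) - \phi_i(a^k_i) = \delta_i$ and $\phi_j(b^{k+1}_j) - \phi_j(b^k_j) = \delta_j$ for constants $\delta_i,\delta_j$. The hypotheses $a^0_i r^0_{-i} \sim r^0$ and $a^m_i r^0_{-i} \sim r^1_i r^0_{-i}$ translate, via Lemma \ref{lm:2}, into $\phi_i(a^0_i) = \phi_i(r^0_i)$ and $\phi_i(a^m_i) = \phi_i(r^1_i)$, and analogously for $b^k_j$. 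Combined with the $\E$-equalities at $r^0$ and $r^1$, this forces
\[
m\delta_i = \phi_i(r^1_i) - \phi_i(r^0_i) = \phi_j(r^1_j) - \phi_j(r^0_j) = m\delta_j,
\]
so $\delta_i = \delta_j$. A direct computation then gives
\[
\phi_j(b^n_j) = \phi_j(r^0_j) + n\delta_j = \phi_i(r^0_i) + n\delta_i = \phi_i(a^n_i) = \phi_i(r^2_i) = \phi_j(r^2_j),
\]
and since $\phi_j$ represents $\cgeq_j$ we conclude $b^n_j \sim_j r^2_j$, i.e.\ $r^2_j r^0_{-j} \sim b^n_j r^0_{-j}$.

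The main obstacle is bookkeeping: the two standard sequences $\{a^k_i\}$ and $\{b^k_j\}$ are defined via auxiliary coordinates $l$ and $n$ and need not sit inside a single $X^{S_a}$, so one has to track them through the alignment procedure to justify constant $\delta$'s in the globally aligned $\phi$. Moreover, applying Lemma \ref{lm:theta-ij-eq} at each $r^k$ requires each $r^k$ to lie in two complementary sets $X^{S_a}\cap X^{S_b}$ that differ only in the relative ordering of $i$ and $j$; the ``Closedness'' structural assumption and density ensure such sets exist. A purely qualitative alternative, avoiding $\phi$ altogether, would build a joint equispaced sequence from $r^0$ that travels along $i$, switches dimension at $r^1$ (where $i\E j$ permits the construction in Section \ref{sec:equispaced-sequences}), and continues along $j$, and then invoke the equispaced extension of Lemma \ref{lm:A5}; this route requires carefully verifying that switching direction at an $\E$-point preserves the equispacing property, which is essentially the content of the construction in Section \ref{sec:equispaced-sequences}.
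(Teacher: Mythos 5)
Your main route is circular within this paper's architecture. Lemma \ref{lm:theta-ij-eq} and the globally aligned functions $\phi_i$ are outputs of the construction in Sections \ref{sec:final-rescaling} and \ref{sec:global-repr-on-x}, and that construction itself leans on the present appendix section: the final rescaling justifies the assumption $r^1_i \cgeq_i r^0_i$ for interacting coordinates by pointing to Section \ref{sec:shape-z_ij:-i}, i.e.\ to Corollary \ref{lm:qual-coord-order}, which is a corollary of exactly the theorem you are asked to prove. So deriving Theorem \ref{theo:theta-qual} from Lemma \ref{lm:theta-ij-eq} runs the dependency backwards. Indeed, the theorem is explicitly billed as the \emph{qualitative} counterpart of the statement ``$i \E^z j$ implies equal values'', to be obtained from the axioms (essentially \textbf{A5} via Lemma \ref{lm:A5}) rather than read off the finished representation; once you allow yourself $\phi$ and Lemma \ref{lm:theta-ij-eq}, the conclusion is almost immediate, which is a sign that this is not the intended (or legitimate) use of the machinery. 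Your bookkeeping steps (standard sequences equally spaced in $\phi_i$, $\phi_j$ representing $\cgeq_j$) would be fine if the representation were available non-circularly, but here it is not.

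The paper's actual proof is the route you mention only in your closing sentence and do not carry out. It constructs two equispaced sequences from $r^0$ to $r^1_{ij}r^0_{-ij}$: $e^k$ running along coordinate $i$ through $r^1_ir^0_{-i}$ and $w^k$ running along $j$ through $r^1_jr^0_{-j}$, with equivalent first steps; Lemma \ref{lm:A5} (extended to equispaced sequences as in Section \ref{sec:equispaced-sequences}) gives $e^k \sim w^k$ for all $k$, hence a common number $K$ of steps. It then measures $r^1_ir^0_{-i}$, $r^1_jr^0_{-j}$, $r^2_ir^0_{-i}$, $r^2_jr^0_{-j}$ against the sequences $a^k_i$, $b^k_j$, $e^k$, $w^k$, sums the step counts along the two paths to $r^2_{ij}r^0_{-ij}$, and equates them by \textbf{A5}, arriving at $m(s+t-K) = l(s+t-K)$; the degenerate alternative $t = K - s$ is ruled out because it would force trade-off consistency throughout, contradicting the interaction of $i$ and $j$. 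This counting argument, and in particular the dichotomy ``equal step counts or no interaction'', is the substance of the proof and is absent from your proposal; the one-sentence sketch of the qualitative alternative does not verify the step-count comparison at $r^2$ nor the exclusion of the degenerate case. As it stands, the proposal does not give a non-circular proof of the theorem.
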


\begin{proof}

  Build two equispaced sequences from $r^0$ to $r^1_{ij}r^0_{-ij}$: 
  \begin{itemize}
  \item $e^k$ starting from $r^0_i$ via $r^1_ir^0_{-i}$, and 
  \item $w^k$ starting from $r^0_j$ via $r^1_jr^0_{-j}$,
  \end{itemize}
such that $e^1_ir^0_{-i} \sim w^1_jr^0_{-j}$. By Lemma \ref{lm:A5} (\textbf{A5}) it follows then that all corresponding steps of two
sequences are equivalent, in other words, $e^k \sim w^k$ for all $k$. Consequently, there is the same number of steps both sequences make
between $r^0$ and $r^1_{ij}r^0_{-ij}$, say $K$. 

For some $s < K$ we have $r^1_ir^0_{-i}$ lying between $e^s$ and $e^{s+1}$, i.e. $e^{s+1} \cgt r^1_ir^0_{-i} \cgeq e^s$. Similarly, for some $t <
K$ we have $w^{t+1} \cgt r^1_jr^0_{-j} \cgeq w^t$. We can write:
\begin{equation*}
  [r^0,r^1_ir^0_{-i}] \approx na^k_i \approx se^k,
\end{equation*}
which means: $r^1_ir^0_i$ lies between $a^n_ir^0_{-i}$ and $a^{n+1}_ir^0_{-i}$ and also between $e^s$ and $e^{s+1}$. Similarly, 
\begin{equation*}
  \begin{aligned}[]
    [r^0 , r^1_jr^0_{-j}] & \approx nb^k_j  \approx tw^k \\
    [r^1_ir^0_{-i},r^1_{ij}r^0_{-ij}] & \approx nb^k_j  \approx (K-s)e^k \\
    [r^1_jr^0_{-j},r^1_{ij}r^0_{-ij}] & \approx na^k_i  \approx (K-t)w^k. 
  \end{aligned}
\end{equation*}
Two last statements are possible because by density we can can get arbitrarily close to points $r^1_jr^0_{-j}$ and $r^1_ir^0_{-i}$ by
choosing finer sequences $e^k$ and $w^k$. 

For point $r^2_{ij}r^0_{-ij}$ we have:
\begin{equation*}
  \begin{aligned}[]
    [r^0,r^2_ir^0_{-i}] & \approx ma^k_i  \approx \frac{m}{n}se^k \\
    [r^2_jr^0_{-j},r^2_{ij}r^0_{-ij}] & \approx ma^k_i  \approx \frac{m}{n}(K-t)w^k.
  \end{aligned}
\end{equation*}

Assume that the number of steps on two other segments is different: 

\begin{equation*}
  \begin{aligned}[]
    [r^0,r^2_jr^0_{-j}] & \approx lb^k_j  \approx \frac{l}{n}tw^k \\
    [r^2_ir^0_{-i},r^2_{ij}r^0_{-ij}] & \approx lb^k_j  \approx \frac{l}{n}(K-s)e^k.
  \end{aligned}
\end{equation*}

Summing up parts for both paths to $r^2_{ij}r^0_{-ij}$ we get $\frac{ms + l(K-s)}{n}e^k$ for $\SE{ij}$ and 
$\frac{m(K-t) + lt}{n}w^k$ for $\NW{ij}$. By Lemma \ref{lm:A5} (\textbf{A5}) the number of steps must be identical, so:
\begin{equation*}
  \frac{ms + l(K-s)}{n} = \frac{m(K-t) + lt}{n},
\end{equation*}
or 
\begin{equation*}
  m(s + t - K) = l(s + t - K).
\end{equation*}
% Dividing both sides by $K$ we get 
% \begin{equation*}
%   \frac{s}{s+(K-s)}m + \frac{K-s}{s+(K-s)}l = \frac{K-t}{t+(K-t)}m + \frac{t}{t+(K-t)}l,
% \end{equation*}
% which is a convex combination of $m$ and $l$ (``weights'' are in (0,1) and sum up to 1). 
There are two possible solutions: 
\begin{itemize}
\item $m=l$, and 
\item $t = K-s$, which means that trade-offs are consistent throughout $X$, hence $i$ and $j$ do not interact, a contradiction.
\end{itemize}

The result follows. 
\end{proof}

\begin{corollary}
\label{lm:horizontal}
  If $a_ip_j: i \E^{a_ip_jz_{-ij}} j$ and $b_ip_j: i \E^{b_ip_jz_{-ij}} j$, then $x_iy_j: i \E^{x_iy_jz_{-ij}} j$ for all $x_i \in X_i$, $y_j \in Y_j$.
\end{corollary}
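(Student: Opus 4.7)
The plan is a proof by contradiction via Theorem \ref{theo:theta-qual}. First, handle the easy case: if $i$ and $j$ do not interact, then $ij$-triple cancellation holds on all of $X$ by definition, so both $i \R^z j$ and $j \R^z i$ hold at every $z$, and hence $i \E^{x_iy_jz_{-ij}} j$ for all $x_i \in X_i$ and all $y_j \in X_j$ at once. This is exactly the desired conclusion.

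So assume $i$ and $j$ interact and, without loss of generality, $a_i \cgt_i b_i$. I will show this contradicts the hypothesis, so the interacting-plus-distinct case is impossible and the conclusion holds via the trivial case. Set $r^0 := b_i p_j z_{-ij}$ and $r^2 := a_i p_j z_{-ij}$, both of which lie on the $\E$-curve by hypothesis. To apply Theorem \ref{theo:theta-qual} I need an auxiliary point $r^1$ with $i \E^{r^1} j$, $r^1_i \cgt_i a_i$, and $r^1_j \cgt_j p_j$, so that the standard sequences from $r^0$ towards $r^1_{ij} r^0_{-ij}$ reach past $r^2$. Such an $r^1$ is produced from the Geometry of $X$ structural assumption applied to the interaction clique containing $i, j$, after, if necessary, a small preliminary adjustment of the $-ij$ coordinates using Closedness and density (in the spirit of Lemma \ref{wakker-lemma-11}) to place the $\E$-curve of the resulting slice strictly above and to the right of $r^2$.

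With $r^0, r^1, r^2$ in place, the punchline is a one-line calculation: because $r^2$ and $r^0$ share the $j$-coordinate $p_j$,
\[
r^2_j r^0_{-j} \;=\; p_j r^0_{-j} \;=\; b_i p_j z_{-ij} \;=\; r^0 \;\sim\; b^0_j r^0_{-j},
\]
so Theorem \ref{theo:theta-qual} forces the matching index in $r^2_i r^0_{-i} \sim a^n_i r^0_{-i}$ to be $n = 0$, whence $a_i p_j z_{-ij} \sim b_i p_j z_{-ij}$. On the other hand, $a_i \cgt_i b_i$ together with Lemma \ref{lm:strong-com-monotonicity} applied to any $X^{S_a}$ containing both $r^0$ and $r^2$ on which $i$ is essential (such a set exists because $i, j$ lie in a common interaction clique) gives $a_i p_j z_{-ij} \cgt b_i p_j z_{-ij}$. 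Contradiction.

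The main obstacle is the precise placement of $r^1$: the Geometry of $X$ assumption only guarantees some $\E$-points in the clique of $i, j$, not a point with prescribed coordinates relative to the fixed slice $z_{-ij}$. Bridging this gap calls for a short density-and-Closedness argument that deforms the slice or transports the reference $\E$-point into the correct position; one must also verify that the resulting slice genuinely hosts a non-degenerate pair of standard sequences in both directions. Beyond that, the proof is mechanical, relying only on Theorem \ref{theo:theta-qual} and Lemma \ref{lm:strong-com-monotonicity}, both already established.
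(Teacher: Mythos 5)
The paper states this corollary without a proof, so there is no official argument to compare against; your derivation is the natural reading of why it is presented as a corollary of Theorem \ref{theo:theta-qual}, and its overall logic is sound. The case split is correct: if $i$ and $j$ do not interact then $i \E^{z} j$ holds at every $z$ directly from the definitions of $\S^z$ and $\E^z$, and if they interact you show the hypothesis (two $\E$-points on the same slice with the same $p_j$ and $a_i \cgt_i b_i$, strictness being legitimate by the ``collapsed points'' assumption) is impossible --- the same dichotomy the paper uses in Lemma \ref{lm:E-thin-1ess} and Corollary \ref{lm:qual-coord-order}. The identification $r^2_jr^0_{-j} = r^0 \sim b^0_jr^0_{-j}$ and the resulting clash with Lemma \ref{lm:strong-com-monotonicity} are fine.

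Two steps need tightening. First, Theorem \ref{theo:theta-qual} only applies to an $r^2$ whose $i$-coordinate lands exactly on the mesh, i.e.\ $r^2_ir^0_{-i} \sim a^n_ir^0_{-i}$ for some $n$; your $r^2 = a_ip_jz_{-ij}$ has no reason to do so for a pre-given sequence, so ``the theorem forces $n=0$'' presupposes an $n$ that may not exist. The repair is easy: build the $i$-sequence so that $a_i$ is one of its terms (e.g.\ first step from $b_i$ to $a_i$, available by restricted solvability and essentiality of $j$), in which case the theorem yields $r^2_jr^0_{-j} \sim b^n_jr^0_{-j} \cgt b^0_jr^0_{-j} \sim r^0$ with $n \geq 1$ while $r^2_jr^0_{-j} = r^0$, a contradiction that does not even need Lemma \ref{lm:strong-com-monotonicity}; alternatively run the approximate refined-mesh argument used inside the proof of Theorem \ref{theo:theta-qual} itself. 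Second, the calibration data of Theorem \ref{theo:theta-qual} are hypotheses, not conclusions: besides an $\E$-point $r^1$ strictly above $a_i$ and $p_j$, you must supply the two standard sequences reaching $r^1_i$ and $r^1_j$ in the \emph{same} number of steps $m$, which rests on the equispaced-sequence machinery of Lemma \ref{lm:A5} together with density and closedness, not merely on the Geometry of $X$ assumption. You flag the placement of $r^1$, which is right, but the equal-$m$ calibration deserves explicit mention as well. With these repairs the argument goes through at the paper's own level of rigour and is, as far as one can tell, the intended one.
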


% \begin{proof}
%   By Theorem \ref{theo:theta-qual} we have $c_ip_j \in \Theta_{ij}$ for all $c_i \in X_i$. The result follows.
% \end{proof}

\begin{corollary}
\label{lm:qual-coord-order}
  If $a_ip_j: i \E^{a_ip_jz_{-ij}} j$, then 
  \begin{itemize}
  \item for any $b_i$, such that $b_i \cgeq_i a_i$ we have $i \S^{b_ip_jz_{-ij}} j$,
  \item for any $b_i$, such that $a_i \cgeq_i b_i$ we have $j \S^{b_ip_jz_{-ij}} i$,
  \item for any $q_j$, such that $q_j \cgeq_j p_j$ we have $j \S^{a_iq_jz_{-ij}} i$,
  \item for any $q_j$, such that $p_j \cgeq_j q_j$ we have $i \S^{a_iq_jz_{-ij}} j$.
  \end{itemize}
 or $i,j$ do not interact.
\end{corollary}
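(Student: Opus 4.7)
The plan is to reduce each of the four bulleted implications to a contradiction with Corollary \ref{lm:horizontal}: under the hypothesis $i \E^{a_ip_jz_{-ij}} j$, the failure of the stated $\S$-relation should force $\E$ to propagate over the whole $(X_i, X_j)$-slice above $z_{-ij}$, which in turn collapses to $i$ and $j$ not interacting. I work out the first bullet in detail; the other three follow by the obvious $i \leftrightarrow j$ and $\SE \leftrightarrow \NW$ symmetries.

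For $b_i \cgt_i a_i$ the first step is the set-monotonicity $\SE[b_ip_jz_{-ij}]{ij} \subseteq \SE[a_ip_jz_{-ij}]{ij}$, which is immediate from $b_i \cgeq_i a_i$ and the definition of $\SE$. Hence the defining implication of $ij$-triple cancellation, being inherited on subsets, gives $i \R^{a_ip_jz_{-ij}} j \Rightarrow i \R^{b_ip_jz_{-ij}} j$, and the premise is part of the hypothesis $i \E^{a_ip_jz_{-ij}} j$. Now suppose toward contradiction that $i \S^{b_ip_jz_{-ij}} j$ fails. By definition this means $j \R^{b_ip_jz_{-ij}} i$, and combining with the preceding step yields $i \E^{b_ip_jz_{-ij}} j$. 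Two $\E$-points at the same $j$-level $p_j$ in the same $z_{-ij}$-slice now trigger Corollary \ref{lm:horizontal}, and I obtain $i \E^{x_iy_jz_{-ij}} j$ for every $x_i \in X_i$ and $y_j \in X_j$.

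The closing step reads off that $\E$ everywhere on the slice amounts to unrestricted $ij$-triple cancellation there: any quadruple $\{a_i,b_i,c_i,d_i\} \subset X_i$ and $\{p_j,q_j,r_j,s_j\} \subset X_j$ is enclosed in $\SE[w]{ij}$, where $w$ is chosen with $i$-coordinate the $\cgeq_i$-minimum of the first set and $j$-coordinate the $\cgeq_j$-maximum of the second, and $ij$-3C holds on that cone. Hence $ij$-3C holds unrestrictedly on the slice and $i$ and $j$ are declared non-interacting, giving the contradiction. The other three bullets use the same three-step recipe: the case $b_i \cleq_i a_i$ exploits the dual monotonicity $\NW[b_ip_jz_{-ij}]{ij} \subseteq \NW[a_ip_jz_{-ij}]{ij}$, and the two $q_j$-cases just interchange the roles of $i$ and $j$. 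The step I expect to be the most delicate is the last one, because Corollary \ref{lm:horizontal} delivers $\E$ only on the single $z_{-ij}$-slice, so in order to justify "do not interact" in the globally defined sense of the paper one has to either argue that the plane-level $ij$-3C transports across slices through the trade-off consistency axioms \textbf{A4} and \textbf{A5}, or interpret "non-interacting" as a local property of this slice and observe that the four bulleted $\S$-relations are themselves slice-local statements, so locality is in fact sufficient.
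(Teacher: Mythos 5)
Your route is the one the paper intends: the corollary is stated without a proof of its own, immediately after Theorem \ref{theo:theta-qual} and Corollary \ref{lm:horizontal}, and the implicit derivation is exactly yours --- show the displaced point still carries the inherited $\R$-relation, so that failure of the claimed $\S$-relation upgrades it to $\E$; invoke Corollary \ref{lm:horizontal} to spread $\E$ over the whole $z_{-ij}$-slice; and read that off as unrestricted $ij$-triple cancellation there. Your cone-inclusion step ($\SE[b_ip_jz_{-ij}]{ij} \subseteq \SE[a_ip_jz_{-ij}]{ij}$, with the $\NW$ analogue for the dual bullets) is correct and is precisely the detail the paper leaves unstated, and the enclosing-cone argument (take $w$ with $w_i$ the $\cgeq_i$-minimum of the four $i$-values and $w_j$ the $\cgeq_j$-maximum of the four $j$-values) does give $ij$-3C for every quadruple on the slice. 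Two remarks. First, for the two $q_j$-bullets your pair of $\E$-points shares the $i$-coordinate $a_i$, not the $j$-coordinate, so you need the $i \leftrightarrow j$ mirror of Corollary \ref{lm:horizontal}; this is harmless because Theorem \ref{theo:theta-qual} is symmetric in $i$ and $j$, but the corollary as literally stated only covers two $\E$-points with a common $p_j$. Second, the delicacy you flag at the end is real, and your fallback of reading ``do not interact'' as a slice-local property is not available: interaction is defined globally (existence of \emph{some} $z \in X$ with $i \S^z j$ or $j \S^z i$), and the corollary is used in that global sense, e.g.\ in Section \ref{sec:final-rescaling}. So a complete argument must transport the unrestricted $ij$-3C from the given slice to arbitrary $z'_{-ij}$, which is where \textbf{A4}/\textbf{A5} (in the spirit of Lemma \ref{lm:theta-ij-eq}, which does it once the representation is in hand) would have to be deployed; you name this option but do not carry it out. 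In fairness, the paper makes exactly the same leap --- the proof of Theorem \ref{theo:theta-qual} passes from sequence-level consistency to ``trade-offs are consistent throughout $X$, hence $i$ and $j$ do not interact'' with no further detail --- so your proof is no less complete than the source on this point, but the A4/A5 transport, not the slice-local reinterpretation, is the only legitimate way to finish it.
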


\section{Alternative treatment of the case with single essential variables}
\label{sec:altern-treatm-case}

For this case we can also construct the representation as follows. We will define value functions for all sets $X_i$ in accordance with
$\cgeq_i$. We would additionally require that $\phi_i(x_i) \geq \phi_j(x_j)$ iff $i \R^{x_ix_jz_{-ij}} j$. Finally, we will prove a lemma,
similar to Lemma \ref{lm:A-NA}, which would allow us to construct a capacity and the Choquet integral. 

We start by considering if we can define value functions on $X_i$ according to the rules defined in the previous paragraph. Since $\cgeq_i$
is a weak order, we can obviously define functions such that $\phi_i(x_i) \geq \phi_j(x_j)$ iff $x_i \cgeq_i y_i$. However, the second
condition is more complicated. One particular case, when this would be impossible, is if exist $x_ix_jz_{-ij}$ and $y_ix_jz_{-ij}$ such that
$i \E^{x_ix_jz_{-ij}} j$ and $i \E^{y_ix_jz_{-ij}} j$. However, this would eventually imply that once the representation is constructed we
have $f_i(x_i) = f_i(y_i)$, and hence 
$C(\nu,f(x_iz_{-i})) = C(\nu,f(y_iz_{-i}))$ for any $z_{-i}$, which implies $x_iz_{-i} \sim y_iz_{-i}$ for all $z_{-i}$. This in turn
contradicts a structural assumption that we have made. We state the following lemma:

\begin{lemma}
  Assume that $i$ and $j$ interact. Then, if $i \E^{x_ix_jz_{-ij}} j$, then for no $y_ix_jz_{-ij}$ holds $i \E^{y_ix_jz_{-ij}} j$.
\end{lemma}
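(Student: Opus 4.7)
The statement is a direct restatement of Lemma \ref{lm:E-thin-1ess} from Section A.6.1 (with $y_i$ playing the role of $b_i$), and the plan is to invoke that lemma essentially verbatim. Since the surrounding motivation in Section A.7 tries to justify this on representational grounds, but we do not yet have the representation available at this stage, the actual proof must be purely axiomatic.

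The strategy is proof by contradiction. Suppose both $i \E^{x_ix_jz_{-ij}} j$ and $i \E^{y_ix_jz_{-ij}} j$ hold with $x_i \not\sim_i y_i$, say $x_i \cgt_i y_i$; pick the largest such $x_i$ and smallest such $y_i$. Since $i$ and $j$ interact, $ij$-triple cancellation must fail somewhere, and after relabelling we may assume there exist $c_i \cleq_i y_i$ and $q_j$ such that $y_iq_jz_{-ij} \cgt c_iq_jz_{-ij}$ while $y_ix_jz_{-ij} \sim c_ix_jz_{-ij}$ (a violation of $ij$-independence in the region to the left of the $\E$-slab). By density there is some $s_j$ with $i \E^{c_is_jz_{-ij}} j$.

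Next I would use restricted solvability (\textbf{A8}) together with the closedness structural assumption to extend the $X^{S_a}$ containing $c_iq_jz_{-ij}$ rightward so that it contains some $d_i \cgt_i x_i$ with $d_i q_j z_{-ij} \in X^{S_a}$, and simultaneously extend the $X^{S_b}$ containing $y_ix_jz_{-ij}$ so that $d_ix_jz_{-ij} \in X^{S_b}$; on $X^{S_a}$ coordinate $i$ is essential (since we inherited the strict inequality $y_iq_jz_{-ij} \cgt c_iq_jz_{-ij}$ along $i$), while on $X^{S_b}$ only $j$ is essential (being the single-essential-variable case, and since $i \E j$ there). Bi-independence (\textbf{A6}) applied to $X^{S_a}$ forces $d_ix_jz_{-ij} \cgt y_ix_jz_{-ij}$, whereas inessentiality of $i$ on $X^{S_b}$ forces $d_ix_jz_{-ij} \sim y_ix_jz_{-ij}$, contradiction. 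If the rightward extension is blocked, a symmetric leftward extension from $a^{k-1}$ produces the same clash between essentiality under \textbf{A6} and inessentiality of $i$.

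The main obstacle is the bookkeeping in the extension step: arranging the two subsets $X^{S_a}$ and $X^{S_b}$ to meet at a common point $d_ix_jz_{-ij}$ (or $c_is_jz_{-ij}$) where the two contradictory monotonicity conclusions can both be read off. This is precisely the construction carried out in the proof of Lemma \ref{lm:E-thin-1ess}, so the cleanest write-up is simply to appeal to that lemma, pointing out that our hypotheses—single-essential-variable case on every $X^{S_k}$, interaction of $i$ and $j$, and existence of $x_i$ and $y_i$ with $i \E j$ at $x_ix_jz_{-ij}$ and $y_ix_jz_{-ij}$—are exactly its hypotheses with $(x_i,y_i)$ relabelled as $(a_i,b_i)$.
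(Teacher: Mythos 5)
Your proposal matches the paper's treatment: the paper proves this lemma simply by citing Lemma \ref{lm:E-thin-1ess} together with \textbf{A3} and strong monotonicity \textbf{A6}, and your relabelling $(a_i,b_i,p_j)\mapsto(x_i,y_i,x_j)$ plus your sketch of the underlying contradiction argument (maximal/minimal choice of the two points, an independence violation from interaction, density giving the auxiliary $s_j$, the rightward/leftward extension, and the clash between essentiality under \textbf{A6} and inessentiality of $i$) reproduces exactly the argument given there. This is essentially the same approach, correctly applied.
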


\begin{proof}
By \textbf{A3} and strong monotonicity \textbf{A6}. See Lemma \ref{lm:E-thin-1ess} in the appendix for details.
\end{proof}

Next, we will use this to show that for interacting variables $\R$ is transitive. 

\begin{lemma}
  Let $i,j$ and $j,k$ interact. Then, $i \R^z j$ and $j \R^z k$ imply $i \R^z k$.
\end{lemma}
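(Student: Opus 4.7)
The plan is to proceed by contradiction. Suppose $i \R^z j$ and $j \R^z k$ but NOT $i \R^z k$, so by completeness of $\R^z$ (from \textbf{A3}) we have $k \S^z i$. I would first decompose into cases according to whether each of the relations $i \R^z j$, $j \R^z k$ is strict (giving $\S^z$) or an equivalence ($\E^z$), using the fact that $i \S^z j$ means $i \R^z j$ and NOT $j \R^z i$, while $i \E^z j$ means both $i \R^z j$ and $j \R^z i$.

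In the easy case, both relations are strict: $i \S^z j$ and $j \S^z k$. Then the chain $i \S^z j \S^z k$ together with \textbf{A3-ACYCL} (Coordinate Ordering Acyclicity) immediately yields $i \R^z k$, contradicting $k \S^z i$.

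The remaining cases involve at least one equivalence, and up to symmetry I may assume $i \E^z j$. Here I would exploit Lemma~\ref{lm:E-thin-1ess}: since $i, j$ interact and $i \E^z j$, no other $z_i' \neq z_i$ with the same $z_{-i}$ can satisfy $i \E^{(z_i', z_{-i})} j$. By the density structural assumption, choose $z_i'$ with $z_i' \cgt_i z_i$ arbitrarily close to $z_i$; at the perturbed point $z' = (z_i', z_{-i})$ we have NOT $i \E^{z'} j$, so either $i \S^{z'} j$ or $j \S^{z'} i$. I would first show, using the ``Closedness'' structural assumption together with thinness, that increasing $z_i$ forces the former: if instead $j \S^{z'} i$ held, then between $z_i$ (where $j \R^{(z_i,z_{-i})} i$ holds since $i \E^z j$) and $z_i'$ (where $i \R^{z'} j$ fails), closedness would produce an additional coordinate $z_i^* \ne z_i$ with $i \E^{(z_i^*, z_{-i})} j$, contradicting the uniqueness given by Lemma~\ref{lm:E-thin-1ess}. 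By bi-independence (\textbf{A6}) and the fact that the coordinates $z_j, z_k, z_{-ijk}$ are unchanged, the relations $j \R^{z'} k$ and $k \S^{z'} i$ survive the perturbation for $z_i'$ sufficiently close to $z_i$. If $j \R^{z'} k$ is actually $j \E^{z'} k$, I would apply the same perturbation argument once more, now in coordinate $k$ (decreasing $z_k$ slightly), to obtain a further point $z''$ at which $j \S^{z''} k$ while $i \S^{z''} j$ and $k \S^{z''} i$ still hold. At $z''$ we then have the strict chain $i \S^{z''} j \S^{z''} k$, so by \textbf{A3-ACYCL} we conclude $i \R^{z''} k$, contradicting $k \S^{z''} i$.

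The main obstacle will be justifying the two claims that drive the perturbation: (i) that increasing $z_i$ breaks $i \E^z j$ specifically into $i \S^{z'} j$ rather than $j \S^{z'} i$, and (ii) that successive independent perturbations in $z_i$ and $z_k$ can be composed so that the strict relations we have gained are not destroyed by the subsequent perturbation. Both rely on simultaneously juggling thinness (Lemma~\ref{lm:E-thin-1ess}), density, closedness, and bi-independence (\textbf{A6}); the delicate part is pinning down the ``direction'' in which $\E$ breaks when a single coordinate is perturbed, which is essentially the analogue, in the single-essential setting, of Corollary~\ref{lm:qual-coord-order} from the multi-essential case.
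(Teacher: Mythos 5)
Your overall strategy is the same as the paper's: argue by contradiction from $k \S^z i$, split into cases according to which of the two hypothesized relations is an equivalence $\E^z$, and perturb single coordinates (increase $z_i$, decrease $z_k$) to turn the equivalences into strict relations $\S$ while keeping $k \S i$, so that \textbf{A3-ACYCL} yields a contradiction. However, there are two genuine gaps. First, your perturbations silently assume they can always be performed: density only supplies a point strictly above $z_i$ (resp.\ strictly below $z_k$) when such points exist, and the paper spends roughly half of its proof on exactly these boundary cases --- if $z_i$ is maximal it perturbs $z_j$ downward instead; if moreover $z_j$ is minimal it concludes that $ij$-3C holds everywhere so $i$ and $j$ do not interact (contradicting the hypothesis); and in the $i \E^z j$, $j \E^z k$ case with $z_i$ maximal and $z_k$ minimal it falls back on the extreme-point clauses in the definition of the sets $\SE{ik}$, which give $i \R^z k$ directly. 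None of these situations is covered by your argument, and they cannot be waved away since the lemma is stated for all $z \in X$, extreme coordinates included.

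Second, your justification of the ``direction'' in which $\E$ breaks does not work as stated. The Closedness assumption requires a point where $i \S j$ \emph{strictly} and a point where $j \S i$ \emph{strictly}; you only have $i \E^z j$ at the original point, and even when closedness applies it merely asserts existence of \emph{some} $z_i^* \in X_i$ with $i \E^{z_i^* z_{-i}} j$ --- not one strictly between $z_i$ and $z_i'$ --- so the equivalence point it produces could simply be $z$ itself and no conflict with the thinness lemma (Lemma \ref{lm:E-thin-1ess}) arises. The correct and much simpler observation is monotonicity of the cones: for $z_i' \cgt_i z_i$ one has $\SE[z_i'z_{-i}]{ij} \subseteq \SE[z]{ij}$, so $ij$-3C is inherited and $i \R^{z_i'z_{-i}} j$ holds automatically; thinness then excludes $i \E^{z_i'z_{-i}} j$, forcing $i \S^{z_i'z_{-i}} j$ (this is also what underlies Corollary \ref{lm:qual-coord-order}). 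Relatedly, \textbf{A6} does not do the work you ask of it for claim (ii): preservation of $k \S^{z'} i$ under an increase of $z_i$ is a statement about a 3C violation surviving on the \emph{shrunken} cone $\SE[z']{ik}$, which bi-independence (a monotonicity/essentiality condition) does not address; the paper handles this point only by its ``increase slightly, possible by density'' remark, so you are no worse off there, but citing \textbf{A6} does not close it either.
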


\begin{proof}
  If $i$ and $k$ do not interact, the the  result is immediate, hence, assume that they interact. Also, if we have $i \S^z j$ and $j \S^z$,
  then by acylicity  $i \R^z k$, so one of the relations must be a $\E^z$. 

  Assume first that $i \E^z j, j \S^z k$. Assume also $k \S^z i$. We want to increase $z_i$ slightly, so that for some $x_i \cgeq_i z_i$ we
  have $i \S^{x_iz_{-i}} j$, but still $k \S^{x_iz_{-i}} i$, which leads to a violation of acyclicity. This is possible by density, unless
  $z_i$ is maximal. In this case, decrease $z_j$ slightly, so that for $x_j: z_j \cgeq x_j$ we have $i \S^{x_jz_{-j}} j$ but still
  $i \S^{x_jz_{-j}} k$. This is again possible unless $z_j$ is minimal, in which case we conclude ($z_i$ is maximal, $z_j$ is minimal), that
  $ij-3C$ holds for all $x_i, x_j$, hence $i$ and $j$ do not interact, a contradiction.

  Similarly, assume $i \S^z j, j \E^z k, k \S^z i$. Increase $z_i$ or decrease $z_k$ to violate acyclicity, or otherwise conclude that $j$
  and $k$ do not interact, as above.

  Finally, assume $i \E^z j, j \E^z k, k \S^z i$. Increase $z_i$ and decrease $z_k$ to get $i \S^{x_{ik}z_{-ik}} j, j \S^{x_{ik}z_{-ik}} k,
  k \S^{x_{ik}z_{-ik}} i$, which violates acyclicity. If $z_i$ is maximal, decrease $z_k$ to get $i \E^{x_{k}z_{-k}} j, j \S^{x_{k}z_{-k}} k,
  k \S^{x_{k}z_{-k}} i$, which is the first case considered above. If $z_i$ can be increased, but $z_k$ is minimal, we obtain he second case
  above. Finally, if $z_i$ is maximal and $z_k$ is minimal, then by definition $i \R^z k$. 

  Since $\R$ is transitive it follows that for interacting variables $\E$ and $\S$ would also be transitive. 
\end{proof}

Finally, we can construct value functions $\phi_i:X_i \rightarrow \mathbb{R}$. To do so, we start by constructing the function on any
variable $X_j$. Since $\cgeq_i$ is a weak order, this can obviously be done. Next, we construct the value function on $X_k$, such that $j$
and $k$ interact. This time, the function would also have to satisfy the second constraint - $\phi_k(x_k) \geq \phi_j(x_j)$ iff
$k \R^{x_kx_jz_{-kj}} j$. Then the construction proceeds to some $X_l$, such that $l$ and $j$ or $l$ and $k$ interact. We continue like
this, until all variables in the interaction clique, containing $j$ do not have value functions defined. After this, variables from all
other interaction cliques can be defined in the same way.

Assume, that we have already defined some value functions, in a manner, consistent with the constraints above, and want to define
$\phi_i$. A contradiction can occur, if at some stage we get $i \R^z j$ but $\phi_j(z_j) > \phi_i(z_i)$. This can happen, if $i$ is
interacting with more than one variable, and we used some other value function $\phi_k$ to define $\phi_i$. However, we can update value
functions throughout the clique to resolve this violation.

For any $\phi_i(x_i)$ we have either:
\begin{enumerate}
\item $\phi_i(x_i) = \phi_k(x_k)$ for some $x_k$ in $X_k$, or
\item $\phi_i(x_i) > \phi_k(x_k)$ for all $x_k$ in $X_k$, or
\item $\phi_i(x_i) < \phi_k(x_k)$ for all $x_k$ in $X_k$.
\end{enumerate}
Same holds for every other interacting pair of variables, apart from $i,j$, for which the value functions have already been constructed.

Since we have $\phi_j(z_j) > \phi_i(z_i)$, but $i \R^z j$, in order to resolve the conflict, we need to increase $\phi_i(z_i)$ and decrease
$\phi_j(z_j)$. Since value functions are ordinal, we can do this by applying some increasing transformation to them. If it is the case that
$\phi_i(z_i) > \phi_k(x_k)$ for all $x_k$ in $X_k$, this can be done straightforwardly. However, if it is the case that
$\phi_i(z_i) = \phi_k(z_k)$ or $\phi_i(z_i) < \phi_k(x_k)$ for all $x_k$ in $X_k$, we will have to change $\phi_k$ as well. Eventually,
either the conflict is resolved, i.e. value functions are sufficiently adjusted, or we find some common ``predecessor'' - a coordinate
$m$ and build a chain like the one below: 
\begin{equation*}
  \phi_i(z_i) \leq \phi_k(x_k) \leq \phi_l(x_l) \leq \ldots \leq \phi_o(x_o) \leq \phi_m(x_m),
\end{equation*}
where each point is taken so that it is minimal with respect to the predecessor, for example in pair $\phi_i(z_i) \leq \phi_k(x_k)$ we pick
$x_k$ such that for no $y_k$ we have $\phi_i(z_i) \leq \phi_k(y_k) \leq \phi_k(x_k)$. Similarly, for $\phi_j(z_j)$ we get the following chain:
\begin{equation*}
  \phi_j(z_j) \geq \phi_s(w_s) \geq \phi_t(w_t) \geq \ldots \geq \phi_m(w_m).
\end{equation*}
If now $\phi_m(w_m) > \phi_m(x_m)$, we get 
\begin{equation*}
  \phi_j(z_j) \geq \phi_s(w_s) \geq \phi_t(w_t) \geq \ldots \geq \phi_m(w_m) > \phi_m(x_m) \geq \phi_o(x_o) \geq \ldots \geq \phi_l(x_l) \geq \phi_k(x_k)
  \geq \phi_i(z_i),
\end{equation*}
and hence at $q := z_jw_sw_t \ldots w_m x_o\ldots x_l x_k z_i z_{-jst \ldots mo \ldots lki}$ we have
\begin{equation*}
j \R^q s \R^q t \R^q \ldots \R^q m \S^q o \R^q \ldots \R^q l \R^q \R^q k \R^q i,
\end{equation*}
where every variable interacts with a subsequent one. Hence, as shown previously, we must
have $j \S^q i$, hence $j \S^z i$, which is a contradiction.

%%% Local Variables:
%%% mode: latex
%%% TeX-master: "nvar_simple"
%%% End:

\bibliographystyle{abbrvnat}
\bibliography{cite_lib}

\end{document}